\DeclareFixedFont{\ttb}{T1}{txtt}{bx}{n}{12} 
\DeclareFixedFont{\ttm}{T1}{txtt}{m}{n}{12}  
\definecolor{deepblue}{rgb}{0,0,0.5}
\definecolor{deepred}{rgb}{0.6,0,0}
\definecolor{deepgreen}{rgb}{0,0.5,0}
\definecolor{shadecolor}{rgb}{0.95,0.95,0.95} 
\ttfamily\color{deepblue},
\ttfamily\color{deepred},   
\ttfamily\color{deepgreen},
\newcommand{\changeurlcolor}[1]{\hypersetup{urlcolor=#1}} 
\theoremstyle{plain}
\newtheorem{Theorem}{Theorem}[section]
\newtheorem{Algorithm}[Theorem]{Algorithm}
\newtheorem{Notation}[Theorem]{Notation}
\newtheorem{Definition}[Theorem]{Definition}
\newtheorem{Corollary}[Theorem]{Corollary}
\theoremstyle{remark}
\newtheorem{Remark}[Theorem]{Remark}
\numberwithin{equation}{section}
\newcommand{\ind}[1]{\mathbbm{1}_{ \left(#1\right) }}
\numberwithin{equation}{section}
\renewcommand{\rho}{\varrho}
\DeclareMathOperator*{\argmin}{argmin}
\newcommand{\EW}[1]{\mathbb{E} \left[ #1 \right]}
\newcommand{\Var}[1]{\textrm{\normalfont {Var}} \left( #1 \right)}
\begin{document}
\title[Calibration of LSV models with generative adversarial
networks]{A Generative Adversarial Network Approach to Calibration of
Local Stochastic Volatility Models}

\begin{abstract}
  We propose a fully data-driven approach to calibrate local
  stochastic volatility (LSV) models, circumventing in particular the
  ad hoc interpolation of the volatility surface. To achieve this, we~parametrize the leverage function by a family of feed-forward neural
  networks and learn their parameters directly from the available
  market option prices. This should be seen in the context of neural SDEs 
  and (causal) generative adversarial networks: we \emph{generate} volatility surfaces by specific neural SDEs, whose quality is assessed by quantifying, possibly in an \emph{adversarial} manner, distances to market prices. The minimization
  of the calibration functional relies strongly on a variance
  reduction technique based on hedging and deep hedging, which is
  interesting in its own right: it allows the calculation of model prices
  and model implied volatilities in an accurate way using only small sets of
  sample paths. For numerical illustration we implement a SABR-type 
  LSV model and conduct a thorough statistical performance analysis on
  many samples of implied volatility smiles, showing the accuracy and
  stability of the method.
\end{abstract}

\subjclass[2010]{91G60, 93E35}
\keywords{LSV calibration, neural SDEs, generative
    adversarial networks, deep hedging, variance reduction, stochastic optimization\vspace{2mm}}

\thanks{Christa Cuchiero gratefully
  acknowledges financial support by the Vienna Science and Technology
  Fund (WWTF) under grant MA16-021.\vspace{2mm}\\
  Wahid Khosrawi and Josef Teichmann gratefully
  acknowledge support by the ETH Foundation and the SNF
  Project 179114.\vspace{2mm}\\
Christa Cuchiero\\
  University of Vienna, Department of Statistics and Operations Research, Data Science @ Uni Vienna, Oskar-Morgenstern-Platz 1, 1090 Wien, Austria\\
  E-mail: \texttt{christa.cuchiero@univie.ac.at} \vspace{2mm}\\
  Wahid Khosrawi\\
  ETH Z\"urich, D-MATH, R\"amistrasse 101, CH-8092 Z\"urich, Switzerland\\
  E-mail: \texttt{wahid.khosrawi@math.ethz.ch}\vspace{2mm}\\
  Josef Teichmann\\
  ETH Z\"urich, D-MATH, R\"amistrasse 101, CH-8092 Z\"urich, Switzerland\\
  E-mail: \texttt{josef.teichmann@math.ethz.ch}\vspace{2mm}\\
  Published version: \url{https://www.mdpi.com/2227-9091/8/4/101}\\
  github repository: \url{https://github.com/wahido/neural_locVol}
}

\author{Christa Cuchiero, Wahid Khosrawi and Josef Teichmann}
\vspace*{-0.5cm}
\maketitle

\vspace*{-0.5cm}

\section{Introduction}

Each day a crucial task is performed in financial institutions all
over the world: the calibration of stochastic models to current market
or historical data. So far the model choice was not only driven by the
capacity of capturing empirically observed market features well, but
also by the computational tractability of the calibration process.
This is now undergoing a big change since machine-learning
technologies offer new perspectives on model calibration.

Calibration is the choice of \emph{one} model from a \emph{pool of
  models}, given current market and historical data. Depending on the nature of data this 
is considered to be an inverse problem or a problem of statistical
inference. We consider here current market data, in particular volatility
surfaces, therefore we rather emphasize the inverse problem point of
view. We however stress that it is the ultimate goal of calibration to
include both data sources simultaneously.
In this respect machine learning might help considerably.

We can distinguish three kinds of machine learning-inspired approaches
for calibration to current market prices: First, having solved the
inverse problem already several times, one can learn from this experience (i.e.,~training data) the calibration map from market data
to model parameters directly. Let us here mention one of the pioneering papers
by \cite{H:16} that applied neural networks to learn this
calibration map in the context of interest rate models. This
was taken up in \mbox{\cite{CMMMST:18}} for calibrating more complex mixture models. Second, one can learn the map from model
parameters to model prices (compare e.g. \cite{liu2019neural,liu2019pricing})
and then invert this map possibly with machine
learning technology. In the context of rough volatility modeling,
see~\mbox{\cite{GJR:18}}, such approaches turned out to be very successful:
we refer here to \mbox{\cite{BHMST:19}} and the references therein.
Third, the calibration problem is
considered to be the search for a model which \emph{generates} given market
prices and where additionally technology from generative adversarial networks, first~introduced by \cite{GPMXWOCB:14},  can be
used. This means parameterizing the model pool in a way which is accessible for machine
learning techniques and interpreting the inverse problem as a training
task of a generative network, whose quality is assessed by an \emph{adversary.}
We pursue this approach in the present article and use as generative models so-called neural stochastic differential equations (SDE), which  just means to parameterize the  drift and volatility of an It\^o-SDE by neural networks.

\subsection{Local Stochastic Volatility Models as Neural SDEs}
\label{sec:LSVintro}

We focus here on calibration of \emph{local
  stochastic volatility (LSV) models}, which are in view of existence and uniqueness still an intricate model class.
LSV models, going back to \cite{Jex:99, L:02, RMQ:07}, combine classical
stochastic volatility with local volatility to achieve both a good
fit to time series data and in principle a perfect calibration to the
implied volatility smiles and skews. In~these models, the discounted
price process $(S_t)_{t \geq 0}$ of an asset satisfies
\begin{align}\label{eq:LSV}
  dS_t = S_t L(t,S_t) \alpha_t dW_t,
\end{align}
where $(\alpha_t)_{t \geq 0}$ is some stochastic process taking values
in $\mathbb{R}$, and (a sufficiently regular function) $L(t,s)$ the
so-called \emph{leverage function} depending on time and the current 
value of the asset and $W$ a one-dimensional Brownian motion. Note
that the stochastic volatility process $\alpha$ can be very general
and could for instance  be chosen as rough volatility model. By slight
abuse of terminology we call $\alpha$ stochastic volatility even
though it is strictly speaking not the volatility of the log price of
$S$.

For notational simplicity we consider here the one-dimensional case,
but the setup easily translates to a multivariate situation with
several assets and a matrix valued analog of $\alpha$ as well as a
matrix valued leverage function.

The leverage function $L$ is the crucial part in this model. It allows
in principle to perfectly calibrate the implied volatility surface
seen on the market.  To achieve this goal $L$ must satisfy
\begin{equation}\label{eq: impl dup link}
  L^2(t,s)=\frac{\sigma^2_{\text{Dup}}(t,s)}{\mathbb{E}[ \alpha_t^2 | S_t=s]},
\end{equation}
where $\sigma_{\text{Dup}}$ denotes Dupire's local volatility function
(see \citeauthor{D:94} (\citeyear{D:94}, \citeyear{D:96})). For the derivation of~\eqref{eq: impl dup link}, we
refer to \cite{GH:13}. Please note that \eqref{eq: impl dup link} is an
implicit equation for $L$ as it is needed for the computation of
$\mathbb{E}[ \alpha_t^2 | S_t=s]$.  This in turn means that the SDE for the price process
$(S_t)_{t \geq 0}$ is actually a McKean–Vlasov SDE, since the law of $(S_t, \alpha_t)$ enters in the characteristics of the equation.  Existence and
uniqueness results for this equation are not at all obvious, since the
coefficients do not satisfy any kind of standard conditions like for
instance Lipschitz continuity in the Wasserstein space. Existence of a
short-time solution of the associated nonlinear Fokker-Planck
equation for the density of $(S_t)_{t \geq 0}$ was shown in
\cite{AT:10} under certain regularity assumptions on the initial
distribution. As stated in \cite{GH:11}  a very
challenging and still open problem is to derive the set of stochastic
volatility parameters for which LSV models exist uniquely for a given
market implied volatility surface. We refer to \cite{JZ:16} and  \mbox{\cite{Lacker:19}}, where recent progress in solving this problem has been made.

Despite these intriguing existence issues, LSV models have attracted---due to their appealing feature of a potentially perfect smile
calibration and their econometric properties---a lot of attention from the
calibration and implementation point of view. We refer to \cite{GH:11,
  GH:13, CMR:17} for Monte Carlo (MC) methods (see also \citeauthor{G:14} (\citeyear{G:14}, \citeyear{G:16}) for the multivariate case), to \cite{RMQ:07, TZLKH:15}
for PDE methods based on nonlinear Fokker-Planck equations and to
\cite{SYZ:17} for inverse problem techniques. Within these approaches
the particle approximation method for the McKean–Vlasov SDE proposed
in \cite{GH:11, GH:13} works impressively well, as~very few
paths must be used to achieve very accurate calibration
results.

In the current paper we propose an alternative, fully data-driven
approach circumventing in particular the interpolation of the
volatility surface, being necessary in several other approaches in
order to compute Dupire's local volatility.
This means that we only take the available discrete data into account and do not generate a continuous surface interpolating between the given market option prices. Indeed,  we just
\emph{learn} or \emph{train} the leverage function $L$ to
generate the available market option prices accurately. Although in principle the method allows for calibration to any traded options, we work here with vanilla derivatives.

Setting $T_0=0$ and denoting by
$ T_1 <T_2\cdots < T_n$ the maturities of the available options, we~parametrize the leverage function $L(t,s)$ via a family of
neural networks $F^i: \mathbb{R} \to \mathbb{R}$ with weights
$\theta_i \in \Theta_i$, i.e.
\[
  L(t,s, \theta) = 1+ F^i(s, \theta_i), \quad t\in [T_{i-1}, T_{i}), \quad i \in\{1,
  \ldots, n\}. 
\]

We here consider for simplicity only the univariate case. The
multivariate situation just means that $1$ is replaced by the identity
matrix and the neural networks $F^i(\cdot, \theta_i)$ are maps from
$\mathbb{R}^d \to \mathbb{R}^{d \times d}$.

This then leads to the generative model class of neural SDEs (see \cite{GSSS:20} for related work), which in the case of time-inhomogeneous It\^o-SDEs, just means to parametrize the drift $\mu(\cdot ,\cdot, \theta)$ and volatility $\sigma(\cdot , \cdot ,\theta)$ by neural networks with parameters $\theta$, i.e.,
\begin{align}\label{eq:NNSDE}
  dX_t(\theta)=\mu(X_t(\theta), t ,\theta)dt+\sigma(X_t(\theta),t , \theta)d W_t, \quad X_0(\theta)=x.
\end{align}

In our case, there is no drift and the volatility  (for the price) reads as
\begin{align}\label{eq:NNvol}
  \sigma(S_t(\theta), t, \theta)=S_t(\theta)\left( 1 + \sum_{i=1}^n F^i(S_t(\theta),  \theta_i) 1_{[T_{i-1}, T_{i})}(t) \right)\alpha_t .
\end{align}

Progressively for each maturity, the parameters of the neural networks
are learned by optimizing the following calibration criterion
\begin{align}\label{eq:calicrit}
  \inf_{\theta}\sup_\gamma \sum_{j=1}^{J} w^\gamma_{j} \ell^\gamma (\pi_{j}^{\text{mod}}(\theta) -\pi_{j}^{\text{mkt}}) \, ,
\end{align}
where $J$ is the number of considered options and where $\pi^{\text{mod}}_{j}(\theta)$ and $\pi^{\text{mkt}}_{j}$ stand for
the respective model and market  prices.

Moreover, for every fixed $ \gamma$, $\ell^\gamma$ is a
nonlinear non-negative convex function with $\ell^{\gamma}(0)=0$ and
$\ell^{\gamma}(x) >0 $ for $x \neq 0$, measuring the distance between model and
market prices. The terms
$w^\gamma_j$, for fixed $\gamma$, denote some weights,
in our case of vega type (compare \cite{cont2004recovering}). Using such vega type weights  allows
to match implied volatility data, our actual goal, very well. The parameters $\gamma$ take here the role of the adversarial part. Indeed, by considering a family of weights and loss functions parameterized by $\gamma$  enables us to take into account the uncertainty of the loss function. In this sense this constitutes a discriminative model as it modifies the distribution of the target, i.e.,~the different market prices, given $\theta$ and thus the model prices. This can for instance mean that the adversary chooses the weights $w^{\gamma}$ in such a way to put most mass on those options where the fit is worst or that it modifies $\ell^{\gamma}$ by choosing $\pi^{\text{mkt}}$ within the bid-ask spread with the largest possible distance to the current model prices.
In a concrete implementation in Section 4.3, we build a family of loss functions like this, using different market implied volatilities lying in the bid-ask spread.  
We can therefore  also  solve such kind of robust calibration problems.

The precise algorithms are outlined in Section \ref{sec:LSVcali} and
Section \ref{sec:numerics}, where we also conduct a thorough
statistical performance analysis.
Notice that as is somehow quite typical for financial applications, we need
to guarantee a very high accuracy, whence a variance
reduction technique to compute the model prices via Monte Carlo is crucial for this learning task. 
This relies on hedging and deep hedging, which
allows the computation of accurate model prices $\pi^{\text{mod}}(\theta)$ for
training purposes with only up to $5\times 10^4$ trajectories. Let us  
remark that we do not aim to compete with existing algorithms, as
e.g.~the particle method by \cite{GH:11, GH:13}, in terms of speed but
rather provide a generic data-driven algorithm that is universally
applicable for all kind of options, also in multivariate situations,
without resorting to Dupire type volatilities. This general applicability comes at the expense of a higher computation time compared to \cite{GH:11, GH:13}. In terms of accuracy, we achieve an average calibration error of about 5 to 10 basis points, whence
our method is comparable or in some situations even better than \cite{GH:11} (compare Section 6 and the results in \cite{GH:11}). Moreover, we also observe good  extrapolation and generalization properties of the calibrated leverage~function.

\subsection{Generative Adversarial Approaches in Finance}\label{sec:gen}

The above introduced generative adversarial approach
might seem at first sight unexpected as generative adversarial models or networks are rather applied in areas such as photorealistic image generation. From an abstract point of view, however,
a \emph{generative network} is nothing else than a neural network
(mostly of recurrent type) $ G^\theta $ depending on parameters $ \theta $, which transports a
standard input law $\mathbb{P}_I$ to a target output law $ \mathbb{P}_O$. In our case, $\mathbb{P}_I$ corresponds to the law of the Brownian motion $ W $ and the stochastic
volatility process $\alpha$. The target law $\mathbb{P}_O$ is given in terms of
certain functionals,  namely the set of market prices, and is thus not fully specified.

Denoting the push-forward of $\mathbb{P}_I$ under the transport map $G^{\theta}$ by $ G^\theta_* \mathbb{P}_I $, the goal is to find parameters $\theta$ such that $G^\theta_* \mathbb{P}_I \approx \mathbb{P}_O$. For this purpose, appropriate distance functions must be used. Standard examples include entropies, integral distances,
Wasserstein or Radon distances, etc. The \emph{adversarial character} appears when the chosen distance is represented as
supremum over certain classes of functions, which can themselves be
parameterized via neural networks of a certain type. This leads to a
game, often of zero-sum type, between the generator and the adversary. As mentioned above, one example, well known from industry, is calibrating a model to generate prices within a bid-ask spread: in this case there is more than one loss function, each of them representing a distance to a possible price structure, and the supremum over these loss functions is the actual distance between generated price structure and the target. In other words: the distance to the worst possible price structure within the bid-ask spread should be as small as possible (see Section \ref{sec:robust}).

In our case the solution measure of the neural SDE as specified in \eqref{eq:NNSDE} and \eqref{eq:NNvol} corresponds to the transport $G^{\theta}_{*} \mathbb{P}_I$
and we measure the distance by \eqref{eq:calicrit}, which can be rewritten as
\begin{align}\label{eq:calgen}
   & \inf_{\theta } \sup_{\gamma}
  \sum_{j=1}^J w^{\gamma}_{j} \ell^{\gamma}(\underbrace{\mathbb{E}_{G^{\theta}_{*}\mathbb{P}_I}[C_j]}_{\text{model price}}-\underbrace{\mathbb{E}_{\mathbb{P}_O}[C_j]}_{\text{market price}}),
\end{align}
where $C_j$ are the corresponding option payoffs.

In general, one could consider distance functions $d^{\gamma}$ such that
the game between generator and adversary appears as
\begin{align*}
  \inf_\theta \sup_\gamma d^\gamma(G^\theta_* \mathbb{P}_I,\mathbb{P}_O) \, .
\end{align*}

The advantage of this point of view is two-fold:
\begin{enumerate}
  \item we have access to the unreasonable effectiveness of modeling by
        neural networks,
        due to their good generalization
        and regularization properties;
  \item the game theoretic view disentangles realistic price generation from discriminating with different loss functions, parameterized by $\gamma$. This reflects the fact that it is not necessarily clear which loss function one should use. 
        Notice
        that \eqref{eq:calgen}  is not the usual form of generative adversarial network (GAN) problems, since the adversary 
        distance $ \ell^\gamma $ is nonlinear in $ \mathbb{P}_I $ and
        $ \mathbb{P}_O $, but we believe that it is worth taking this
        abstract point of view.
\end{enumerate}

There is no reason these generative models, if sufficient
computing power is available, should~not take market price data as
inputs, too. This would correspond, from the point of view of
generative adversarial networks, to actually learn a map
$ G^{\theta,\text{market prices}}$, such that for any price
configuration of market prices one has instantaneously a generative
model given, which produces those prices. This~requires just a rich
data source of typical market prices (and computing power!).

Even though it is usually not considered like that, one can also view
the generative model as an engine producing a likelihood function on probability measures on path
space: given historic data, $\mathbb{P}_O$ is  then just the empirical measure of the one observed
trajectory that is inserted in the likelihood function.
This would allow, with precisely the same
technology, a maximum likelihood approach, where one searches for those parameters of the generative network that maximize the likelihood of the historic trajectory.
This then falls in the realm of generative approaches that appear in the literature under the name ``market generators''. Here the goal is to precisely mimic the behavior and features of historical market trajectories. This line of research has been recently pursued in e.g. {\cite{KS:19}}; {\cite{WBWB:19}} \cite{H:19, BHPLW:20, AX:20}.

From a bird's eye perspective this machine-learning approach to calibration might just
look like a standard inverse problem  with another
parameterized family of functions. We, however, insist on one important
difference, namely implicit regularizations (see e.g. \cite{HTW:19}), which always appear
in machine-learning applications and which are cumbersome to mimic  in
classical inverse~problems.

Finally, let us comment more generally on machine-learning approaches in  mathematical finance, which become more and more prolific. Concrete applications include
hedging {\cite{BGTW:19}}, portfolio~selection \cite{GTX:19}, stochastic
portfolio theory {{\cite{SV:16}}; {\cite{CST:20}}}, optimal stopping {\cite{BCJ:19}}, optimal
transport and robust finance {\cite{EK:19}}, stochastic games and
control problems {\cite{HPBL:18}} as well as high-dimensional nonlinear
partial differential equations (PDEs) {\cite{HJE:17}}; {\cite{HPW:19}}.  Machine learning also allows for new insights into structural properties of financial markets as investigated in {\cite{SC:19}}. For an
exhaustive overview of machine-learning applications in mathematical
finance, in particular for option pricing and hedging we refer to
{\cite{RW:19}}.

The remainder of the article is organized as follows. Section
\ref{sec:VR} introduces the variance reduction technique based on
hedge control variates, which is crucial in our optimization tasks. In
Section \ref{sec:LSVcali} we explain our calibration method, in
particular how to optimize \eqref{eq:calicrit}. The details of the
numerical implementation and the results of the statistical
performance analysis are then given in Section \ref{sec:numerics} as
well as Section~\ref{chap: plots and histograms}. In Appendix \ref{sec:SDE}
we state stability theorems for stochastic differential equations depending on parameters. This is applied to neural SDEs when calculating derivatives with respect to the parameters of the neural
networks. In Appendix \ref{sec:NN} we recall preliminaries on deep learning by giving a
brief overview of universal approximation properties of artificial
neural networks and briefly explaining stochastic gradient
descent. Finally, Appendix \ref{alt} contains alternative optimization
approaches to \eqref{eq:calicrit}.

\section{Variance Reduction for Pricing and Calibration Via Hedging and Deep Hedging}\label{sec:VR}

This section is dedicated to introducing a generic variance reduction
technique for Monte Carlo pricing and calibration by using hedging
portfolios as control variates. This method will be crucial in our LSV
calibration presented in Section \ref{sec:LSVcali}. For similar considerations we refer to \cite{VSSS:18, potters:2001}.

Consider on a finite time horizon $T >0$, a financial market in
discounted terms with $r$ traded instruments $(Z_t)_{t \in [0,T]}$
following an $\mathbb{R}^r $-valued stochastic process on some
filtered probability space
$(\Omega, (\mathcal{F}_t)_{t \in [0,T]} , \mathcal{F},
  \mathbb{Q})$. Here, $ \mathbb{Q}$ is a risk neutral measure and
$(\mathcal{F}_t)_{t \in [0,T]}$ is supposed to be right continuous. In
particular, we suppose that $(Z_t)_{t \in [0,T]}$ is an
$r$-dimensional square integrable martingale with c\`adl\`ag
paths.

Let $C$ be an $\mathcal{F}_T$-measurable random variable describing
the payoff of some European option at maturity $T >0$.  Then the usual
Monte Carlo estimator for the price of this option is given by
\begin{equation}\label{eq: MC estimator}
  \pi= \frac{1}{N} \sum_{n=1}^N C_n,
\end{equation}
where $(C_1,\ldots,C_N)$ are i.i.d with the same distribution as $C$ 
and $N \in\mathbb{N}$.  This estimator can easily be modified by
adding a stochastic integral with respect to $Z$. Indeed, consider a
strategy $(h_t)_{t \in [0,T]} \in L^2(Z)$ and some constant
$c$. Denote the stochastic integral with respect to $Z$ by
$I=(h \bullet Z)_T$ and consider the following estimator   
\begin{align}\label{eq:est_red}
  \widehat{\pi}= \frac{1}{N}\sum_{n=1}^N (C_n - c I_n),
\end{align}
where $(I_1,\ldots,I_N)$ are i.i.d with the same distribution as $I$.
Then, for any $(h_t)_{t \in [0,T]} \in L^2(Z)$ and $c$, this estimator
is still an unbiased estimator for the price of the option with payoff
$C$ since the expected value of the stochastic integral vanishes.  If
we denote by
\[
  H=\frac{1}{N} \sum_{n=1}^N I_n,
\]
then the variance of $\widehat{\pi}$ is given by
\[
  \text{Var}(\widehat{\pi})=\text{Var}(\pi)+ c^2 \text{Var}(H) - 2 c
  \text{Cov}(\pi, H).
\]

This becomes minimal by choosing
\[
  c=\frac{\text{Cov}(\pi, H)}{\Var{H}}.
\]

With this choice, we have
\[
  \text{Var}(\widehat{\pi})=(1-\text{Corr}^2(\pi,H))\text{Var}(\pi).
\]

In particular, in the case of a perfect pathwise hedge, where $\pi=H$ a.s., we have 
$\text{Corr}(\pi, H)=1$ and $\text{Var}(\widehat{\pi})=0$,
since in this case
\[
  \text{Var}(\pi)=\text{Var}(H)=\text{Cov}(\pi, H).
\]

Therefore, it is crucial to find a good approximate hedging portfolio
such that $\text{Corr}^2(\pi,H)$ becomes large. This is subject of
Sections \ref{sec:hedgeBS} and \ref{sec:NNH} below.

\subsection{Black–Scholes Delta Hedge}\label{sec:hedgeBS}
In many cases, of local stochastic volatility models as of form
\eqref{eq:LSV} and options depending only on the terminal value of the
price process, a Delta hedge of the Black–Scholes model works
well. Indeed, let $C=g(S_T)$ and let
$\pi^{g}_{\text{BS}}(t,T, s, \sigma)$ be the price at time $t$ of this
claim in the Black–Scholes model. Here, $s$ stands for the price
variable and $\sigma$ for the volatility parameter in the Black– 
Scholes model. Moreover, we indicate the dependency on the maturity
$T$ as well. Then choosing as hedging instrument only the price $S$
itself and as approximate hedging strategy
\begin{equation}\label{eq: BS hedge formula}
  h_t = \partial_s \pi^g_{\text{BS}}(t,T,S_t,L(t,S_t)\alpha_t)
\end{equation}
usually already yields a considerable variance reduction. In fact, it
is even sufficient to consider $\alpha_t$ alone to achieve satisfying
results, i.e., one has
\begin{equation}\label{eq: BS hedge formula1}
  h_t = \partial_s \pi^g_{\text{BS}}(t,T, S_t,\alpha_t),
\end{equation}

This reduces the computational costs for the evaluation of the hedging
strategies even further.

\subsection{Hedging Strategies as Neural Networks---Deep Hedging}\label{sec:NNH}
Alternatively, in particular when the number of hedging instruments
becomes higher, one can learn the hedging strategy by parameterizing it
via neural networks. For a brief overview of neural networks and
relevant notation used below, we refer to Appendix~\ref{sec:NN}.

Let the payoff be again a function of the terminal values of the
hedging instruments, \mbox{i.e.,~$C=g(Z_T)$}. Then in Markov models it
makes sense to specify the hedging strategy via a function
\[
  h: \mathbb{R}_+ \times \mathbb{R}^r \to \mathbb{R}^r, \; h_t= h(t,z),
\]
which in turn will correspond to an artificial neural network
$(t,z) \mapsto h(t,z,\delta) \in \mathcal{NN}_{r+1, r}$ with weights
denoted by $\delta$ in some parameter space $\Delta$ (see
Notation\footnote{We here use $\delta$ to denote the parameters of the
  hedging neural networks, as $\theta$ shall be used for the networks
  of the leverage~function.} \mbox{\ref{not}}).  Following the approach in
\mbox{\cite[Remark 3]{BGTW:19}}, an optimal hedge for the claim $C$ with
given market price $\pi^{\text{mkt}}$ can be computed via
\[
  \inf_{\delta \in \Delta} \mathbb{E}\left[u\left(-C+
    \pi^{\text{mkt}}+(h(\cdot, Z_{\cdot-}, \delta) \bullet
    Z_{\cdot})_T\right)\right]
\]
for some convex loss function $u:\mathbb{R} \to \mathbb{R}_+ $. Recall that $(h \bullet Z)_T$ denotes the stochastic integral with respect to $Z$. If
$u(x)=x^2$, which is often used in practice, this then corresponds to
a quadratic hedging criterion.

To tackle this optimization problem, we can apply stochastic
gradient descent, because we fall in the realm of problem
\eqref{eq:optimization}. Indeed, the stochastic objective function
$Q(\delta)(\omega)$ is given by
\[
  Q(\delta)(\omega) =u(-C(\omega)+\pi^{\text{mkt}}+ (h(\cdot,
  Z_{\cdot-}, \delta)(\omega) \bullet Z_{\cdot}(\omega))_T ) .
\]
The optimal hedging strategy $h(\cdot,\cdot , \delta^*)$ for an optimizer
$\delta^*$ can then be used to define
\[
  (h(\cdot, Z_{\cdot-},\delta^*)\bullet Z_{\cdot})_T
\]
which is in turn
used in \eqref{eq:est_red}.

As always in this article we shall assume that activation functions
$ \phi $ of the neural network as well as the convex loss function $ u $ are smooth, hence
we can calculate derivatives with respect to $ \delta $ in a straight
forward way. This is important to apply stochastic gradient descent, see
Appendix \ref{sec:SDG}.
We shall show that the gradient of
$ Q(\delta)$ is given by
\[
  \nabla_{\delta} Q(\delta) (\omega) =u'(-C(\omega)+\pi^{\text{mkt}}+ (h(\cdot,
  Z_{\cdot-}, \delta)(\omega) \bullet Z_{\cdot}(\omega))_T ) (\nabla_{\delta} h(\cdot,
  Z_{\cdot-},\delta)(\omega) \bullet Z_{\cdot}(\omega))_T ,
\]
i.e., we are allowed to move the gradient inside the stochastic integral, and that approximations with simple processes, as we shall do in practice, converge to the correct quantities.  To ensure this
property, we shall apply the following theorem, which
follows from results in Section \ref{sec:SDE}.

\begin{Theorem}\label{thm:variation_hedges}
  For $\varepsilon \geq 0$,  let $ Z^\varepsilon $ be a solution of a stochastic differential equation
  as described in Theorem \ref{sta:thm} with drivers $Y=(Y^1,\ldots, Y^d)$,
  functionally
  Lipschitz operators $ F^{\varepsilon,i}_j $, $i=1, \ldots, r$, $j=1, \ldots,d$ and
  a process $(J^{\varepsilon,1}, \ldots J^{\varepsilon,r})$, which is here for all $\varepsilon \geq 0 $ simply $J1_{\{t=0\}}(t)$ for some constant vector $J \in \mathbb{R}^r=J$, i.e.
  \[
    Z^{\varepsilon,i}_t = J^{i} + \sum_{j=1}^d \int_0^t
    F^{\varepsilon,i}_j(Z^{\varepsilon})_{s-} dY^j_s, \quad t \geq 0.
  \]
  
  Let
  $ (\varepsilon,t, z) \mapsto f^\varepsilon(t,z) $ be a map, such that the
  bounded  c\`agl\`ad process $ f^\varepsilon := f^\varepsilon(.-,Z^0_{.-}) $
  converges ucp to $f^0:=f^0(.-,Z^0_{.-})$, then
  \[
    \lim_{\varepsilon \to 0} (f^\varepsilon \bullet Z^\varepsilon) = \big (f^0
    \bullet Z^0 \big)
  \]
  holds true.
\end{Theorem}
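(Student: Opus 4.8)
The plan is to eliminate the $\varepsilon$-dependence of the \emph{integrator} by rewriting both stochastic integrals as integrals against the \emph{fixed} drivers $Y^1,\ldots,Y^d$, and then to invoke the continuity of stochastic integration with respect to a fixed semimartingale. Since each component solves $Z^{\varepsilon,i}_t = J^i + \sum_{j=1}^d \int_0^t F^{\varepsilon,i}_j(Z^\varepsilon)_{s-}\,dY^j_s$, the constant $J^i$ contributes nothing to the differential and associativity of the stochastic integral gives
\[
  f^\varepsilon \bullet Z^\varepsilon \;=\; \sum_{i=1}^r f^{\varepsilon,i} \bullet Z^{\varepsilon,i} \;=\; \sum_{j=1}^d g^{\varepsilon}_{j} \bullet Y^j, \qquad g^{\varepsilon}_j := \sum_{i=1}^r f^{\varepsilon,i}_{\cdot-}\, F^{\varepsilon,i}_j(Z^\varepsilon)_{\cdot-},
\]
and likewise $f^0 \bullet Z^0 = \sum_{j=1}^d g^0_j \bullet Y^j$ with $g^0_j := \sum_{i=1}^r f^{0,i}_{\cdot-}\, F^{0,i}_j(Z^0)_{\cdot-}$. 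It therefore suffices to establish $g^\varepsilon_j \to g^0_j$ ucp for each $j$ and to push this convergence through the fixed integrator $Y^j$.

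First I would record that Theorem \ref{sta:thm} yields $Z^\varepsilon \to Z^0$ ucp. From this I would deduce $F^{\varepsilon,i}_j(Z^\varepsilon)_{\cdot-} \to F^{0,i}_j(Z^0)_{\cdot-}$ ucp by splitting
\[
  F^{\varepsilon,i}_j(Z^\varepsilon) - F^{0,i}_j(Z^0) = \big( F^{\varepsilon,i}_j(Z^\varepsilon) - F^{\varepsilon,i}_j(Z^0)\big) + \big( F^{\varepsilon,i}_j(Z^0) - F^{0,i}_j(Z^0)\big):
\]
the first bracket tends to $0$ ucp because the $F^{\varepsilon,i}_j$ are functionally Lipschitz with a Lipschitz process that may be taken uniform in $\varepsilon$ and $Z^\varepsilon \to Z^0$ ucp, while the second tends to $0$ ucp by the convergence $F^{\varepsilon,i}_j \to F^{0,i}_j$ of the operators (evaluated along the fixed path $Z^0$) which is built into the hypotheses of Theorem \ref{sta:thm}. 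Combining this with the assumed ucp convergence $f^\varepsilon \to f^0$, and using that a product of ucp-convergent families whose limits are càglàd (hence locally bounded) is again ucp-convergent, I obtain $g^\varepsilon_j \to g^0_j$ ucp for every $j$.

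It then remains to apply the continuity of the map $H \mapsto H \bullet Y^j$ on the space of adapted càglàd integrands in the ucp topology, for the fixed semimartingale $Y^j$; this is the classical continuity underlying the construction of the stochastic integral (for the local martingale part it follows, after a routine localization, from Lenglart's domination inequality, and for the finite-variation part it is elementary), and it gives $g^\varepsilon_j \bullet Y^j \to g^0_j \bullet Y^j$ ucp, whence $f^\varepsilon \bullet Z^\varepsilon \to f^0 \bullet Z^0$ ucp upon summing over $j$. I expect the only genuinely substantive step — beyond the bookkeeping of associativity and products — to be the ucp convergence $F^{\varepsilon,i}_j(Z^\varepsilon) \to F^{0,i}_j(Z^0)$: this is precisely where the stability statement $Z^\varepsilon \to Z^0$ of Theorem \ref{sta:thm} must be married to the uniform functional-Lipschitz control, and it is the one place where the structure of the SDE, rather than abstract integral theory, is really used.
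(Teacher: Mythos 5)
Your argument is correct and follows essentially the same route as the paper: rewrite $f^\varepsilon\bullet Z^\varepsilon$ via associativity as an integral against the fixed drivers $Y^j$, obtain ucp convergence of the resulting integrands from the stability of Theorem \ref{sta:thm}, and conclude by continuity of stochastic integration with respect to a fixed semimartingale in the ucp topology. The only difference is one of detail: the paper phrases the reduction as an ``extended system'' and leaves the convergence $F^{\varepsilon,i}_j(Z^\varepsilon)\to F^{0,i}_j(Z^0)$ implicit, whereas you spell out the splitting into a uniform functional-Lipschitz term plus an operator-convergence term, which is exactly the content the paper is invoking.
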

\begin{proof}
  Consider the extended system
  \[
    d (f^\varepsilon \bullet Z^\varepsilon) = \sum_{j=1}^d
    f^\varepsilon(t-,Z_{t-}^\varepsilon) \,
    F^{\varepsilon,i}_j(Z^\varepsilon)_{t-} dY^j_t
  \]
  and
  \[
    d Z^{\varepsilon,i}_t = \sum_{j=1}^d
    F^{\varepsilon,i}_j(Z^\varepsilon)_{t-} dY^j_t, \,
  \]
  where we obtain existence, uniqueness and stability for the second
  equation by Theorem \ref{sta:thm}, and~from where we obtain ucp
  convergence of the integrand of the first equation: since stochastic
  integration is continuous with respect to the ucp topology we obtain
  the result.
\end{proof}

The following corollary implies the announced properties, namely that we can move the gradient inside the stochastic integral and that the derivatives of a
discretized integral with a discretized version of $Z$ and approximations of the hedging strategies are actually close to the derivatives of the
limit~object.

\begin{Corollary}
  Let, for $\varepsilon >0$,  $Z^{\varepsilon}$ denote a discretization of the process of hedging instruments $Z\equiv Z^0$ such that the conditions of Theorem \ref{thm:variation_hedges} are satisfied.  Denote, for $\varepsilon \geq 0$, the corresponding hedging strategies by $(t,z,\delta) \mapsto h^{\varepsilon}(t,z ,\delta)$ given by neural networks $\mathcal{NN}_{r+1,r}$, whose activation functions are bounded and $C^1$, with~bounded~derivatives.

  \begin{enumerate}
    \item Then the derivative $\nabla_{\delta} (h (\cdot, Z_{\cdot-},\delta) \bullet Z)$ in direction $\delta$ at $\delta_0$ satisfies
          \[
            \nabla_{\delta} (h(\cdot,Z_{\cdot-},\delta_0)  \bullet Z)= (\nabla_{\delta} h(\cdot,Z_{\cdot-},\delta_0)  \bullet Z).
          \]
    \item If additionally the derivative in direction $\delta$ at $\delta_0$ of  $  \nabla_{\delta} h^{\varepsilon}(\cdot,Z_{\cdot-},\delta_0) $ converges  ucp to $ \nabla_{\delta} h(\cdot,Z_{\cdot-},\delta_0)$ as $\varepsilon \to 0$, then the directional derivative of the discretized integral, i.e.\\ $ \nabla_{\delta}(h^{\varepsilon} (\cdot,Z^\varepsilon_{\cdot-},\delta_0)\bullet Z^{\varepsilon})$ or equivalently $ ( \nabla_{\delta} h^{\varepsilon}(\cdot,Z^\varepsilon_{\cdot-},\delta_0) \bullet Z^{\varepsilon})$,
          converges, as the discretization mesh $\varepsilon \to 0$, to
          \[
            \lim_{\varepsilon \to 0}( \nabla_{\delta} h^{\varepsilon}(\cdot,Z^{\varepsilon}_{\cdot-},\delta_0) \bullet Z^{\varepsilon})= (\nabla_{\delta} h(\cdot,Z_{\cdot-},\delta_0)  \bullet Z).
          \]
  \end{enumerate}
\end{Corollary}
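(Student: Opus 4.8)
The plan is to derive both statements from Theorem \ref{thm:variation_hedges}, which already encodes continuity of the stochastic integral under ucp convergence of the integrand (together with the stability of the drivers). The only genuine work is to verify the required ucp convergences; once these are in hand, the interchange of the derivative (resp. the discretization limit) with the integral is immediate.

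For part (i) I would read the derivative in direction $e$ at $\delta_0$ as the ucp limit of difference quotients. Fixing $\delta_0$ and $e$, I set
\[
f^\eta := \frac{h(\cdot, Z_{\cdot-}, \delta_0 + \eta e) - h(\cdot, Z_{\cdot-}, \delta_0)}{\eta}, \quad f^0 := \nabla_{\delta} h(\cdot, Z_{\cdot-}, \delta_0)\, e.
\]
By linearity of the stochastic integral, the difference quotient of $(h(\cdot, Z_{\cdot-}, \delta) \bullet Z)$ is exactly $(f^\eta \bullet Z)$, so it suffices to show $f^\eta \to f^0$ ucp and then invoke Theorem \ref{thm:variation_hedges} in the degenerate case $Z^\eta \equiv Z$ (no discretization), where it reduces to continuity of the integral. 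Writing $f^\eta = \int_0^1 \nabla_{\delta} h(\cdot, Z_{\cdot-}, \delta_0 + s\eta e)\, e\, ds$ via the fundamental theorem of calculus and using that the activation functions are $C^1$ with bounded derivatives—so that $\nabla_{\delta} h$ is bounded and jointly continuous—I would deduce that $f^\eta \to f^0$ uniformly in $t \in [0,T]$ along the a.s. compact range of the c\`adl\`ag path $s \mapsto Z_s$, which is precisely ucp convergence. The theorem then delivers $(f^\eta \bullet Z) \to (f^0 \bullet Z)$, i.e. the asserted interchange.

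For part (ii) I would apply Theorem \ref{thm:variation_hedges} directly with $f^\varepsilon(t,z) := \nabla_{\delta} h^\varepsilon(t,z,\delta_0)$ and $f^0(t,z) := \nabla_{\delta} h(t,z,\delta_0)$. The additional hypothesis is exactly the ucp convergence of $f^\varepsilon(\cdot, Z_{\cdot-})$ to $f^0(\cdot, Z_{\cdot-})$ demanded by the theorem, while the bounded-$C^1$ assumption on the activations guarantees that the integrands are bounded and c\`agl\`ad. Since the conditions of Theorem \ref{thm:variation_hedges} are assumed to hold for the discretization $Z^\varepsilon$, its conclusion $(f^\varepsilon \bullet Z^\varepsilon) \to (f^0 \bullet Z)$ is exactly the claimed convergence, and the equality of the two displayed forms of the discretized derivative is just part (i) applied to $h^\varepsilon$ and $Z^\varepsilon$.

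The main obstacle—and the only step requiring real care—is the upgrade from evaluating the integrand along $Z^0$ (as in the hypothesis) to evaluating it along the discretized process $Z^\varepsilon$ that actually appears in $\nabla_{\delta} h^\varepsilon(\cdot, Z^\varepsilon_{\cdot-}, \delta_0)$. I would close this gap by combining the stability $Z^\varepsilon \to Z^0$ ucp furnished by Theorem \ref{sta:thm} with the uniform (on compacts) continuity of $\nabla_{\delta} h^\varepsilon$ in the spatial variable, so that $\nabla_{\delta} h^\varepsilon(\cdot, Z^\varepsilon_{\cdot-}, \delta_0)$ and $\nabla_{\delta} h^\varepsilon(\cdot, Z^0_{\cdot-}, \delta_0)$ share the same ucp limit. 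This is exactly the spatial-evaluation step carried out inside the proof of Theorem \ref{thm:variation_hedges}, so no new machinery is needed beyond bookkeeping of the two superimposed limits.
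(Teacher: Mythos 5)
Your proposal is correct and follows essentially the same route as the paper: both parts are reduced to Theorem \ref{thm:variation_hedges}, with the difference quotient of $h$ in $\delta$ serving as the integrand family $f^\varepsilon$ for part (1) and $\nabla_{\delta}h^{\varepsilon}(\cdot,\cdot,\delta_0)$ for part (2), the only cosmetic difference being that you justify the ucp convergence of the difference quotients via the fundamental theorem of calculus while the paper invokes equicontinuity of $\{\nabla_{\delta}h(\cdot,\cdot,\delta_0+\varepsilon\delta)\}$. Your closing remark correctly locates the passage from evaluation along $Z^0$ to evaluation along $Z^{\varepsilon}$ inside the extended-system argument of Theorem \ref{thm:variation_hedges}, which is exactly where the paper handles it.
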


\begin{proof}
  To prove (1), we apply Theorem \ref{thm:variation_hedges} with
  \[
    f^{\varepsilon}(\cdot,Z_{\cdot-}) =\frac{h(\cdot,Z_{\cdot-},\delta_0 +\varepsilon \delta)- h(\cdot,Z_{\cdot-},\delta_0) }{\varepsilon},
  \]
  which converges ucp to $f^0=\nabla_{\delta} h(\cdot,Z_{\cdot-},\delta_0)$.
  Indeed, by the neural network assumptions, we have (with the $\sup$  over some compact set)
  \[
    \lim_{\varepsilon \to 0}\sup_{(t,z)}\left\|\frac{h(t,z,\delta_0 +\varepsilon \delta)- h(t,z,\delta_0) }{\varepsilon} - \nabla_{\delta} h(t,z,\delta_0)\right\|=0,
  \]
  by equicontinuity of
  $\{(t,z) \mapsto \nabla_{\delta} h(t,z ,\delta_0+ \varepsilon \delta) \,|\, \varepsilon \in [0,1] \}$.

  Concerning (2) we apply again Theorem \ref{thm:variation_hedges}, this time with
  \[
    f^{\varepsilon}(\cdot,Z_{\cdot-}) = \nabla_{\delta} h^{\varepsilon}(\cdot,Z_{\cdot-},\delta_0),
  \]
  which converges by assumption ucp to $f^0= \nabla_{\delta} h(\cdot,Z_{\cdot-},\delta_0)$.

\end{proof}

\section{Calibration of LSV Models}\label{sec:LSVcali}

Consider an LSV model as of \eqref{eq:LSV} defined on some filtered
probability space
$(\Omega, (\mathcal{F}_t)_{t \in [0,T]} , \mathcal{F}, \mathbb{Q})$,
where $ \mathbb{Q}$ is a risk neutral measure. We assume the
stochastic process $\alpha$ to be fixed. This can for instance be
achieved by first approximately calibrating the pure stochastic volatility model
with $L \equiv 1$, so to capture only the right order of magnitude of the parameters and then fixing them.

Our main goal is to determine the leverage function $L$ in perfect
accordance with market data. We~here consider only European call
options, but our approach allows in principle to take all kind of
other options into account.

Due to the universal approximation properties outlined in Appendix
\ref{sec:NN} (Theorem \ref{th:universalapprox}) and in spirit of neural SDEs, we choose to
parameterize $L$ via neural networks. More precisely, set $T_0=0$ and
let $0 < T_1 \cdots < T_n=T$ denote the maturities of the available
European call options to which we aim to calibrate the LSV model.  We
then specify the leverage function $L(t,s)$ via a family of neural
networks,~i.e.,
\begin{align}\label{eq:leverageNN}
  L(t,s, \theta)=\left( 1 + \sum_{i=1}^n F^i(s, \theta_i) 1_{[T_{i-1}, T_{i})}(t) \right),
\end{align}
where  $F^i \in \mathcal{N N}_{1,1}$ for $i= 1,\ldots,n$ (see Notation~\ref{not}).
For notational simplicity we shall often omit the dependence
on  $\theta_i \in \Theta_i$. However, when needed we write for instance $S_t(\theta)$, where $\theta$ then stands for all parameters $\theta_i$ used up to time $t$.

For purposes of training, similarly as in Section \ref{sec:NNH}, we shall need
to calculate derivatives of the LSV process with respect to
$ \theta $. The following result can be understood as the chain rule applied to $\nabla_{\theta}S(\theta)$, which we prove here rigorously by applying the results of Appendix \ref{sec:SDE}.

\begin{Theorem}\label{thm:variation_leverage}
  Let $ (t,s, \theta) \mapsto L(t, s, \theta) $ be of form \eqref{eq:leverageNN} where the neural networks $(s, \theta_i) \mapsto F^i(s, \theta_i)$ are bounded and $C^1$, with bounded and Lipschitz continuous derivatives\footnote{This just means that the activation function is bounded and $C^1$, with bounded and Lipschitz continuous derivatives.},  for all  $i=1, \ldots, n$.
  Then the directional derivative in direction $\theta$ at $\widehat{\theta}$ satisfies the following
  equation
  \begin{equation}
    \begin{split} \label{eq:derivative}
      d \left ( \nabla_{\theta} S_t(\widehat{\theta}) \right) &= \Big (
      \nabla_{\theta} S_t(\widehat{\theta})
      L(t,S_t(\widehat{\theta}), \widehat{\theta})
      + S_t(\widehat{\theta})  \partial_s
      L(t,S_t(\widehat{\theta}), \widehat{\theta})  \nabla_{\theta} S_t(\widehat{\theta}) \\
      &\quad + S_t(\widehat{\theta})  \nabla_{\theta} L(t,S_t(\widehat{\theta}) ,\widehat{\theta})\Big
      ) \alpha_t dW_t \, ,
    \end{split}
  \end{equation}
  with initial value $0$. This can be solved by variation of constants, i.e.
  \begin{align}\label{eq:grad}
    \nabla_{\theta} S_t(\widehat{\theta})=\int_0^t P_{t-s} S_s(\widehat{\theta}) \nabla_{\theta} L(s, S_s(\widehat{\theta}), \widehat{\theta})\alpha_s dW_s,
  \end{align}
  where
  \[
    P_t=\mathcal{E}\left(\int_0^t \left(L(s, S_s(\widehat{\theta}), \widehat{\theta})+S_s(\widehat{\theta})\nabla_s L(s, S_s(\widehat{\theta}), \widehat{\theta})\right) \alpha_s dW_s \right)
  \]
  with $\mathcal{E}$ denoting the stochastic exponential.

\end{Theorem}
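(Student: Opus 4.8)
The plan is to treat \eqref{eq:derivative} as the main content and to obtain \eqref{eq:grad} as a consequence of solving a linear SDE. Since $\nabla_\theta S_t(\widehat\theta)$ is a directional (Gateaux) derivative, I would start from the integral form of \eqref{eq:LSV}, namely $S_t(\theta) = x + \int_0^t S_u(\theta) L(u, S_u(\theta), \theta)\alpha_u\, dW_u$, form the difference quotient $(S_t(\widehat\theta + \varepsilon\theta) - S_t(\widehat\theta))/\varepsilon$, and pass to the limit $\varepsilon \to 0$. The regularity imposed on the networks $F^i$ (bounded, $C^1$, with bounded and Lipschitz continuous derivatives) is exactly what makes the map $z \mapsto z L(u, z, \theta)$ \emph{functionally Lipschitz} with continuous and bounded $\theta$- and $s$-derivatives, so that the parameter-stability results of Appendix \ref{sec:SDE} (Theorem \ref{sta:thm}) and the integration-stability statement of Theorem \ref{thm:variation_hedges} apply. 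This mirrors the scheme of the Corollary above, where difference quotients of neural-network outputs are shown to converge ucp.

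Next I would derive \eqref{eq:derivative} by differentiating under the stochastic integral. Theorem \ref{thm:variation_hedges} licenses moving $\nabla_\theta$ inside $(\,\cdot\, \bullet Z)$, so that $\nabla_\theta S_t = \int_0^t \nabla_\theta\!\big(S_u(\theta) L(u, S_u(\theta), \theta)\big)\alpha_u\, dW_u$. Applying the product rule and then the chain rule to the integrand produces the three contributions $L\,\nabla_\theta S_u$, then $S_u\,\partial_s L\,\nabla_\theta S_u$ (from the dependence of $L$ on $S_u(\theta)$), and finally $S_u\,\nabla_\theta L$ (from the explicit dependence of $L$ on $\theta$), which is precisely the bracket in \eqref{eq:derivative}; the initial value is $0$ because $S_0 = x$ does not depend on $\theta$.

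Finally, I would read \eqref{eq:derivative} as a scalar linear SDE $dV_t = (a_t V_t + b_t)\alpha_t\, dW_t$ with $V_0 = 0$, where $a_t := L(t, S_t, \widehat\theta) + S_t\,\partial_s L(t, S_t, \widehat\theta)$ and $b_t := S_t\,\nabla_\theta L(t, S_t, \widehat\theta)$, and solve it by variation of constants. The homogeneous equation $dP_t = a_t P_t \alpha_t\, dW_t$, $P_0 = 1$, is solved by the stochastic exponential $P_t = \mathcal{E}\big(\int_0^t a_s \alpha_s\, dW_s\big)$, which matches the $P_t$ of the statement once $\nabla_s L$ is read as $\partial_s L$. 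Writing $V_t = P_t Y_t$ and expanding $d(P_t Y_t)$ by the It\^o product rule determines $dY_t$, and substituting back yields a representation of the form \eqref{eq:grad}.

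The hard part is twofold. The genuine obstacle is the rigor of the first step: the coefficient $z \mapsto z L(\cdot, z, \cdot)$ is \emph{not} globally Lipschitz in the ordinary Euclidean sense, so existence, uniqueness, and differentiable dependence cannot be quoted from standard SDE theory and must instead be routed through the functionally-Lipschitz framework and stability theorems of Appendix \ref{sec:SDE}; verifying that the hypotheses of Theorem \ref{sta:thm} and Theorem \ref{thm:variation_hedges} hold for this specific coefficient, and that the difference quotients converge ucp rather than merely pointwise, is where the real work lies. A secondary but delicate computational point is the variation-of-constants step: because \eqref{eq:derivative} is driftless, the product-rule expansion of $d(P_t Y_t)$ generates a quadratic-covariation term $d[P, Y]_t$, and I would track this term carefully to confirm whether the pure stochastic-integral representation \eqref{eq:grad} is exact or whether it carries an additional compensator (drift) correction arising from $[P, \,\cdot\,]$.
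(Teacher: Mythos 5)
Your route to \eqref{eq:derivative} is essentially the paper's: form the difference quotient $(S(\widehat{\theta}+\varepsilon\theta)-S(\widehat{\theta}))/\varepsilon$, observe that the neural-network regularity makes the coefficients functionally Lipschitz, and pass to the limit via the stability results of Appendix \ref{sec:SDE}, using equicontinuity of the families $\{s\mapsto L(t,s,\widehat{\theta}+\varepsilon\theta)\}$ and $\{s\mapsto\nabla_{\theta}L(t,s,\widehat{\theta}+\varepsilon\theta)\}$ to upgrade pointwise to ucp convergence. One small correction of emphasis: Theorem \ref{thm:variation_hedges} (continuity of $f\mapsto f\bullet Z$ in ucp) is not quite the right tool here, because the derivative process is not the stochastic integral of a known integrand --- the difference quotient itself appears inside the integrand, so $\nabla_{\theta}S$ solves an SDE in itself. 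The paper handles this by writing the \emph{exact} SDE satisfied by the difference quotient and applying Theorem \ref{sta:thm} to the extended three-component system $\bigl(S(\widehat{\theta}),\,S(\widehat{\theta}+\varepsilon\theta),\,\varepsilon^{-1}(S(\widehat{\theta}+\varepsilon\theta)-S(\widehat{\theta}))\bigr)$, with existence and uniqueness of the limit equation coming from Theorem \ref{e_u:thm}. Your final paragraph correctly identifies this as where the real work lies, so the gap is one of citation rather than of substance.

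Your last observation deserves to be highlighted: the paper offers no proof of \eqref{eq:grad} at all, and your worry about the bracket term is well founded. For the driftless linear equation $dV_t=(a_tV_t+b_t)\alpha_t\,dW_t$ with $P_t=\mathcal{E}\bigl(\int_0^t a_s\alpha_s\,dW_s\bigr)$, the naive ansatz $V_t=P_t\int_0^t P_s^{-1}b_s\alpha_s\,dW_s$ produces, via the It\^o product rule, an extra drift $a_tb_t\alpha_t^2\,dt$ coming from $d[P,\int P^{-1}b\,dM]_t$; the correct variation-of-constants formula carries the compensator $-\,a_sb_s\alpha_s^2\,ds$ inside the integral (and the propagator should be $P_tP_s^{-1}$ rather than $P_{t-s}$, since the equation is time-inhomogeneous). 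So tracking $[P,\cdot]$, as you propose, is not merely prudent --- it shows that \eqref{eq:grad} as displayed needs this correction, a point the paper glosses over.
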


\begin{proof}
  First note that Theorem \ref{e_u:thm} implies the existence and uniqueness
  of
  \[
    d S_t(\theta) = S_t(\theta)L(t,S_t(\theta),\theta) \alpha_t
    dW_t \, ,
  \]
  for every $\theta$.  Here, the driving process is one-dimensional and given by $Y=\int_0^{\cdot}\alpha_s dW_s$.
  Indeed, according to Remark \ref{rem:appl},
  if $(t,s) \mapsto L(t,s , \theta)$ is bounded, c\`adl\`ag in $t$  and Lipschitz in $s$ with a Lipschitz constant independent of $t$, $S_{\cdot}\mapsto S_{\cdot}(\theta)L(\cdot, S_{\cdot}(\theta),\theta)$ is functionally Lipschitz and Theorem \ref{e_u:thm} implies the assertion. These conditions are implied by the form of $L(t,s,\theta) $ and the conditions on the neural networks $F^i$.

  To prove the form of the derivative process we apply Theorem \ref{sta:thm} to the following system:~consider
  \[
    d S_t(\widehat{\theta}) = S_t(\widehat{\theta})L(t,S_t(\widehat{\theta}),\widehat{\theta}) \alpha_t
    dW_t \, ,
  \]
  together with
  \[
    d S_t(\widehat{\theta}+\varepsilon \theta) =
    S_t(\widehat{\theta}+\varepsilon \theta)L(t,S_t(\widehat{\theta}+\varepsilon \theta),\widehat{\theta}+\varepsilon \theta) \alpha_t
    dW_t \, ,
  \]
  as well as
  \begin{align*}
    d \,\frac{S_t(\widehat{\theta}+\varepsilon \theta)- S_t(\widehat{\theta})}{\varepsilon} & =  \frac{S_t(\widehat{\theta}+\varepsilon \theta)L(t,S_t(\widehat{\theta}+\varepsilon \theta),\widehat{\theta}+\varepsilon \theta)-
      S_t(\widehat{\theta})L(t,S_t(\widehat{\theta}),\widehat{\theta}) }  {\varepsilon}
    \alpha_t dW_t                                                                                                                                                                                                                                                                               \\
                                                                                            & =\Big(\frac{S_t(\widehat{\theta}+\varepsilon \theta)- S_t(\widehat{\theta})}{\varepsilon} L(t,S_t(\widehat{\theta}+\varepsilon \theta),\widehat{\theta}+\varepsilon \theta)                       \\
                                                                                            & \quad+ S_t(\widehat{\theta})\frac{L(t,S_t(\widehat{\theta}+\varepsilon \theta),\widehat{\theta}+\varepsilon \theta)-L(t,S_t(\widehat{\theta}),\widehat{\theta})}{\varepsilon}\Big)\alpha_t dW_t .
  \end{align*}
  
  In the terminology of Theorem \ref{sta:thm}, $Z^{\varepsilon, 1}=S(\widehat{\theta})$, $Z^{\varepsilon, 2}=S(\widehat{\theta}+ \varepsilon \theta)$ and  $Z^{\varepsilon, 3}=\frac{S_t(\widehat{\theta}+\varepsilon \theta)- S_t(\widehat{\theta})}{\varepsilon}$. Moreover, $F^{\varepsilon, 3}$ is given by
  \begin{equation}\label{eq:funcLip}
    \begin{split}
      F^{\varepsilon, 3}(Z^0_t)&= Z_t^{0,3} L(t,Z_t^{0,2},\widehat{\theta}+\varepsilon \theta)+Z^{0,1}_t\partial_s L(t, Z^{0,1}_t ,\widehat{\theta})Z_t^{0,3}+\mathcal{O}(\varepsilon)\\
      &\quad +Z^{0,1}_t\frac{ L(t,Z_t^{0,1} ,\widehat{\theta}+\varepsilon \theta)- L(t, Z^{0,1}_t,\widehat{\theta})}{\varepsilon},
    \end{split}
  \end{equation}
  which converges ucp to
  \begin{align*}
    F^{0,3}(Z^0_t)= Z_t^{0,3} L(t,Z_t^{0,2},\widehat{\theta})+Z^{0,1}_t\partial_s L(t, Z^{0,1}_t ,\widehat{\theta})Z_t^{0,3}+Z^{0,1}_t\nabla_{\theta} L(t, Z^{0,1}_t ,\widehat{\theta}).
  \end{align*}
  
  Indeed, for every fixed $t $, the family  $\{s \mapsto L(t,s, \widehat{\theta} +\varepsilon \theta), \,|\, \varepsilon \in [0,1] \}$ is due to the form of the neural networks equicontinuous. Hence pointwise convergence implies uniform convergence in $s$. This~together with $ L(t,s, \theta)$ being piecewise constant in $t$ yields
  \[
    \lim_{\varepsilon \to 0} \sup_{(t,s)} | L(t,s , \widehat{\theta}+\varepsilon \theta) - L(t,s , \widehat{\theta})|=0,
  \]
  whence ucp convergence of the first term in \eqref{eq:funcLip}.
  The convergence of term two is clear. The  one of term three follows again from the fact that the family  $\{s \mapsto \nabla_{\theta} L(t,s ,  \widehat{\theta}+ \varepsilon \theta) \,|\, \varepsilon \in [0,1] \}$ is  equicontinuous, which is again a consequence of the form of the neural networks.

  By the assumptions on the derivatives, $F^{0, 3}$  is functionally Lipschitz.
  Hence Theorem \ref{e_u:thm} yields the existence of a unique solution to \eqref{eq:derivative} and Theorem \ref{sta:thm} implies
  convergence.
\end{proof}

\begin{Remark}\text{~~~}
  \begin{enumerate}
    \item
          For the pure existence and uniqueness of
          \[
            d S_t(\theta) = S_t(\theta)L(t,S_t(\theta),\theta) \alpha_t
            dW_t \, ,
          \]
          with $L(t, s, \theta) $  of form \eqref{eq:leverageNN}, it suffices that
          the neural networks $s \mapsto F^i(s , \theta_i)$ are bounded and Lipschitz, for all  $i=1, \ldots, n$ (see also Remark \ref{rem:appl}).
    \item Formula \eqref{eq:grad} can be used for well-known backward propagation schemes.
  \end{enumerate}
\end{Remark}

Theorem \ref{thm:variation_leverage}
guarantees the existence and uniqueness of the derivative process. This thus allows the setting up of gradient-based search
algorithms for training.

In view of this let us now come to the precise optimization task as already outlined in Section \ref{sec:LSVintro}. To ease the notation, we shall here omit the dependence of the weights $w$ and
the loss function $\ell$ on the parameter $ \gamma $.  For each maturity $T_i$, we assume to have $J_i$
options with strikes $K_{ij}$, $j \in \{1, \ldots, J_i\}$. The~calibration functional for the $i$-th maturity is then of the form
\begin{align}\label{eq:minimization}
  \argmin_{\theta_i \in \Theta_i}  \sum_{j=1}^{J_i} w_{ij} \ell(\pi_{ij}^{\text{mod}}(\theta_i) -\pi_{ij}^{\text{mkt}}), \quad i \in \{1, \ldots, n\}.
\end{align}
Recall from the introduction that $\pi^{\text{mod}}_{ij}(\theta_i)$ ($\pi^{\text{mkt}}_{ij}$
respectively) denotes the model (market resp.) price of an option with
maturity $T_i$ and strike $K_{ij}$. Moreover, 
$\ell: \mathbb{R} \to \mathbb{R}_+$ is some non-negative, nonlinear,
convex loss function (e.g.~square or absolute value) with $\ell(0)=0$
and $\ell(x) >0$ for $x \neq 0$, measuring the distance between market
and model prices. Finally, $w_{ij}$ denote some weights, e.g.~of vega
type (compare \cite{cont2004recovering}), which we use to match
implied volatility data rather than pure prices. Notice that we here omit for notational convenience the dependence of $w_{ij}$ and $\ell$ on parameters $\gamma$ which describe the adversarial part.

We solve the minimization problems~\eqref{eq:minimization}
iteratively: we start with maturity $T_1$ and fix $\theta_1$. This~then enters in the computation of $\pi^{\text{mod}}_{2j}(\theta_2)$
and thus in \eqref{eq:minimization} for maturity $T_2$, etc. To
simplify the notation in the sequel, we shall therefore leave the
index $i$ away so that for a generic maturity $T >0$,
\eqref{eq:minimization} becomes
\begin{align*}
  \argmin_{\theta \in \Theta}  \sum_{j=1}^{J} w_{j} \ell(\pi_{j}^{\text{mod}}(\theta) -\pi_{j}^{\text{mkt}}).
\end{align*}

Since the model prices are given by
\begin{align}\label{eq:modprice}
  \pi_j^{\text{mod}}(\theta)=\mathbb{E}[ (S_{T}(\theta)-K_{j})^+],
\end{align}
we have
$\pi_{j}^{\text{mod}}(\theta)
  -\pi_{j}^{\text{mkt}}=\EW{Q_{j}(\theta)}$ where
\begin{align}\label{eq:Q}
  Q_{j}(\theta)(\omega):= (S_{T}(\theta)(\omega)-K_{j})^+-\pi^{\text{mkt}}_{j}.
\end{align}

The
calibration task then amounts to finding a minimum of
\begin{align}\label{eq:calibration}
  f(\theta):= \sum_{j=1}^{J} w_{j} \ell(\EW{Q_{j}(\theta)}).
\end{align}

As $\ell$ is a nonlinear function, this is not of the expected value
form of problem \eqref{eq:optimization}. Hence standard stochastic
gradient descent, as outlined in Appendix \ref{sec:SDG}, cannot be
applied in a straightforward manner.

We shall tackle this problem via hedge control variates as introduced
in Section \ref{sec:VR}. In the following we explain this in more
detail.

\subsection{Minimizing the Calibration Functional}

Consider the standard Monte Carlo estimator for
$\mathbb{E}[Q_{j}(\theta)]$ so that \eqref{eq:calibration} is
estimated by
\begin{align}\label{eq:Qest}
  f^{\text{MC}}(\theta):= \sum_{j=1}^J w_j \ell\left(\frac{1}{N}\sum_{n=1}^N Q_{j}(\theta)(\omega_n)\right),
\end{align}
for i.i.d samples $\{\omega_1, \ldots, \omega_N\} \in \Omega$.  Since
the Monte Carlo error decreases as $\frac{1}{\sqrt{N}}$, the number of
simulations $N$ must be chosen large ($\approx 10^8$) to
approximate well the true model prices in \eqref{eq:modprice}. Note
that implied volatility to which we actually aim to calibrate is even
more sensitive. As~stochastic gradient descent is not directly
applicable due to the nonlinearity of $\ell$, it seems necessary at
first sight to compute the gradient of the whole function
$\widehat{f}(\theta)$ to minimize \eqref{eq:Qest}.  As~$N\approx 10^8$, this is however computationally very expensive and leads
to numerical instabilities as we must compute the gradient of a sum that contains $10^8$ terms. Hence with this method an (approximative) minimum in the
high-dimensional parameter space $\Theta$ cannot be found in a reasonable amount of
time.

One very expedient remedy is to apply hedge control variates as
introduced in Section \ref{sec:VR} as variance reduction
technique. This allows the reduction of the number of samples $N$ in the
Monte Carlo estimator considerably to only up to $ 5 \times 10^4$ 
sample paths.

Assume that we have $r$ hedging instruments (including the price
process $S$) denoted by $(Z_t)_{t \in [0,T]}$ which are square
integrable martingales under $\mathbb{Q}$ and take values in
$\mathbb{R}^r$.  Consider, for $j=1, \ldots, J$, strategies
$h_j: [0,T]\times \mathbb{R}^r \to \mathbb{R}^r$ such that
$h(\cdot, Z_{\cdot}) \in L^2(Z)$ and some constant $c$. Define
\begin{align}\label{eq:X}
  X_{j}(\theta)(\omega):= Q_j(\theta)(\omega)  - c (h_j(\cdot, Z_{\cdot-}(\theta)(\omega)) \bullet Z_{\cdot}(\theta)(\omega))_T
\end{align}

The calibration functionals
\eqref{eq:calibration} and \eqref{eq:Qest}, can then simply be defined
by replacing $ Q_j(\theta)(\omega)$ by $X_j(\theta)(\omega)$ so that
we end up minimizing
\begin{align}\label{eq:calfin}
  \widehat{f}(\theta)(\omega_1, \ldots, \omega_N)=\sum_{j=1}^J w_j \ell\left(\frac{1}{N}\sum_{n=1}^N X_{j}(\theta)(\omega_n)\right).
\end{align}

To tackle this task, we apply the following variant of gradient
descent: starting with an initial guess $\theta^{(0)}$, we iteratively
compute
\begin{align}\label{eq:updateimpl}
  \theta^{(k+1)}=\theta^{(k)}- \eta_k \; G(\theta^{(k)})(\omega^{(k)}_1, \ldots, \omega^{(k)}_N),
\end{align}
for some learning rate $\eta_k$, i.i.d samples
$(\omega^{(k)}_1, \ldots, \omega^{(k)}_N) $, where the values
\[
  G(\theta^{(k)})(\omega^{(k)}_1, \ldots, \omega^{(k)}_N)
\]
are gradient-based quantities that remain to be specified. These samples can either
be chosen to be the same in each iteration or to be newly sampled in
each update step. The difference between these two approaches is negligible, since $N$ is chosen so as to yield a small Monte Carlo
error, whence the gradient is nearly deterministic. In our numerical
experiments we newly sample in each update step.

In the simplest form, one could simply set
\begin{equation}\label{eq: standard grad}
  G(\theta^{(k)})(\omega^{(k)}_1, \ldots, \omega^{(k)}_N) = \nabla \widehat{f}(\theta)(\omega_1^{(k)},\ldots, \omega_N^{(k)}).
\end{equation}

Note however that the derivative of the stochastic integral term in \eqref{eq:X} is in general quite expensive. We thus implement the following modification.

We set
\begin{align*}
  \omega^N                & = (\omega_1, \ldots, \omega_N),                                 \\
  Q^N_j(\theta)(\omega^N) & = \frac{1}{N} \sum_{n=1}^N Q_j(\theta)(\omega_n),               \\
  Q^N (\theta)(\omega^N)  & = (Q_1^N(\theta)(\omega^N),\ldots, Q_J^N(\theta)(\omega^N)   ),
\end{align*}
and define $\tilde f : \mathbb{R}^J\rightarrow \mathbb{R}$ via
\begin{equation*}
  \tilde f (x) = \sum_{j=1}^J w_j \ell(x_j).
\end{equation*}

We then set
\begin{equation*}
  G(\theta)(\omega^N) = D_x(\tilde f)(X^N(\theta)(\omega^N) )  D_\theta(Q^N)(\theta)(\omega^N).
\end{equation*}

Please note that this quantity is actually easy to compute in terms of backpropagation. Moreover, leaving~the stochastic integral away in the inner derivative is justified by its vanishing expectation. During the forward pass, the stochastic integral terms are included in the computation; however the contribution to the gradient (during the backward pass) is partly neglected, which can e.g.~ be implemented via the tensorflow \texttt{stop\_gradient} function.

Concerning the choice of the hedging strategies, we can parameterize
them as in Section \ref{sec:NNH} via neural networks and find the
optimal weights $\delta$ by computing
\begin{align}\label{eq:deltaopt}
  \argmin_{\delta \in \Delta} \frac{1}{N}\sum_{n=1}^N  u(-X_{j}(\theta, \delta)(\omega_n)).
\end{align}
for i.i.d samples $\{\omega_1, \ldots, \omega_N\} \in \Omega$ and some
loss function $u$ when $\theta$ is fixed.  Here,
\[
  X_{j}(\theta, \delta)(\omega)=(S_T(\theta)(\omega)-K_j)^+
  -(h_j(\cdot, Z_{\cdot-}(\theta)(\omega),\delta) \bullet
  Z_{\cdot}(\theta)(\omega))_T - \pi_j^{\text{mkt}}.
\]

This means to iterate the two optimization procedures,
i.e.,~minimizing \eqref{eq:calfin} for $\theta$ (with fixed $\delta)$
and~\eqref{eq:deltaopt} for $\delta$ (with fixed $\theta$).  Clearly
the Black–Scholes hedge ansatz as of Section \ref{sec:hedgeBS} works
as well, in this case without additional optimization with respect to
the hedging strategies.

For alternative approaches how to minimize~\eqref{eq:calibration}, we
refer to Appendix~\ref{alt}.

\section{Numerical Implementation} \label{sec:numerics}

In this section, we discuss the numerical implementation of the proposed calibration method.
We implement our approach via tensorflow, taking advantage of
GPU-accelerated computing. All~computations are performed on a
single-gpu Nvidia GeForce ® GTX 1080 Ti machine. For~the implied volatility 
computations, we rely on the python py\_vollib library.\footnote{See \url{http://vollib.org/}.}

Recall that a LSV model is given on some filtered probability space
$(\Omega, (\mathcal{F}_t)_{t \in [0,T]} , \mathcal{F}, \mathbb{Q})$ by
\[ dS_t = S_t \alpha_t L(t,S_t) dW_t,\quad S_0>0, \] for some stochastic
process $\alpha$. When calibrating to data, it is, therefore, necessary
to make further specifications. We calibrate the following SABR-type
LSV model.

\begin{Definition}\label{eq: SABR LSV}
  The SABR-LSV model is specified via the SDE,
  \begin{equation*}
    \begin{split}
      dS_t &= S_t  L(t,S_t) \alpha_t dW_t,\\
      d\alpha_t &= \nu \alpha_t dB_t, \\
      d\langle W, B \rangle_t &= \rho dt,
    \end{split}
  \end{equation*}
  with parameters $\nu \in \mathbb{R}$, $\rho \in [-1,1]$ and initial
  values \(\alpha_0 >0,\, S_0 >0\). Here, $B$ and $W$ are two correlated
  Brownian motions.
\end{Definition}

\begin{Remark}

  We shall often work in log-price coordinates for $S$. In particular,
  we can then consider $L$ as a function of $X:=\log S$ rather then
  $S$. By denoting this parametrization again with $L$, we therefore
  have $L(t,X)$ instead of $L(t,S)$ and the model dynamics read as
  \begin{equation*}
    \begin{split}
      dX_t &= \alpha_t  L(t,X_t ) dW_t - \frac{1}{2} \alpha_t^2 L^2(t,X_t) dt,\\
      d\alpha_t &= \nu \alpha_t dB_t, \\
      d\langle W, B \rangle_t &= \rho dt.
    \end{split}
  \end{equation*}
  
  Please note that $\alpha$ is a geometric Brownian motion, in particular, 
  the closed form solution for $\alpha$ is available and given by
  \begin{equation*}
    \alpha_t = \alpha_0 \exp \left( - \frac{\nu^2}{2} t + \nu B_t \right).
  \end{equation*}
\end{Remark}

For the rest of the paper we shall set $S_0=1$.

\subsection{Implementation of the Calibration Method}

We now present a proper numerical test and demonstrate the
effectiveness of our approach on a family of typical market smiles
(instead of just one calibration example). We consider as \emph{ground truth} a situation where market smiles are produced by a parametric
family. By randomly sampling smiles from this family we then show that
they can be calibrated up to small errors, which we analyze~statistically.

\subsubsection{Ground Truth Assumption}
We start by specifying the ground truth assumption.  It is known that
a discrete set of prices can be exactly calibrated by a local
volatility model using Dupire's volatility function, if an appropriate
interpolation method is chosen. Hence, any market observed smile data
can be reproduced by the following model (we assume zero riskless rate
and define $X=\log(S)$),
\[
  dS_t = \sigma_{\text{Dup}}(t, X_t ) S_t dW_t,
\]
or equivalently
\begin{equation}\label{eq:LocVolSDE}
  dX_t = -\frac{1}{2} \sigma_{\text{Dup}}^2(t,X_t) dt +   \sigma_{\text{Dup}}(t, X_t ) dW_t,
\end{equation}
where $\sigma_{\text{Dup}}$ denotes Dupire's local volatility
function~\cite{D:96}. Our ground truth assumption consists of supposing
that the function $\sigma_{\text{Dup}}$ (or to be more precise
$\sigma_{\text{Dup}}^2$) can be chosen from a parametric family. Such
parametric families for local volatility models have been discussed in
the literature, consider e.g.~ \cite{carmona2009local} or
\cite{carmonainbook}. In the latter, the authors introduce a family of
local volatility functions $\widetilde{a}_{\xi}$ indexed by parameters
\[
  \xi=(p_1,p_2,\sigma_0,\sigma_1,\sigma_2)
\]
and $p_0 = 1 - (p_1+p_2)$ satisfying the constraints
\[
  \sigma_0,\sigma_1,\sigma_2, p_1,p_2 >0 \text{ and } p_1+p_2 \le 1 .
\]

Setting
$k(t,x,\sigma)= \exp\left(-x^2/(2 t \sigma^2) - t \sigma^2/8 \right)$,
$\widetilde{a}_{\xi}$ is then defined as
\[
  \widetilde{a}^2_\xi(t,x) = \frac{\sum_{i=0}^2 p_i \sigma_i
    k(t,x,\sigma_i) }{\sum_{i=0}^2 (p_i / \sigma_i) k(t,x,\sigma_i) }.
\]

In Figure~\ref{fig:iV parametric family}a we show plots of implied
volatilities for different slices (maturities) for a realistic choice of
parameters. As one can see, the produced smiles seem to be
unrealistically flat. Hence we modify the local volatility function
$\widetilde{a}_{\xi}$ to produce more pronounced and more realistic
smiles.  To be precise, we~define a new family of local volatility
functions $a_{\xi}$ indexed by the set of parameters $\xi$ as
\begin{equation}\label{eq: defi of L_theta}
  a_{\xi}^2(t,x) = \frac{1}{4} \times \min \Big( 2, \left|  \frac{ \left( \sum_{i=0}^2 p_i \sigma_i k(t,x,\sigma_i) + \Lambda (t,x)  \right) \left( 1-0.6 \times \ind{t>0.1} \right) }
    {\sum_{i=0}^2 (p_i / \sigma_i) k(t,x,\sigma_i) +0.01 } \right|   \Big),
\end{equation}
with
\[
  \Lambda(t,x) := \left(\frac{\ind{t\le 0.1} }{1+0.1 t
  }\right)^{\lambda_2}\min\left\{ \left(\gamma_1
  \left(x-\beta_1\right)_+ + \gamma_2 \left(-x-\beta_2\right)_+
  \right)^\kappa ,\, \lambda_1 \right\} .
\]

We fix the choice of the parameters
$\gamma_i,\beta_i, \lambda_i, \kappa $ as given in Table~\ref{table:paraLocvol}. By taking absolute values above, we can drop the requirement $p_0>0$ which is what we do in the sequel. Please note that $a^2_\xi$ is not defined at
$t=0$. When doing a Monte Carlo simulation, we simply replace
$a^2_{\xi}(0,x)$ with $a^2_\xi(\Delta_t,x)$, where $\Delta_t$ is the
time increment of the Monte Carlo simulation.

What is left to be specified are the parameters
\[\xi=(p_1,p_2,\sigma_0,\sigma_1,\sigma_2)\]
with $p_0=1-p_1-p_2$.  This motivates our statistical test for the
performance evaluation of our method. To be precise, our ground truth
assumption is that all observable market prices are explained by a
variation of the parameters $\xi$. For illustration, we plot implied
volatilities for this modified local volatility function in
Figure~\ref{fig:iV parametric family}b for a specific parameter set
$\xi$.

Our ground truth model is now specified as in \eqref{eq:LocVolSDE}
with $\sigma_{\text{Dup}}$ replaced by $a_{\xi}$, i.e.,~
\begin{equation}\label{eq:LocVolSDE1}
  dX_t = -\frac{1}{2} a_{\xi}^2(t,X_t) dt +  a_{\xi}(t, X_t ) dW_t.
\end{equation}

\begin{table}[H]
  \centering
   \caption{Fixed Parameters for the ground truth assumption
    $a^2_\xi$.}
  \label{table:paraLocvol}
  \begin{tabular}{c c c c c c c }
    \toprule[0.6pt]
    \boldmath{$\gamma_{1}$} & \boldmath{$\gamma_{2}$} & \boldmath{$\lambda_{1}$} & \boldmath{$\lambda_{2}$} & \boldmath{$\beta_{1}$} & \boldmath{$\beta_{2}$} & \boldmath{$\kappa$} \\
    \midrule[0.5pt]
    $1.1$        & $20$         & $10$          & $10$          & $0.005$     & $0.001$     & $0.5$    \\
    \bottomrule[0.6pt]
  \end{tabular}

\end{table}\unskip\vspace{-12pt}

\begin{figure}[H]\centering
  \begin{subfigure}{0.49\textwidth}\centering
    \includegraphics[trim=.0cm .0cm 0cm .0cm, clip,
      width=0.98\textwidth]{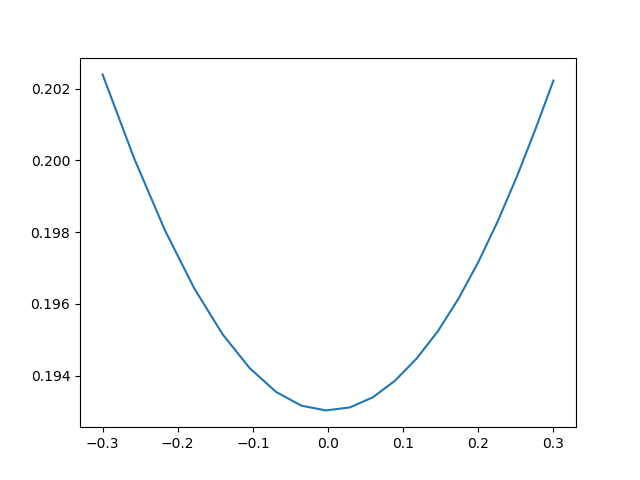}
    \caption{}
    \label{fig:iV parametric family (a)}
  \end{subfigure}
  \begin{subfigure}{0.49\textwidth}\centering
    \includegraphics[trim=.0cm .0cm 0cm .0cm, clip,
      width=0.98\textwidth]{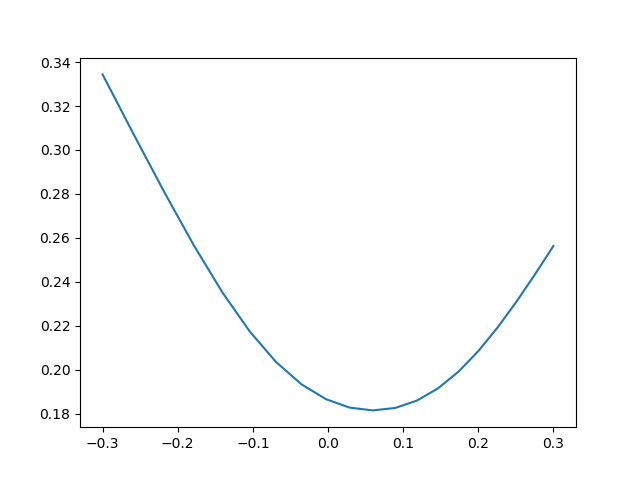}
    \caption{} 
    \label{fig:iV parametric family (b)}
  \end{subfigure}
  \caption{Implied volatility of the original parametric family
    $\widetilde{a}_{\xi}$ (\textbf{a}) versus our modification $a_{\xi}$ (\textbf{b}) for maturity $T=0.5$, the $x$-axis is given on log-moneyness $\ln(K/S_0)$.
  }
  \label{fig:iV parametric family}
\end{figure}

\subsubsection{Performance Test}\label{sec:performance test} We now come to the evaluation of our
proposed method.  We want to calibrate the SABR-LSV model to synthetic market
prices generated by the previously formulated ground truth
assumption. This~corresponds to randomly sampling the parameter $\xi$
of the local volatility function $a_{\xi}$ and to compute prices
according to \eqref{eq:LocVolSDE1}. Calibrating the SABR-LSV model,
i.e.,~finding the parameters \(\nu,\rho\), the initial volatility
$\alpha_0$ and the unknown leverage function $L$, to these prices and
repeating this multiple times then allows for a statistical analysis
of the errors.

As explained in Section \ref{sec:LSVcali}, we consider European call
options with maturities $T_1 < \cdots <T_{n}$ and denote the strikes
for a given maturity $T_i$ by $K_{ij}$, $j \in \{1, \ldots, J_i\}$. To
compute the ground truth prices for these European calls we use a
Euler-discretization of \eqref{eq:LocVolSDE1} with time step
$\Delta_t = 1/100$. Prices are then obtained by a variance reduced
Monte Carlo estimator using $10^7$ Brownian paths and a Black–Scholes
delta hedge variance reduction as described previously. For a given 
parameter set $\xi$, we use the same Brownian paths for all strikes
and maturities.

Overall, in this test, we consider $n=4$ maturities with $J_i=20$
strike prices for all $i=1,\ldots,4$. The~values for $T_i$ are given
in Figure~\ref{fig:matsAndStrikes}a. For the choice of the strikes
$K_i$, we choose evenly spaced points, i.e.,
\[
  K_{i,j+1}-K_{i,j} = \frac{K_{i,20}-K_{i,1}}{19}.
\]

For the smallest and largest strikes per maturity we choose
\[
  K_{i,1}= \exp\left(-k_i\right), \; K_{i,20}= \exp\left(k_i\right),
\]
with the values of $k_i$ given in Figure~\ref{fig:matsAndStrikes}b.

\begin{figure}[H]
  \centering
  \begin{subfigure}[b]{0.3\textwidth}\centering
    \begin{tabular}{c c c c }
      \toprule[1pt]
      $T_1$ & $T_2$ & $T_3$ & $T_4$ \\
      \midrule[0.5pt]
      0.15  & 0.25  & 0.5   & 1.0   \\
      \bottomrule[1pt]
    \end{tabular}
    \subcaption{}\label{fig:matsAndStrikes-a}
  \end{subfigure}
  \begin{subfigure}[b]{0.3\textwidth} \centering
    \begin{tabular}{c c c c }
      \toprule[1pt]
      $k_{1}$ & $k_{2}$ & $k_{3}$ & $k_{4}$ \\
      \midrule[0.5pt]
      $0.1$   & $0.2$   & $0.3$   & $0.5$   \\
      \bottomrule[1pt]
    \end{tabular}
    \subcaption{}\label{fig:matsAndStrikes-b}
  \end{subfigure}
  \caption{ Parameters for the synthetic prices to which we calibrate: (\textbf{a}) maturities;  (\textbf{b}) parameters that define the strikes for the call options per maturity.}
    \label{fig:matsAndStrikes} 
\end{figure}

We now specify a distribution under which we draw the parameters
\[
  \xi = (p_1, p_2, \sigma_0,\sigma_1,\sigma_2, )
\]
for our test. The components are all drawn independently from each
other under the uniform distribution on the respective intervals given
below.
\begin{multicols}{3}
  \begin{itemize}
    \item[-] $I_{p_1}= [0.4,0.5]$
    \item[-] $I_{p_2}= [0.4,0.7]$
    \item[-] $I_{\sigma_0} = [0.5,1.7]$
    \item[-] $I_{\sigma_1}= [0.2,0.4]$
    \item[-] $I_{\sigma_2}= [0.5,1.7]$
  \end{itemize}
\end{multicols}

We can now generate data by the following scheme.

\begin{itemize}
  \item For $m=1,\ldots, 200$ simulate parameters $\xi_m$ under the law
        described above.
  \item For each $m$, compute prices of European calls for maturities
        $T_i$ and strikes $K_{ij}$ for $i=1,\ldots,n=4$ and
        $j=1, \ldots, 20$ according to \eqref{eq:LocVolSDE1} using $10^7$
        Brownian trajectories (for each $m$ we use new trajectories).
  \item Store these prices.
\end{itemize}

\begin{Remark}\label{remark: MC error fail}
  In very few cases, the simulated parameters were such that the implied volatility computation for model prices  failed at least for one maturity due to the remaining Monte Carlo error. In those cases, we simply skip that sample and continue with the next, meaning that we will perform the statistical test only on the samples for which these implied volatility computations were successful.
\end{Remark}

The second part consists of calibrating each of these surfaces and
storing pertinent values for which we conduct a statistical
analysis. In the following we describe the procedure in detail:

Recall that we specify the leverage function $L(t,x)$ via a family of
neural networks, i.e.,
\begin{align*}
  L(t,x) = 1+ F^i(x)  \quad t\in [T_{i-1}, T_{i}), \quad i \in\{1, \ldots, n=4\},
\end{align*}
where $F^i \in \mathcal{N N}_{1,1}$ (see Notation~\ref{not}).  Each
$F^i$ is specified as a $4$-hidden layer feed-forward network where
the dimension of each of the hidden layers is $64$. As activation
function we choose leaky-ReLU\footnote{Recall that $\phi:\mathbb R\rightarrow\mathbb R$ is the leaky-ReLu activation function with parameter $\alpha\in\mathbb R$ if $\phi(x) = \alpha x \mathbbm 1_{(x<0)} + x \mathbbm 1_{(x\ge0)}$. In particular, classical ReLu is is retrieved by setting $\alpha=0$.}

 with parameter $0.2$ for the first three hidden layers and  $\phi=\tanh$ for the last hidden layer. This choice means of course a considerable overparameterization, where we deal with  much more parameters than data points. As is well known from the theory of machine learning, this however allows a profit to be made from implicit regularizations for the leverage function, meaning that the variations of higher derivatives are~small.

\begin{Remark}
  In our experiments, we tested different network architectures. Initially, we used networks with three to five hidden layers with layer dimensions between $50$ and $100$ and activation function $\tanh$ in all layers. Although the training was successful, we observed that training was significantly slower  with significant lower calibration accuracy compared to the final architecture.
  We also tried classical ReLU, but observed that the training sometimes got stuck due to flat gradients. In case of pure leaky-ReLU activation functions, we observed numerical instabilities. By adding a final $\tanh$ activation, this computation was regularized leading to the results we present here.
\end{Remark}

Since closed form pricing formulas are not available for such an LSV model, let
us  briefly specify our pricing method. For the variance reduced
Monte Carlo estimator as of \eqref{eq:calfin} we always use a standard
Euler-SDE discretization with step size $\Delta_t=1/100$. As variance
reduction method, we~implement the running Black–Scholes Delta hedge
with instantaneous running volatility of the price process,
i.e.,~$L(t,X_t)\alpha_t$ is plugged in the formula for the
Black–Scholes Delta as in \eqref{eq: BS hedge formula}.  The~only
parameter that remains to be specified, is the number of trajectories
used for the Monte Carlo estimator which is done in Algorithm
\ref{alg1} and Specification \ref{spec: SGD} below.

As a first calibration step, we calibrate the SABR model
(i.e.,~\eqref{eq: SABR LSV} with $L\equiv 1$) to the synthetic market prices of the first maturity and
fix the calibrated SABR parameters $\nu,\rho$ and $\alpha_0$.
This calibration is not done by the SABR formula, but rather in the same way the LSV model calibration is implemented: we use a Monte Carlo
simulation based engine where gradients are computed via backpropagation. The calibration objective function is analog to
\eqref{eq:calfin} and we compute the full gradient as specified in \eqref{eq: standard grad}. We only use a maximum of 2000 trajectories and the running Black–Scholes hedge for variance reduction
per gradient computation, as we are only interested in an approximate fit. In fact, when compared to a better initial SABR fit achieved by the SABR formula, we observed that the calibration fails more often due to local minima becoming an issue.

For training the parameters $\theta_i$, $i=1, \ldots, 4$, of the neural networks we apply Algorithm~\ref{alg1} in the Appendix~\ref{app:algo}.

\subsection{Numerical Results for the Calibration Test}\label{sec:test}

We now discuss the results of our test. We start by pointing out that from the $200$ synthetic market smiles generated,
four smiles caused difficulties, in the sense that our implied volatility computation
failed due to the remaining Monte Carlo error in the model price computation, compare Remark~\ref{remark: MC error fail}. By increasing the training
parameters slightly (in particular the number of trajectories used in the training), this issue can be mitigated but the resulting
calibrated implied volatility errors stay large out of the money where
the smiles are extreme, and the training will take more time. Hence, we opt to remove those four samples from
the following statistical analysis as they represented unrealistic market smiles.

In Figure~\ref{fig:plot1} we show calibration results for a typical example
of randomly generated synthetic market data. From this it is already visible that the  worst-case calibration error (which occurs out of the money) ranges typically between 5 and 15 basis points. The corresponding calibration
result for the square of the leverage function $L^2$ is given in Figure~\ref{fig: L2 plots}.

Let us note that our method achieves a very high calibration accuracy for the considered range of strikes across all considered maturities. 
This can be seen in the results of a worst-case analysis of calibration errors in Figure~\ref{fig: boxplots}. There we show the mean as well as
different quantiles of the data. Please note that the mean always lies below 10 basis point across all strikes and maturities.

Regarding calibration times, we can report that from the 196 samples, 191 finished within 26 to 27~min. In all these cases, the abort criterion was active on the first time it was checked, i.e.,~after~5000~iterations. The other five samples are examples of smiles comparable to the four where implied volatility computation itself failed. In those cases, more iteration steps where needed resulting in times between 46 and 72 minutes. These samples also correspond to the less successful calibration results.

To perform an out of sample analysis, we
check for extra- and interpolation properties of the learned leverage function. This means that we compute implied volatilities on an extended range and compare to
the implied volatility of the ground truth assumption. The strikes of
these ranges are again computed by taking 20 equally spaced points as
before, but with parameters $k_i$ as of  table Figure~\ref{fig:matsAndStrikes}b
multiplied with 1.5. This has
also the effect that the strikes inside the original range do not
correspond to the strikes considered during training, which  allows for an
additional analysis of the interpolation properties. These results are illustrated in Figure \ref{fig:extra}, from which we see that extrapolation is very close to the local volatility model.

\subsection{Robust Calibration---An Instance of the Adversarial Approach}\label{sec:robust}

Let us now describe a robust version of our calibration methodology realized in an adversarial manner.
We start by assuming that there are multiple ``true''
option prices which correspond to the bid-ask spreads observed  on the market. The way we realize this in our experiment is to use several
local volatility functions that generate equally plausible market implied volatilities.
Recall that the local volatility functions in our statistical test above are functions of the parameters $(p_0,p_1,\sigma_0,\sigma_1,\sigma_2)$. We fix these parameters and generate 4 smiles from local volatility functions with slightly perturbed~parameters
\[  (p_0+u_{i1},p_1+u_{i2},\sigma_0+u_{i3},\sigma_1+u_{i4},\sigma_2+u_{i5})   \text{ for } i=1,\ldots,4, \]
where  $u_{ij}$ are i.i.d. uniformly distributed random variables, i.e., $u_{ij} \sim \mathcal U_{[-u,u]} $ with $u=0.01$. The loss function for maturity $T_i$ in the training part now changes to
\begin{equation}
  \inf_{\theta} \sum_{j=1}^{J_i} w_j \sup_{m=1,\ldots,4}\ell \left( \frac{1}{N}\sum_{n=1}^N X_{j,m}(\theta)(\omega_n)\right),
\end{equation}
with $X_{j,m}$ defined as $X_j$ in \eqref{eq:X} (see also \eqref{eq:Q}) but with synthetic market prices $m=1,\ldots,4$ generated by the $m$-th local volatility function. We are thus in an adversarial situation as described in the introduction: we~have several possibilities for the loss function corresponding to the different market prices and we take the supremum over these (individually for each strike). In our toy example we can simply compute the gradient of this supremum function with respect to $\theta$. In a more realistic situation, where~we do not only have $4$ smiles but a continuum we would iterate the inf and sup computation, meaning that we would also perform a gradient step with respect to  $m$. This corresponds exactly to the adversary part. For a given parameter set $\theta$, the adversary tries to find the worst loss function.

In Figure \ref{fig:plotrob}, we illustrate the result of this robust calibration, where find that the calibrated model lies between the four different smiles over which we take the supremum.

\begin{figure}[H]
  \centering
  \includegraphics[trim=.cm .6cm 0.cm .0cm,
    clip,width=0.48\textwidth]{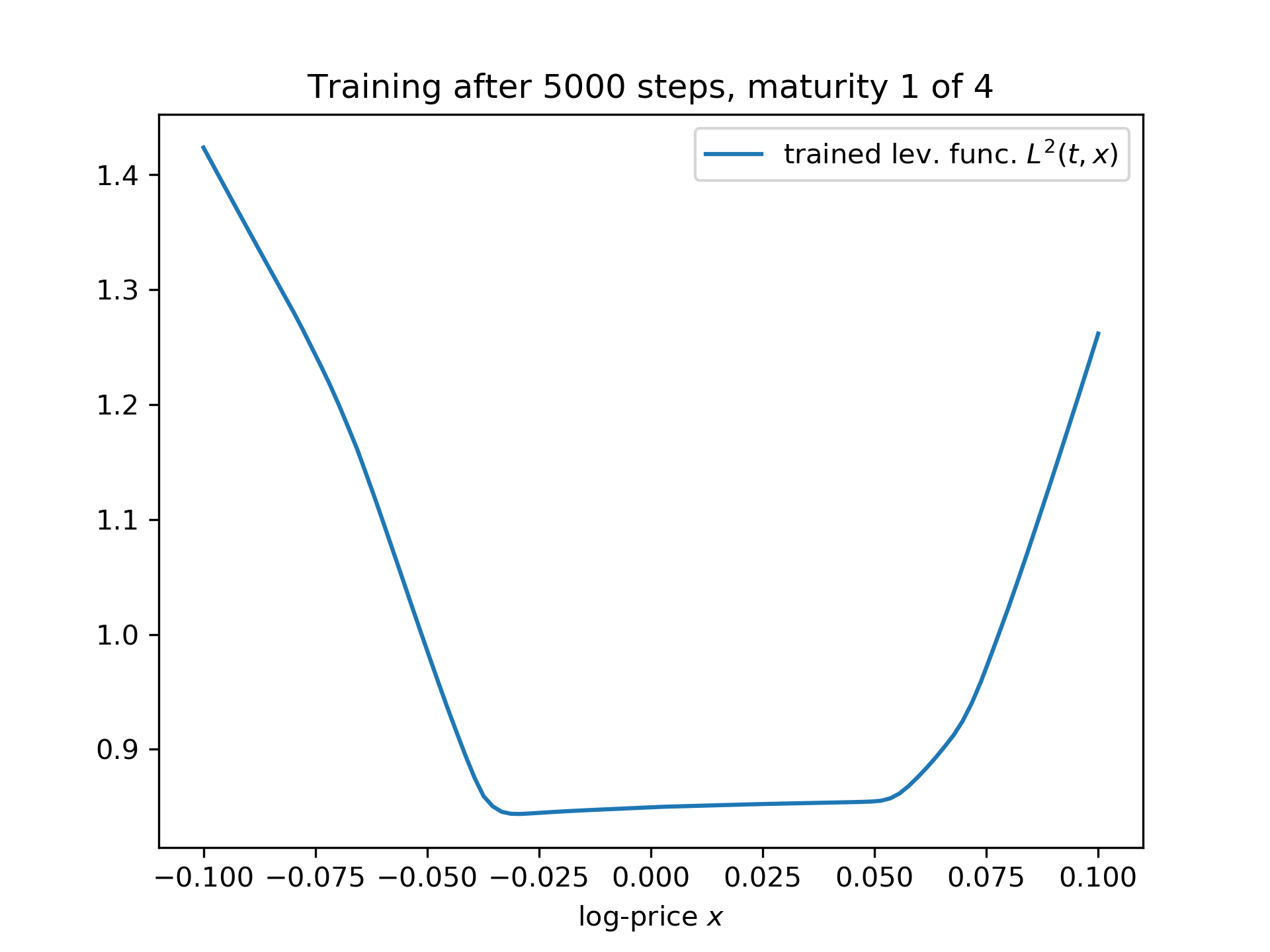}
  \includegraphics[trim=.cm .6cm 0.cm .0cm,
    clip,width=0.48\textwidth]{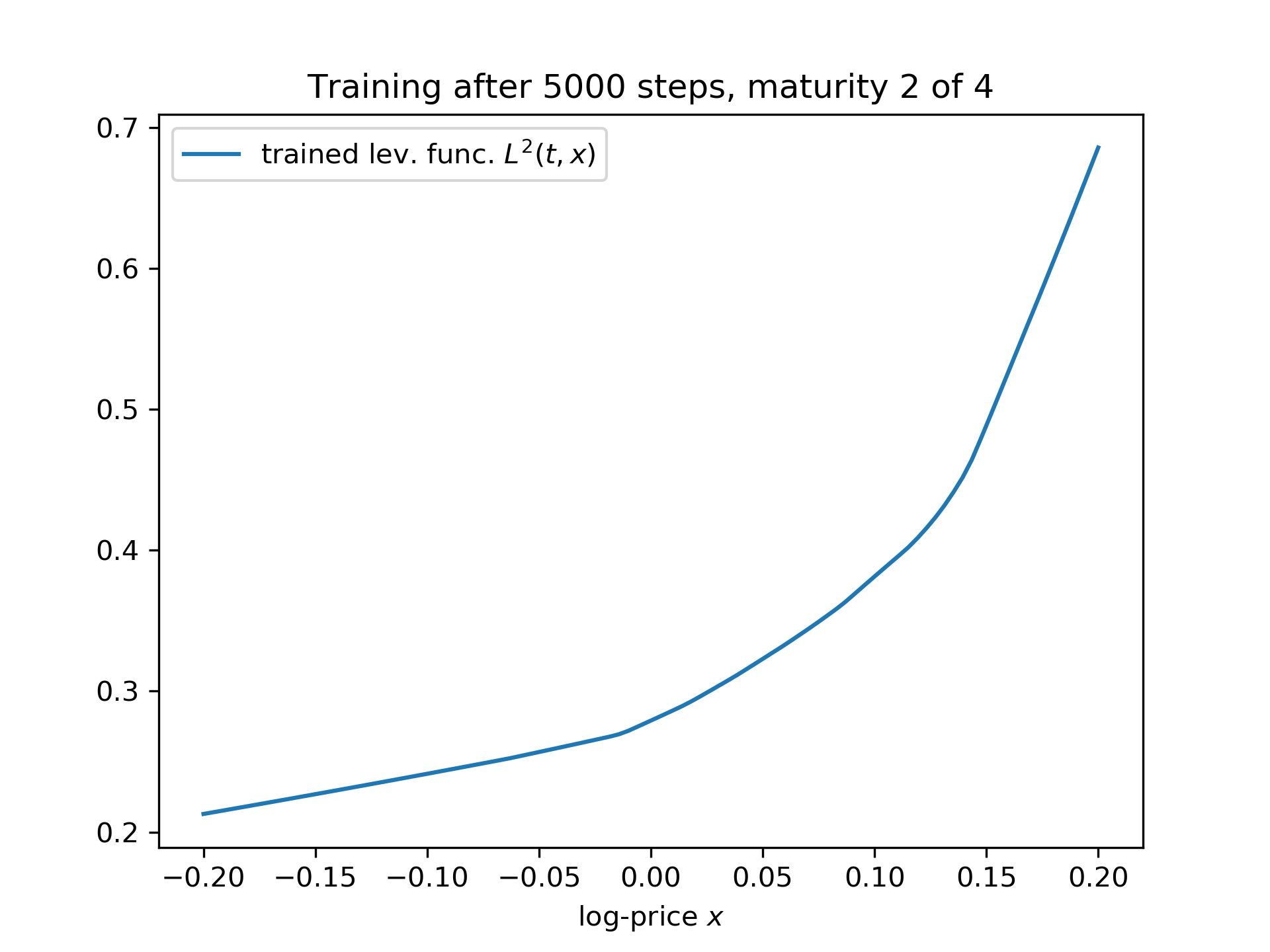}
  \includegraphics[trim=.cm .6cm 0.cm .0cm,
    clip,width=0.48\textwidth]{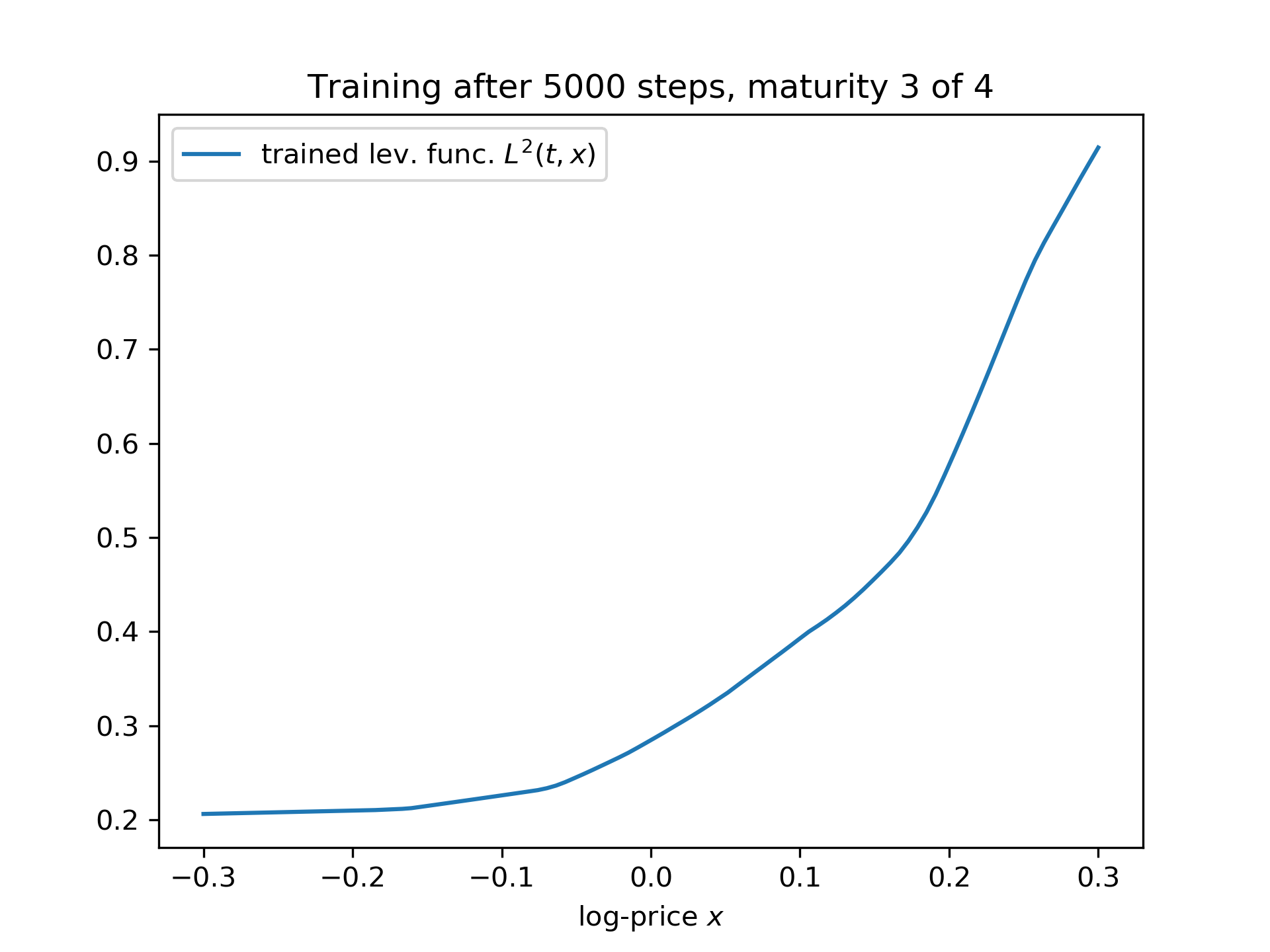}
  \includegraphics[trim=.cm .6cm 0.cm .0cm,
    clip,width=0.48\textwidth]{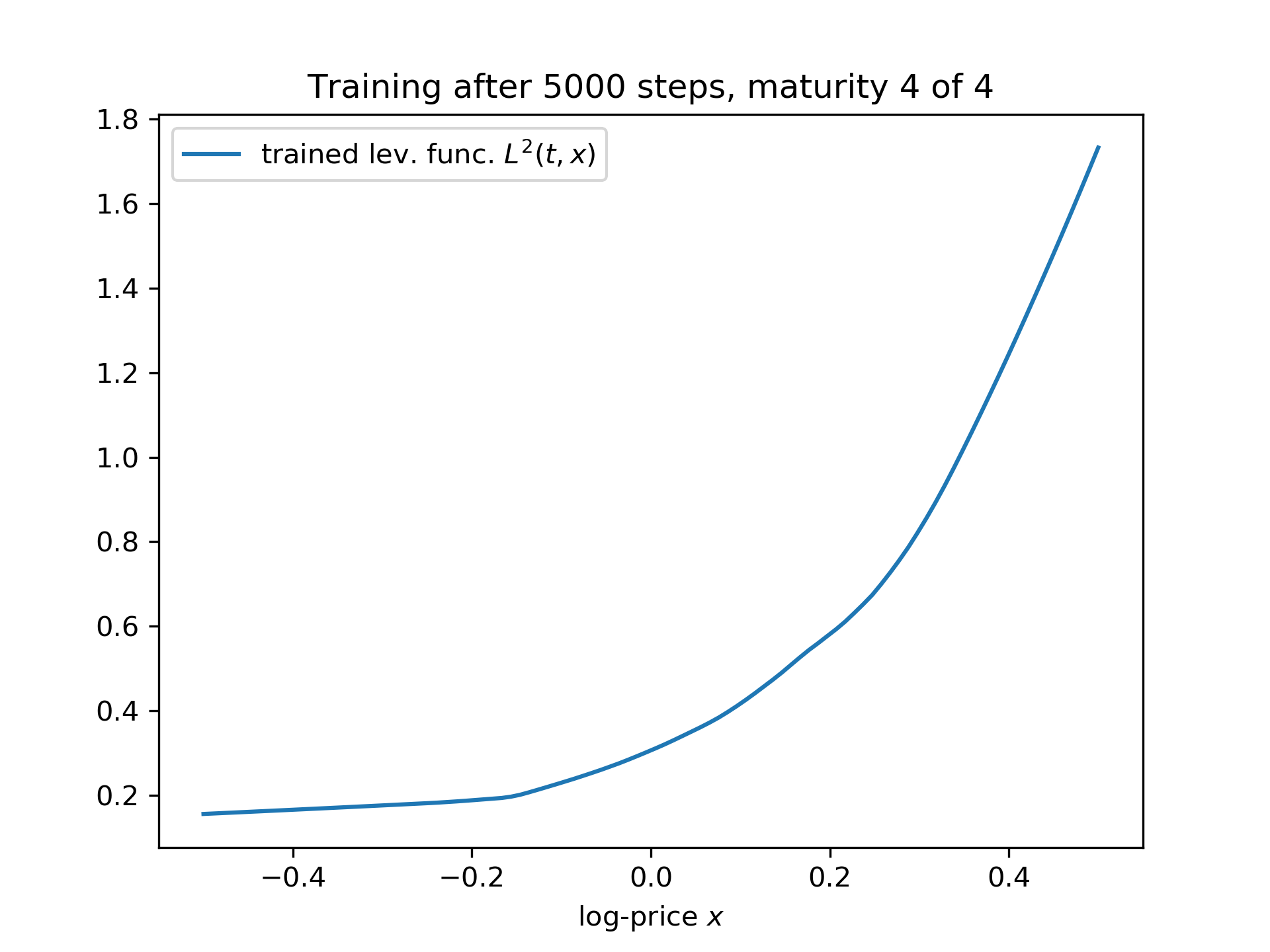}
  \caption{Plot of the calibrated leverage function $x \mapsto L^2(t,x)$ at $t\in \{0,T_1,T_2,T_3\}$ in the example shown in Figure~\ref{fig:plot1}. The $x$-axis is given in log-moneyness $\ln(K/S_0)$.}
  \label{fig: L2 plots}
\end{figure}\unskip

\begin{figure}[H]
  \centering \includegraphics[trim=.8cm 0.4cm 1.6cm 0.7cm,
    clip,width=0.44\textwidth]{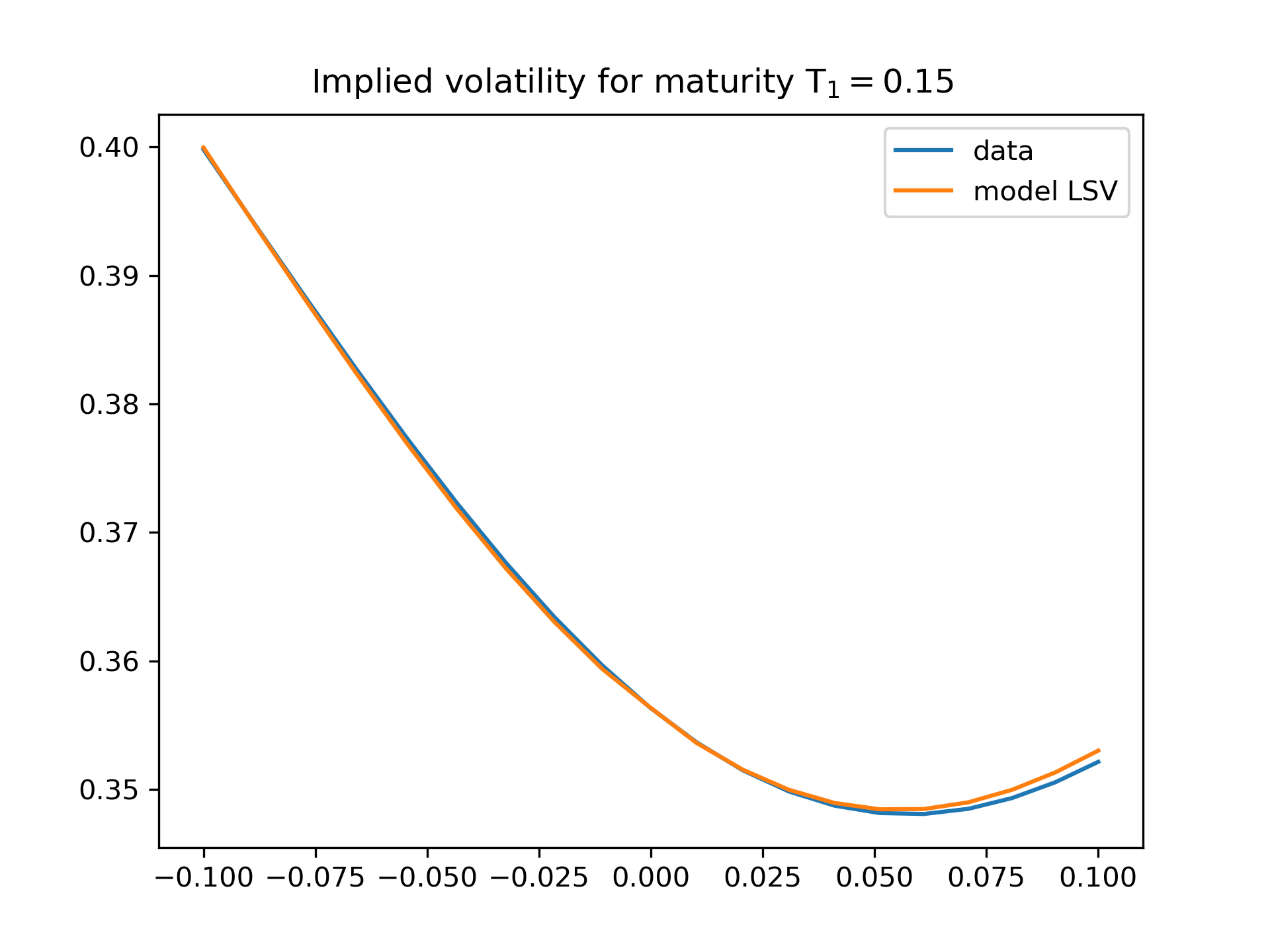}
  \includegraphics[trim=0.3cm 0.4cm 1.6cm 0.7cm,
    clip,width=0.44\textwidth]{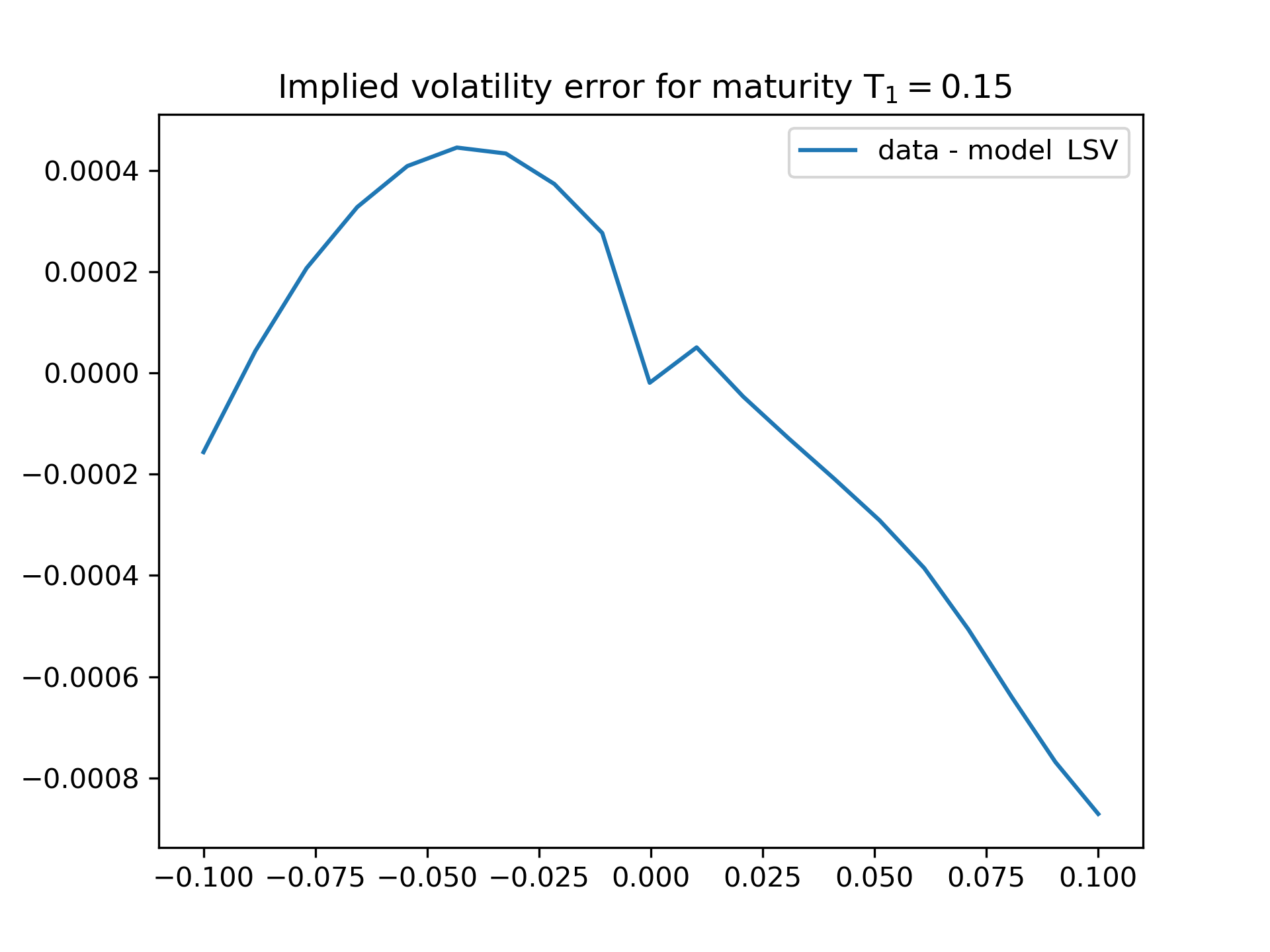}
  \includegraphics[trim=.8cm 0.4cm 1.6cm 0.7cm,
    clip,width=0.44\textwidth]{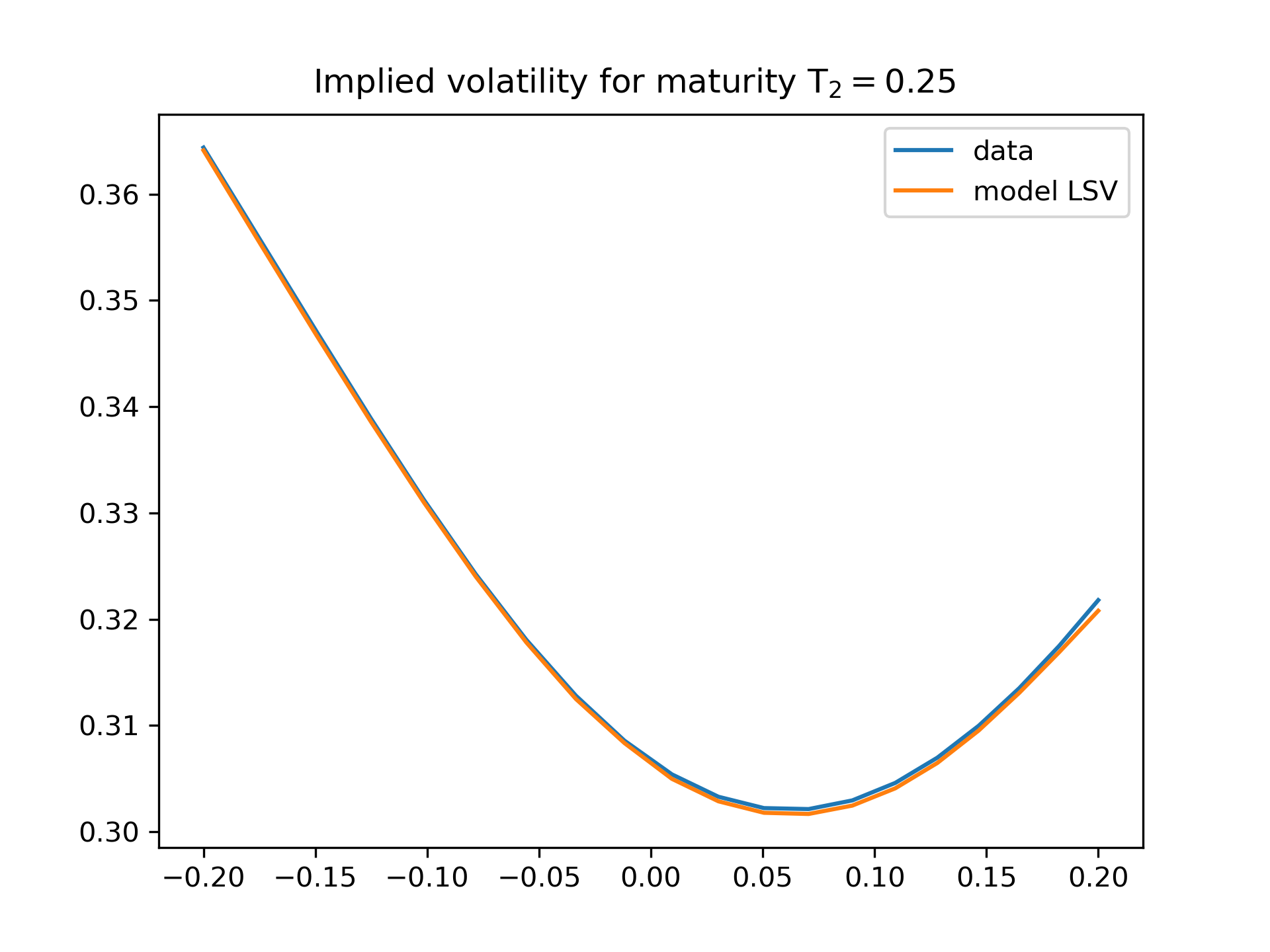} 
  \includegraphics[trim=0.3cm 0.4cm 1.6cm 0.7cm,
    clip,width=0.44\textwidth]{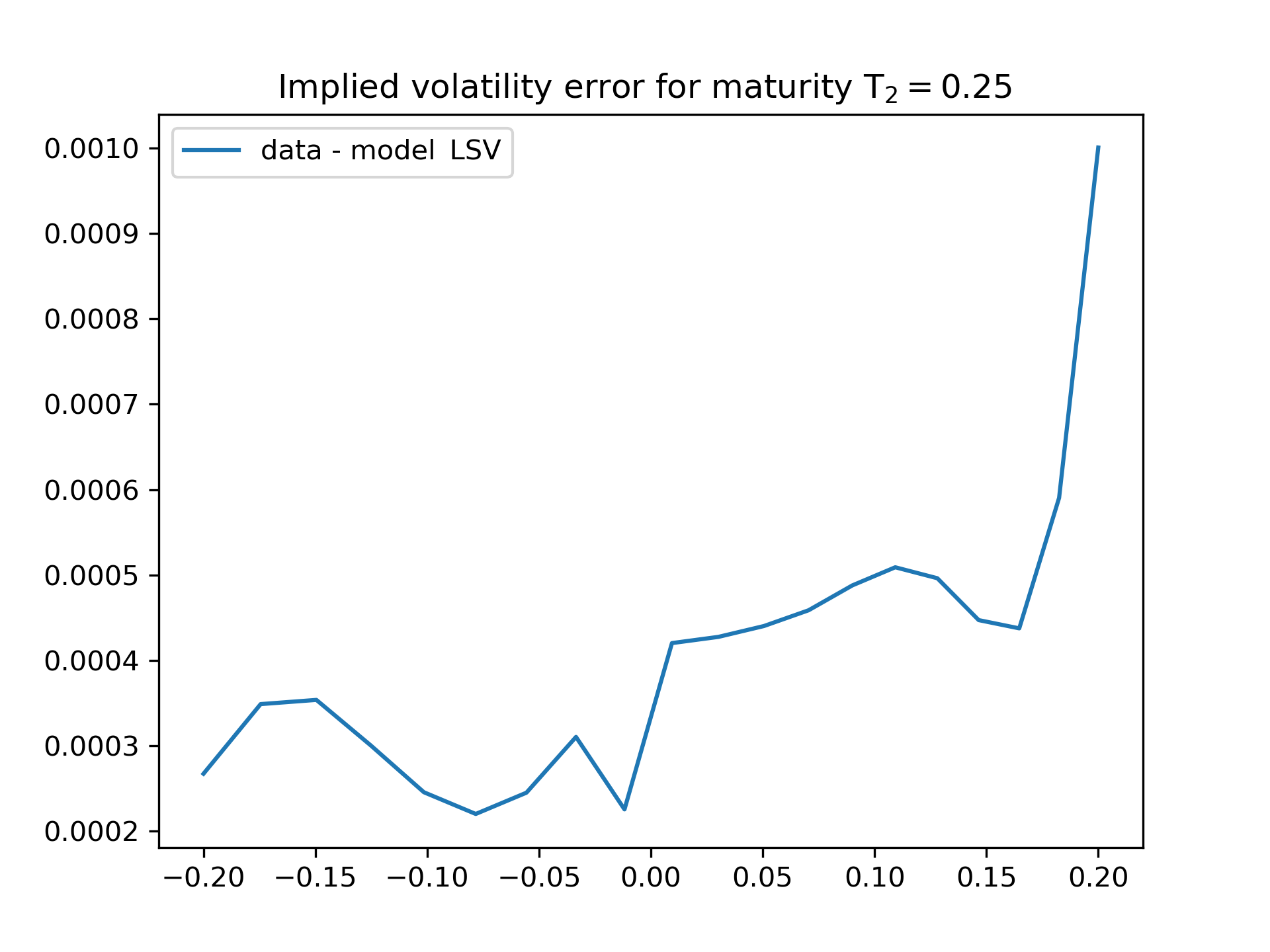}
  \includegraphics[trim=.8cm 0.4cm 1.6cm 0.7cm,
    clip,width=0.44\textwidth]{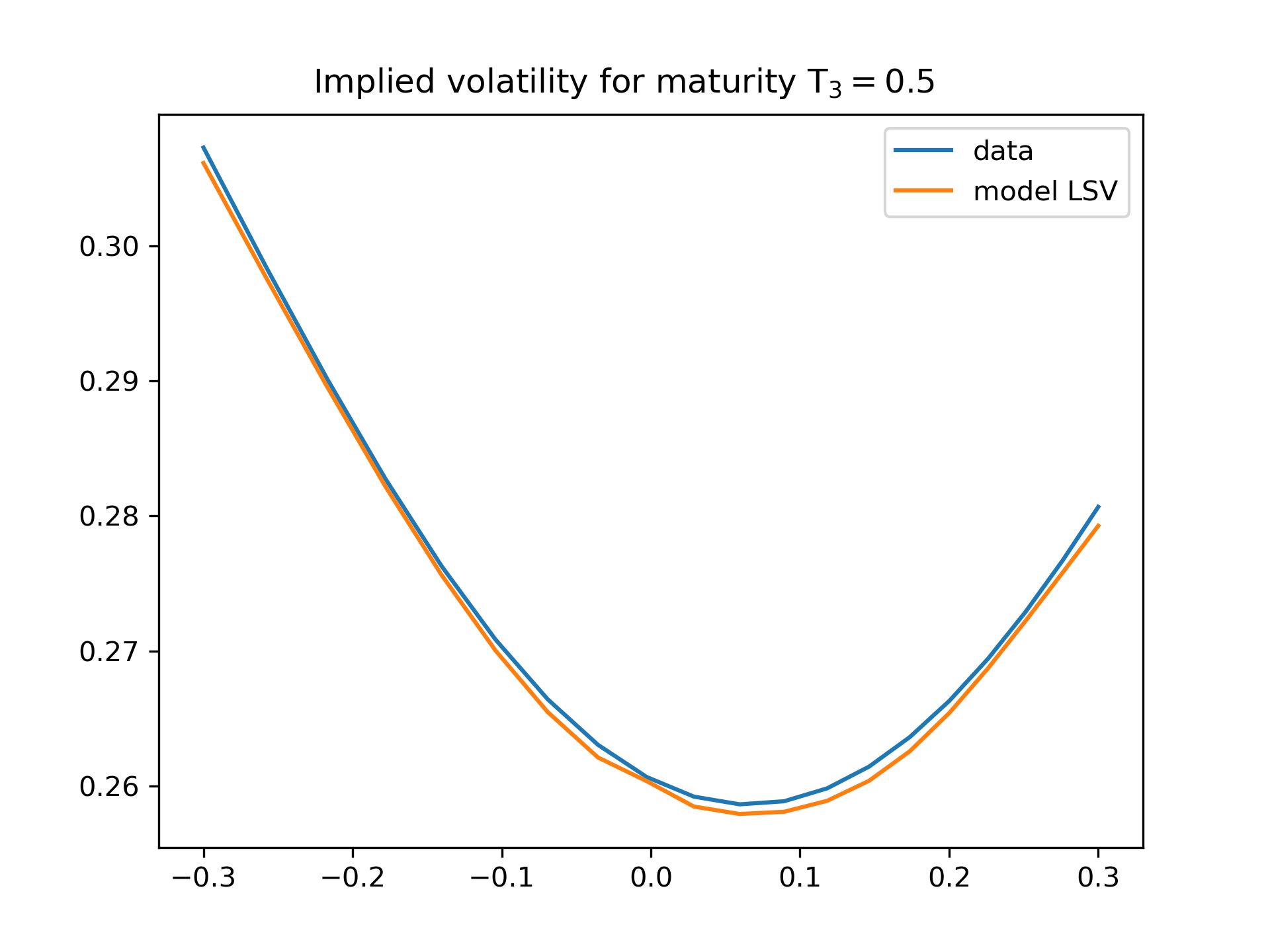} 
  \includegraphics[trim=0.3cm 0.4cm 1.6cm 0.7cm,
    clip,width=0.44\textwidth]{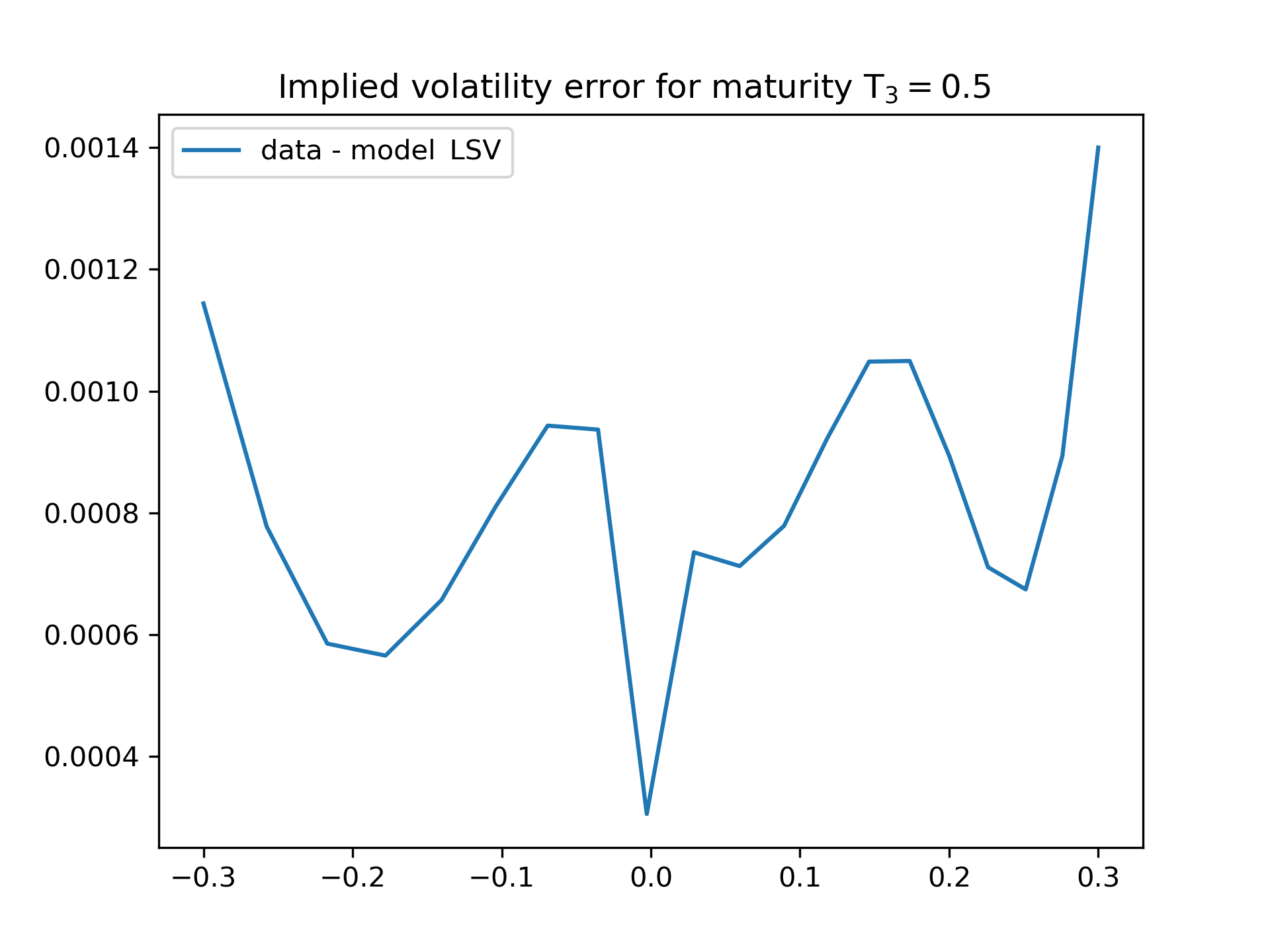}
  \includegraphics[trim=.8cm 0.4cm 1.6cm 0.7cm,
    clip,width=0.44\textwidth]{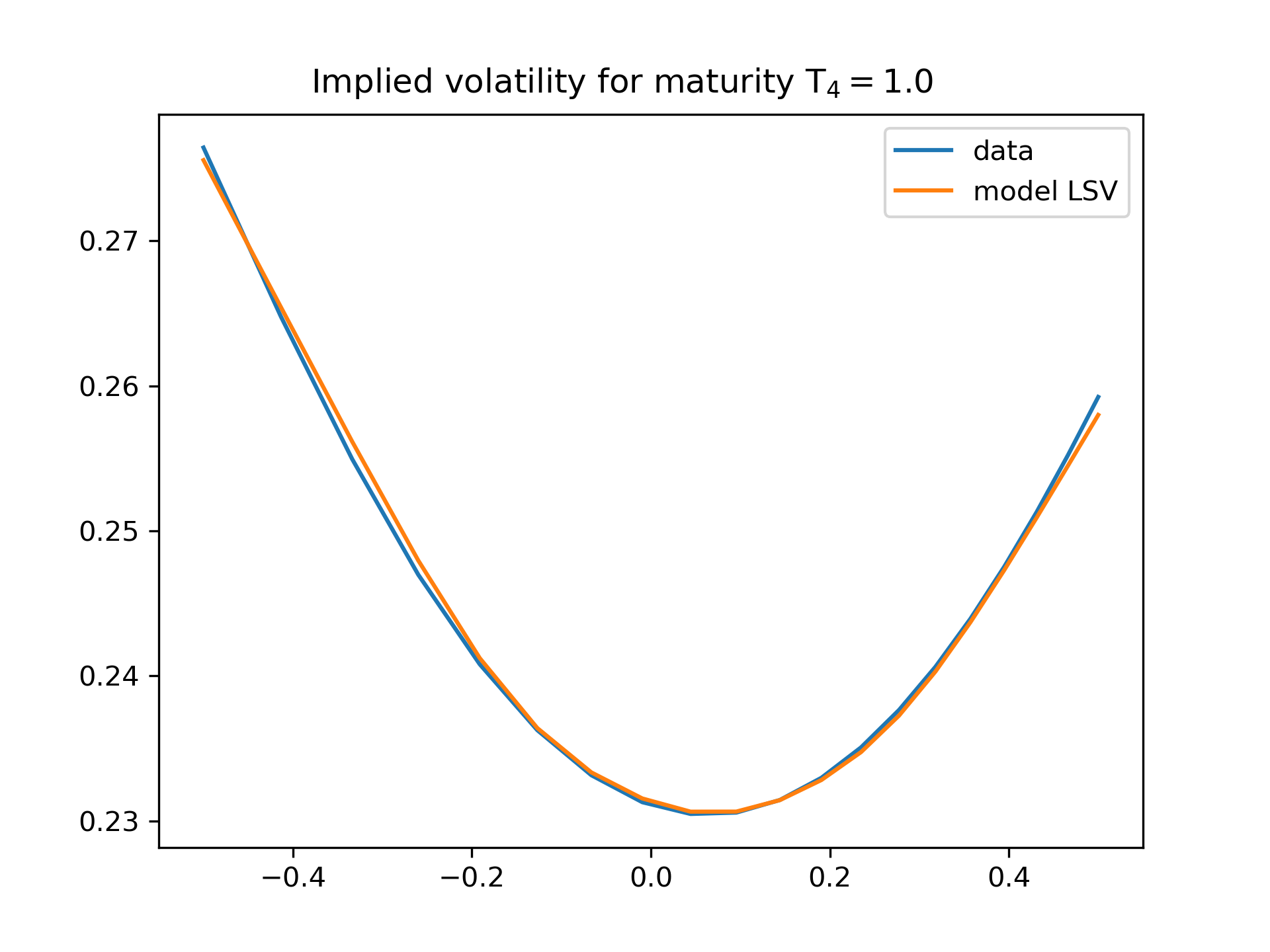} 
  \includegraphics[trim=0.3cm 0.4cm 1.6cm 0.7cm,
    clip,width=0.44\textwidth]{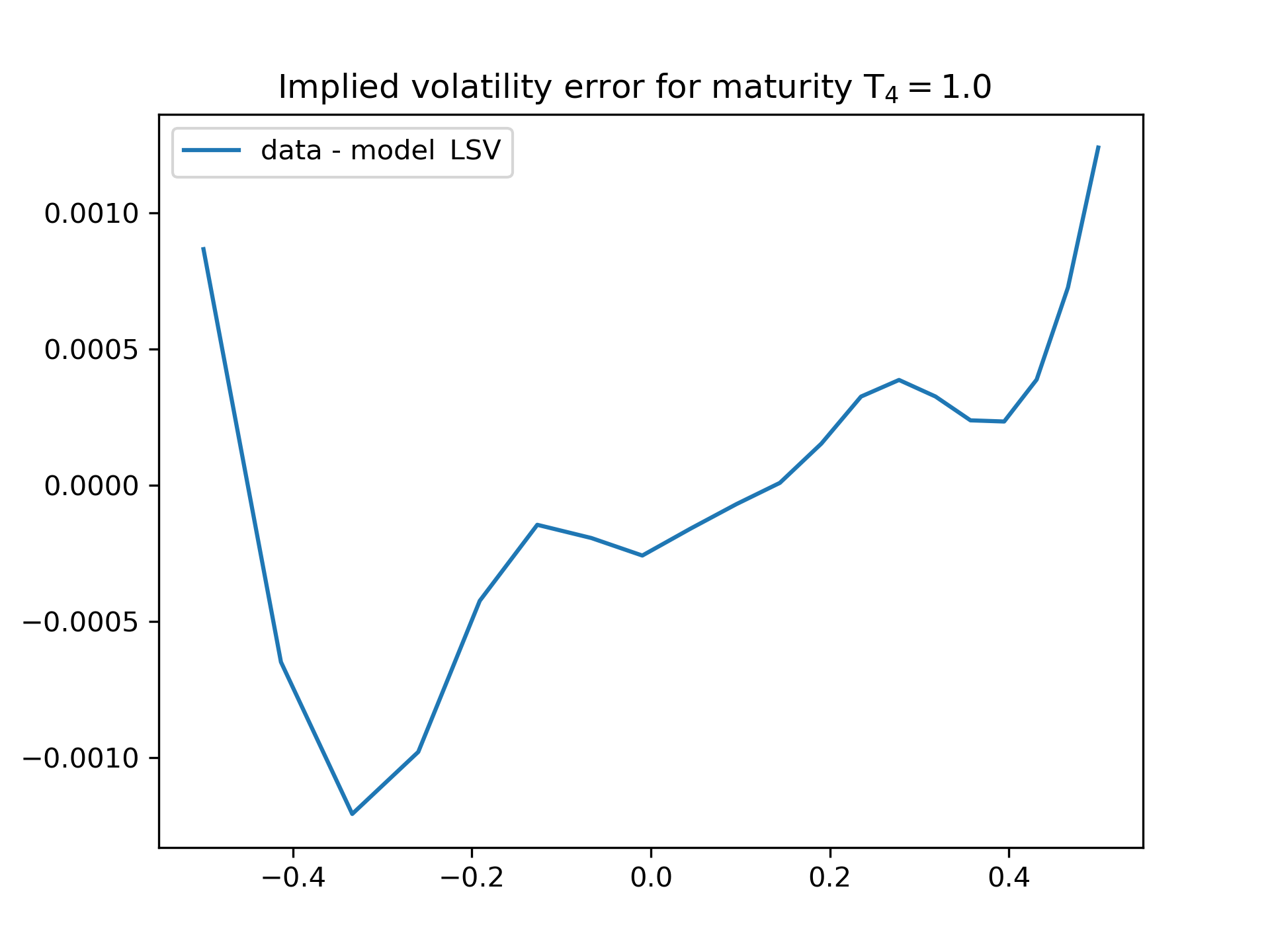}
  \caption{Left column: implied volatilities for the calibrated model together with the data (synthetic market) implied volatilities for a typical example of a synthetic market sample for all available
  maturities. Right column: calibration errors by subtracting model implied volatilities from the data implied volatilities. The $x$-axis is given in log-moneyness $\ln(K/S_0)$.}
  \label{fig:plot1} 
\end{figure}

  \begin{figure}[H]
    \centering
    \includegraphics[width=1\textwidth]{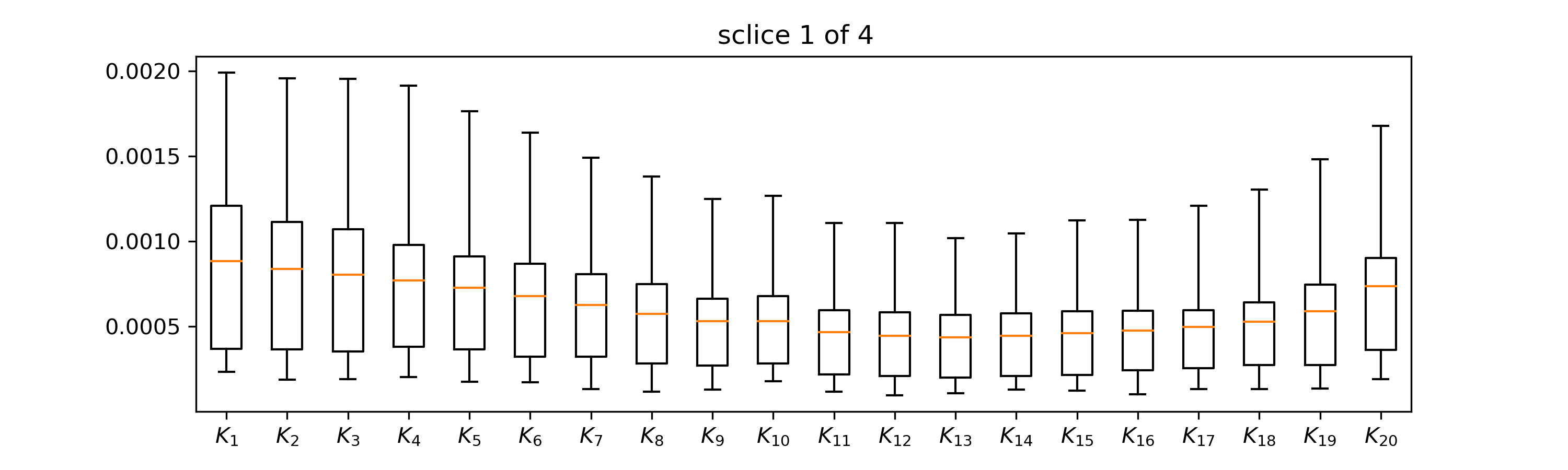} 
    \includegraphics[trim=.0cm 0.0cm 0cm .0cm, clip,
      width=1\textwidth]{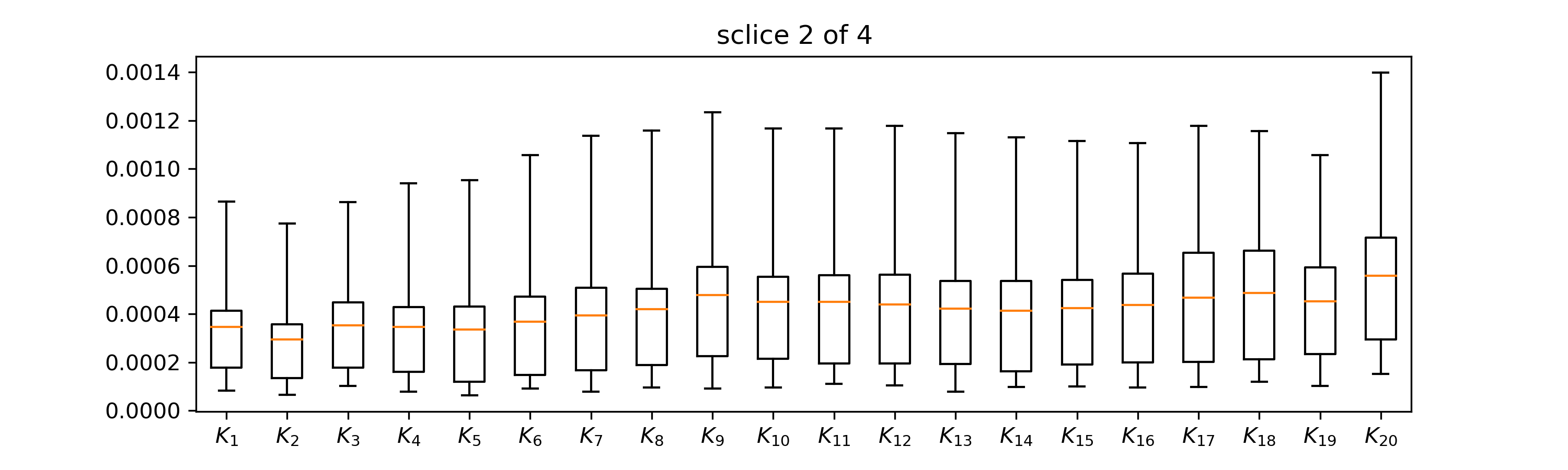}
    \includegraphics[width=1\textwidth]{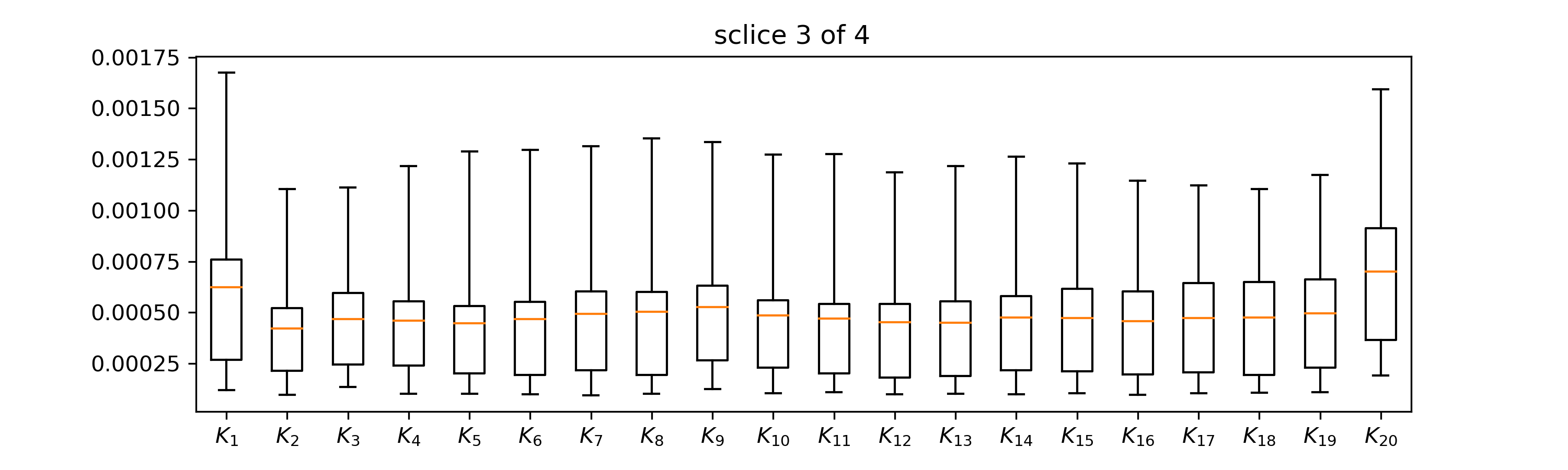}
    \includegraphics[width=1\textwidth]{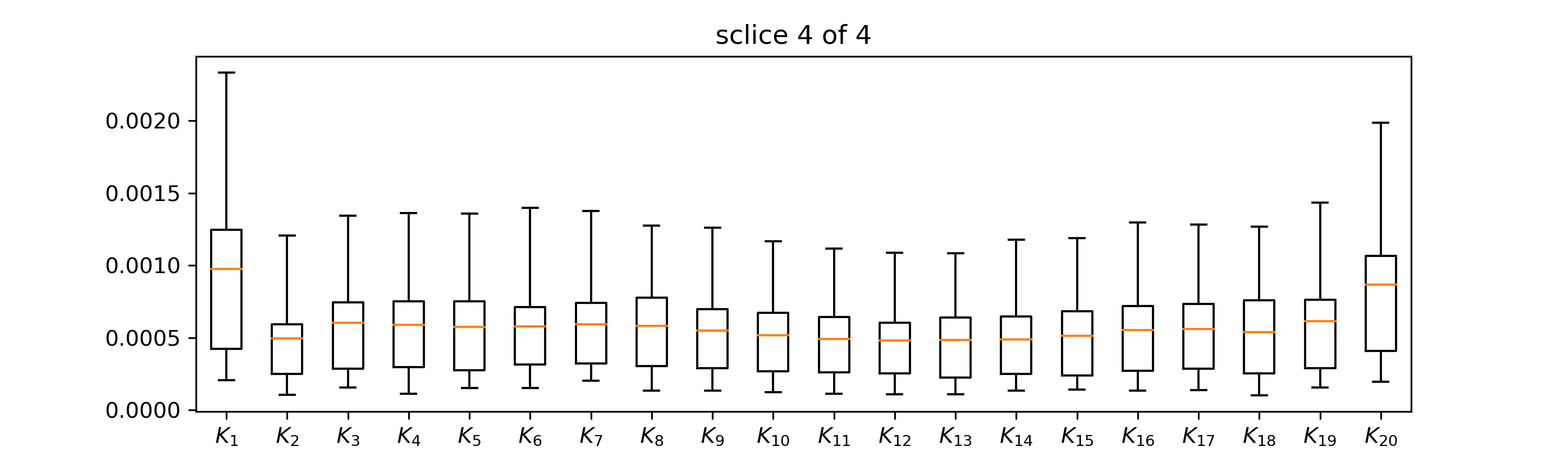}

    \caption{Boxplots of absolute calibration errors of implied volatilities for the statistical test as specified in Section~\ref{sec:performance test}
      for the four synthetic market data slices (maturities). 
      The errors for $K_j$ in the $i$-th row correspond to the calibration error of the synthetic market implied volatility for strike $K_{i,j}$ .
      Depicted are the mean (horizontal line), as well as
      the $0.95,0.70,0.3,0.15$ quantiles for the absolute calibration error per strike. }
    \label{fig: boxplots}
  \end{figure}

\begin{figure}[H]
  \centering
  \includegraphics[trim=.8cm 0.1cm 1.6cm .0cm,
    clip,width=0.46\textwidth]{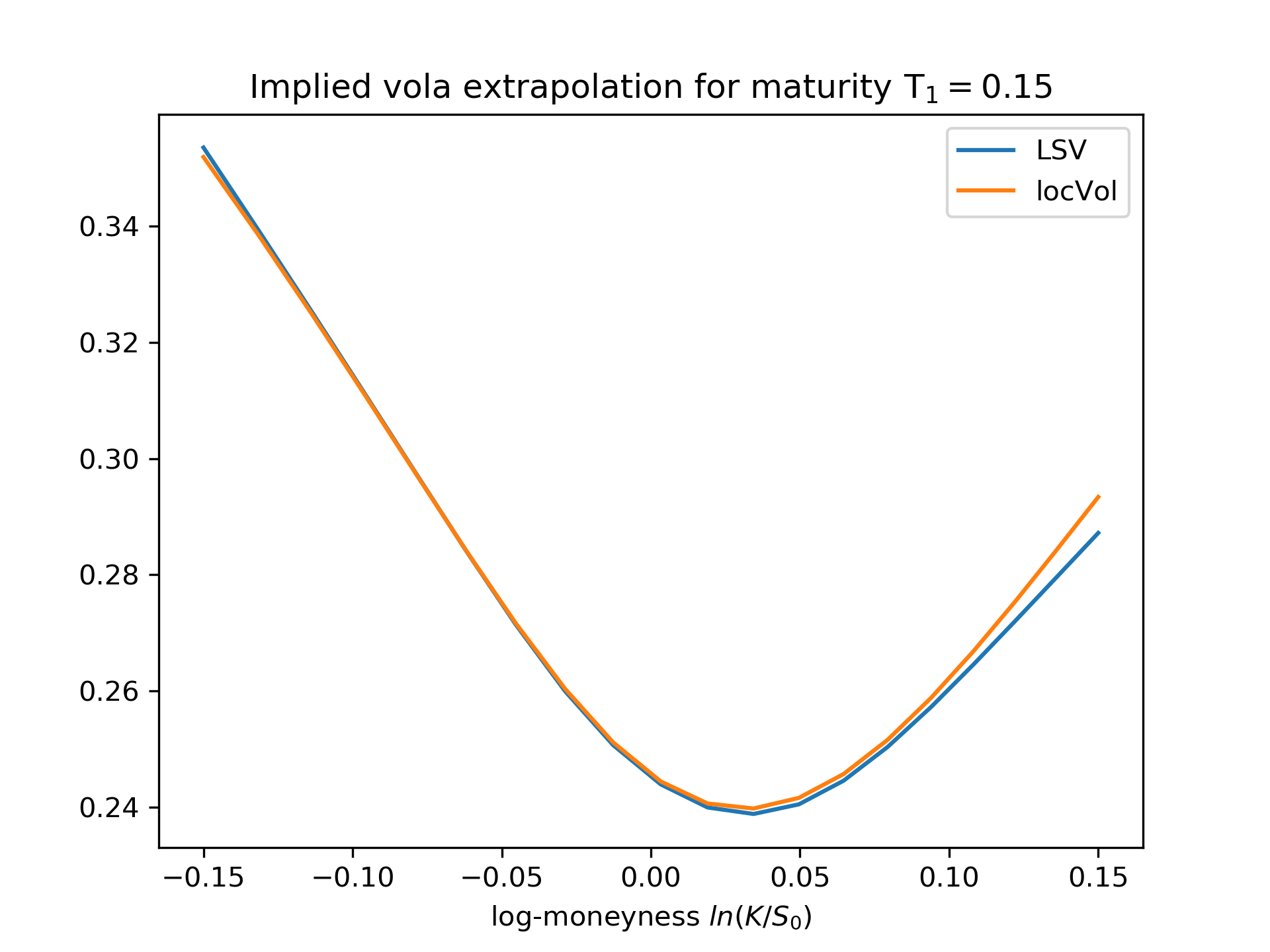}
  \includegraphics[trim=0.8cm 0.1cm 1.6cm .0cm,
    clip,width=0.46\textwidth]{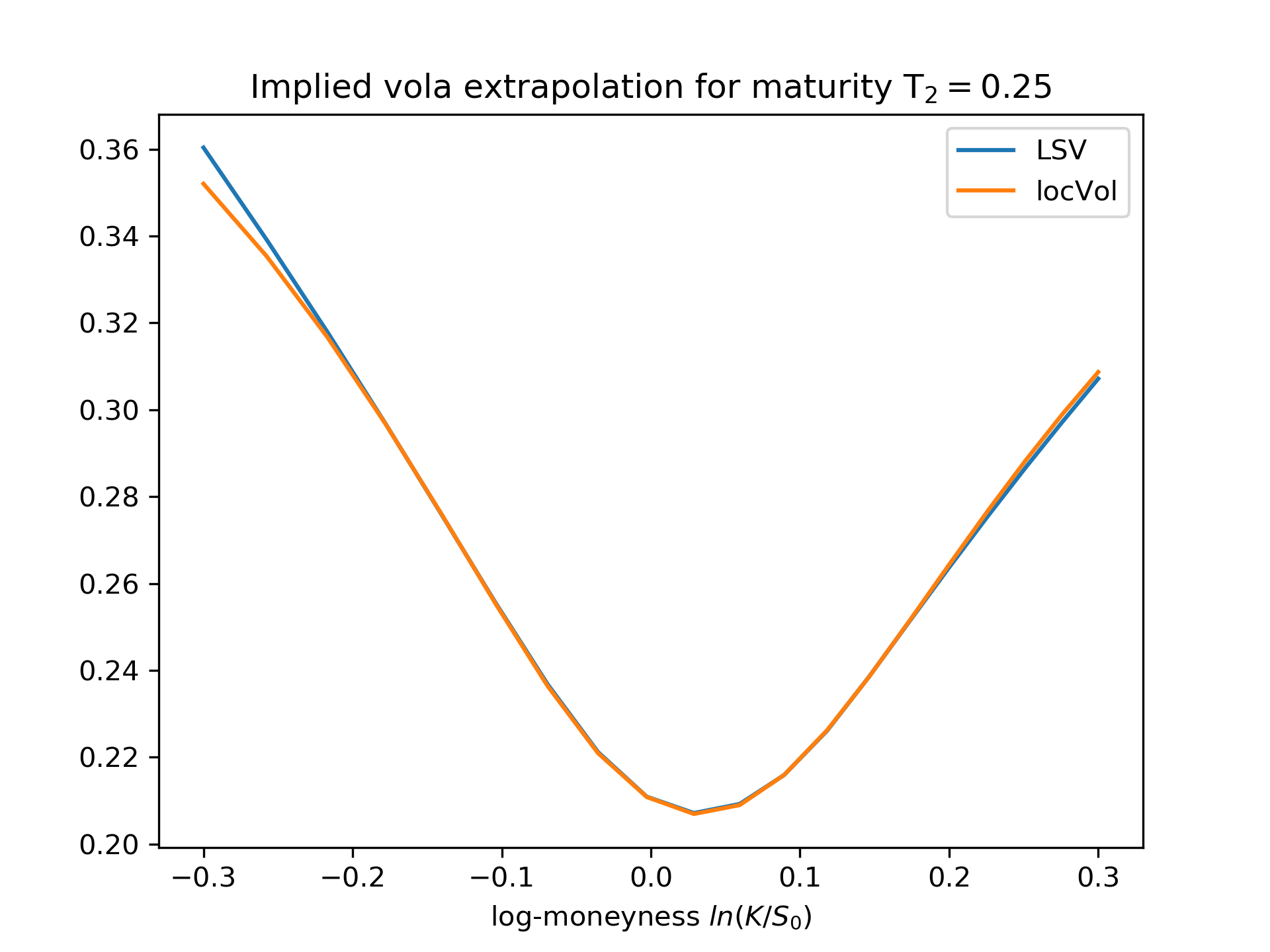}

    \includegraphics[trim=.8cm 0.1cm 1.6cm .0cm,
    clip,width=0.46\textwidth]{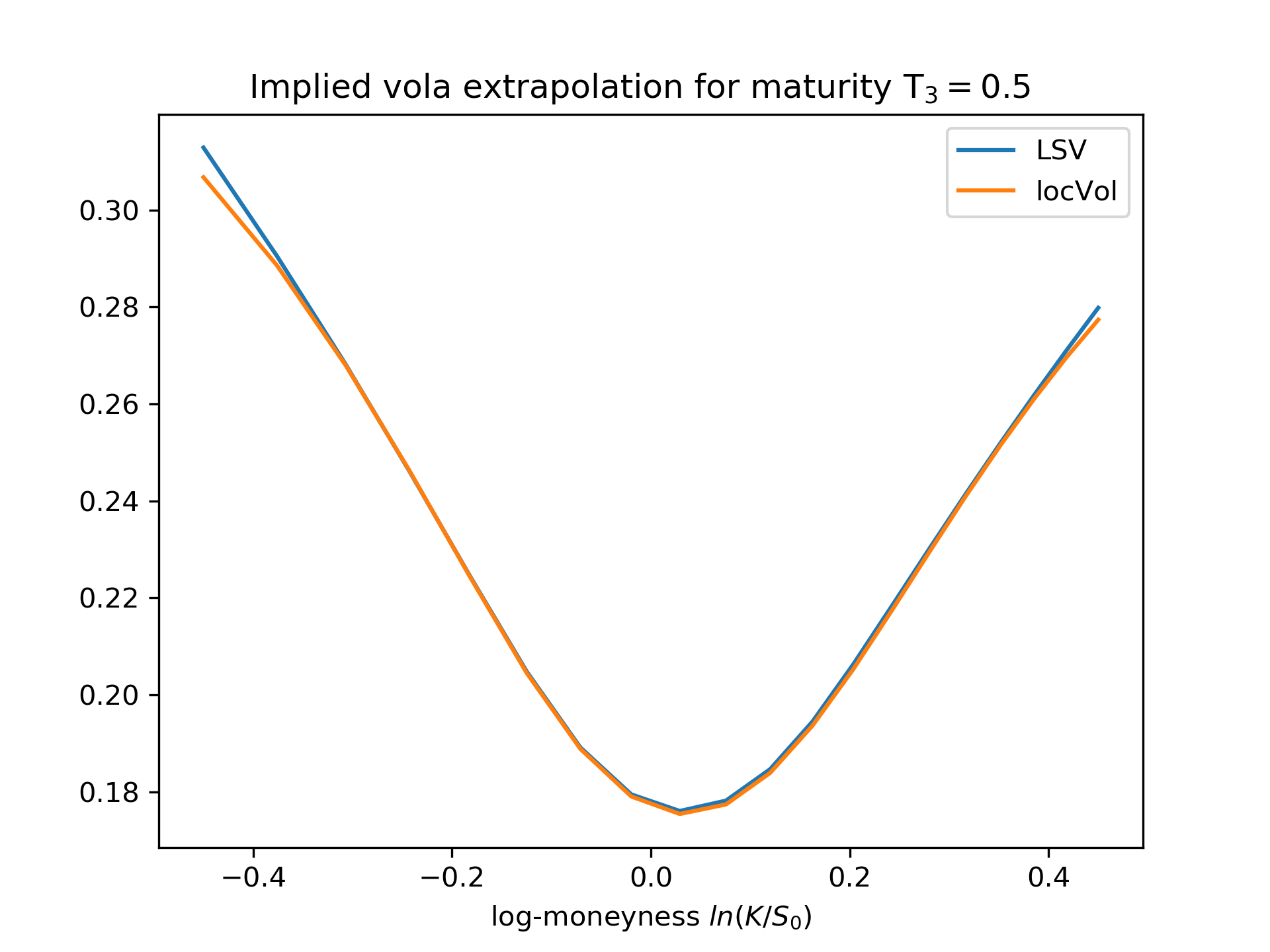} 
  \includegraphics[trim=0.8cm 0.1cm 1.6cm .0cm,
    clip,width=0.46\textwidth]{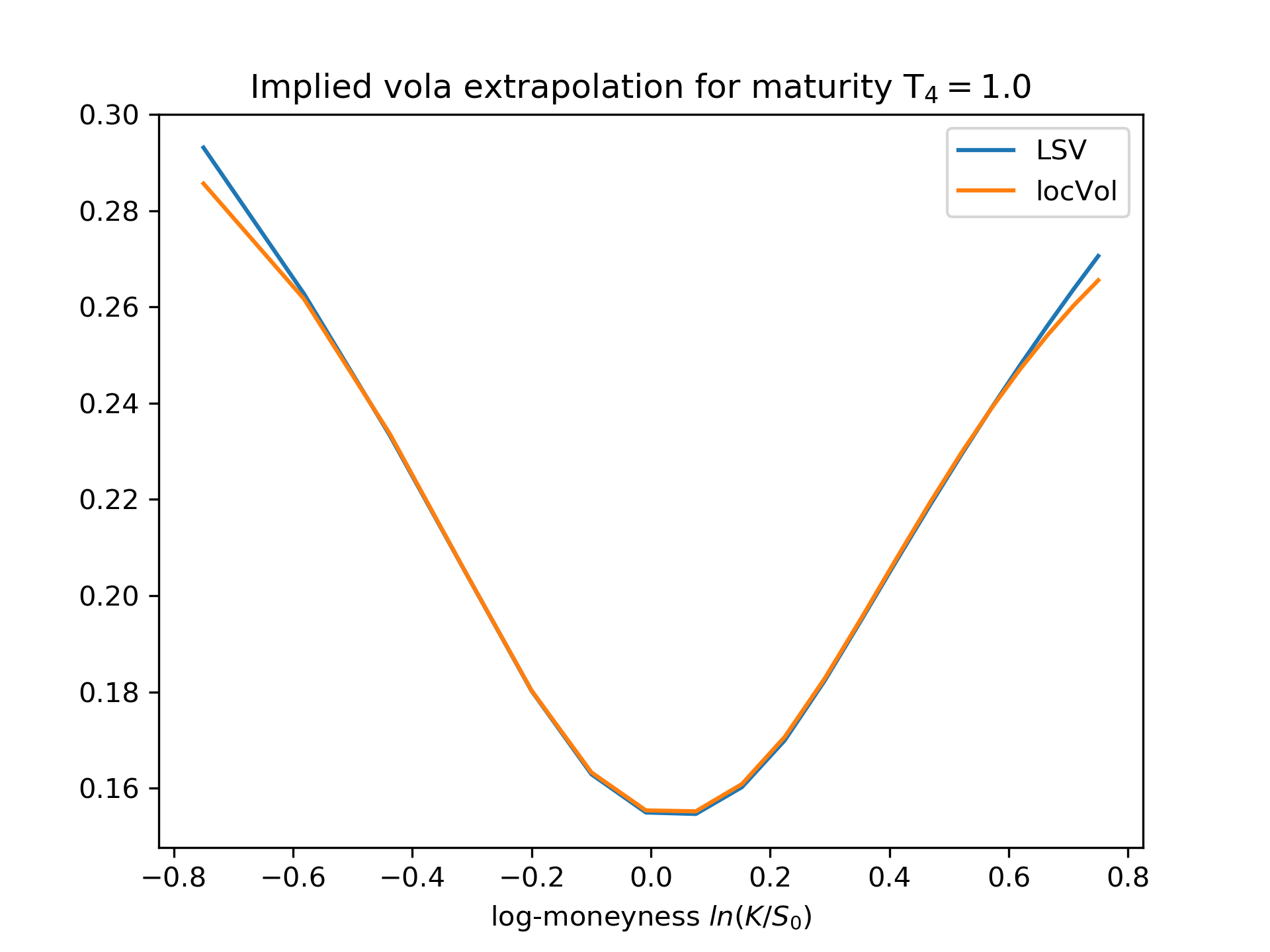}

  \caption{Extra- and interpolation as described in Section \ref{sec:test} between the synthetic prices of the ground truth assumption against the corresponding calibrated SABR-LSV model. Plots are shown for all four considered maturities $\{T_1,\ldots,T_4\}$ as defined in Figure \ref{fig:matsAndStrikes}a. The $x$-axis is given in log-moneyness $\ln(K/S_0)$.}
   \label{fig:extra}

\end{figure}
\unskip
\begin{figure}[H]
  \centering \includegraphics[trim=.8cm 0.2cm 1.6cm .9cm,
    clip,width=0.46\textwidth]{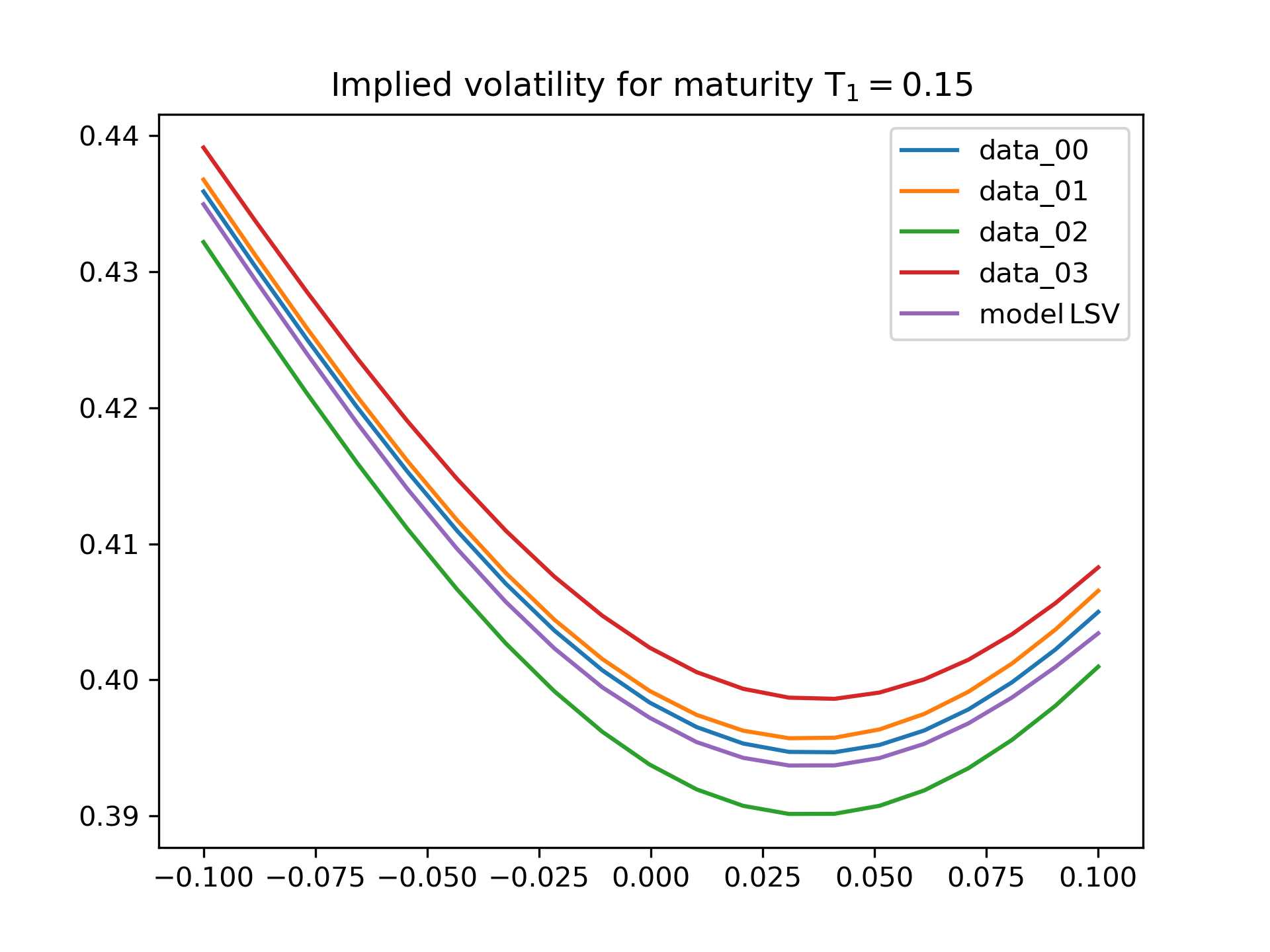}
  \includegraphics[trim=0.3cm 0.2cm 1.6cm .9cm,
    clip,width=0.48\textwidth]{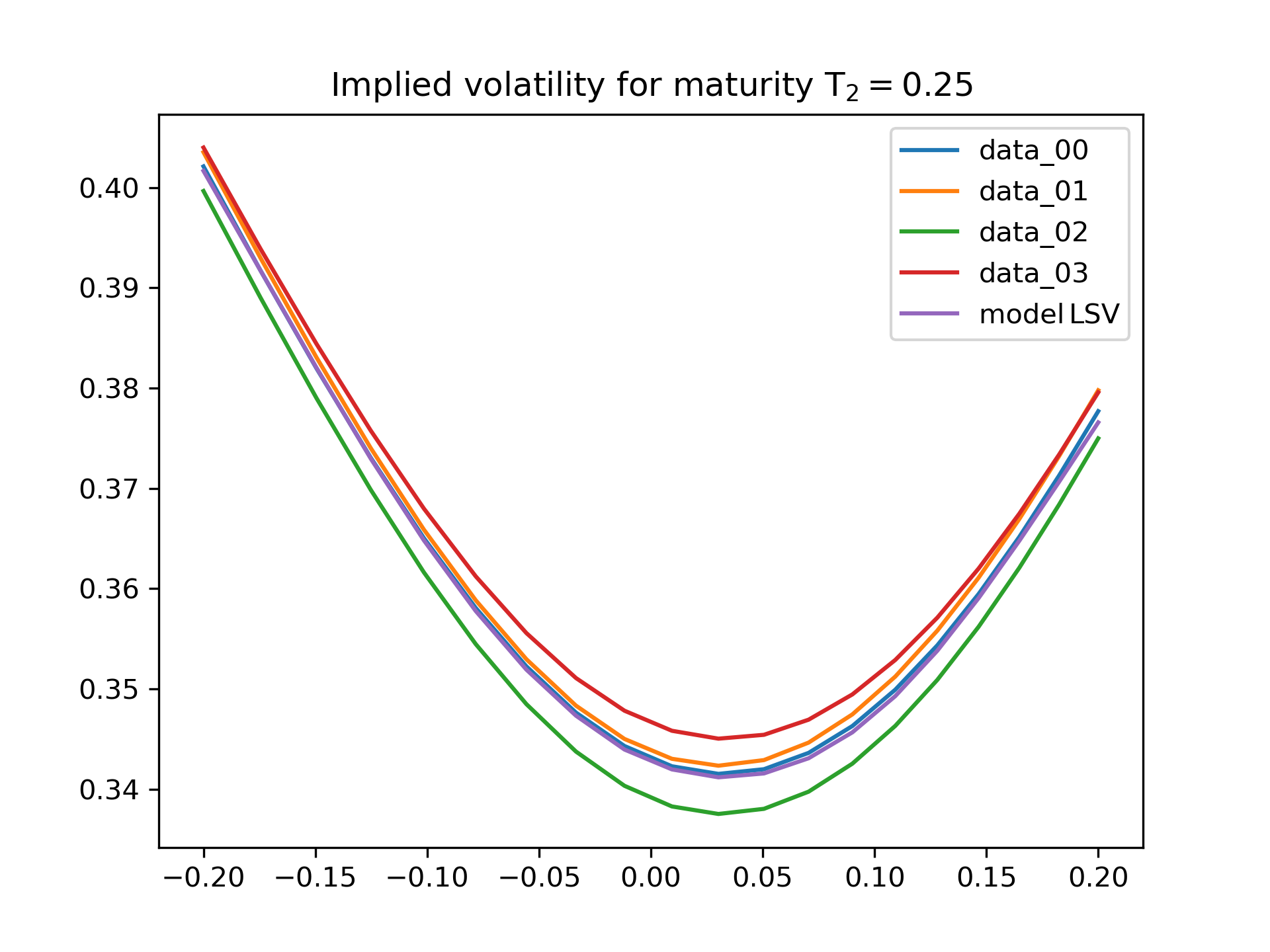}
  \includegraphics[trim=.8cm 0.2cm 1.6cm .9cm,
    clip,width=0.46\textwidth]{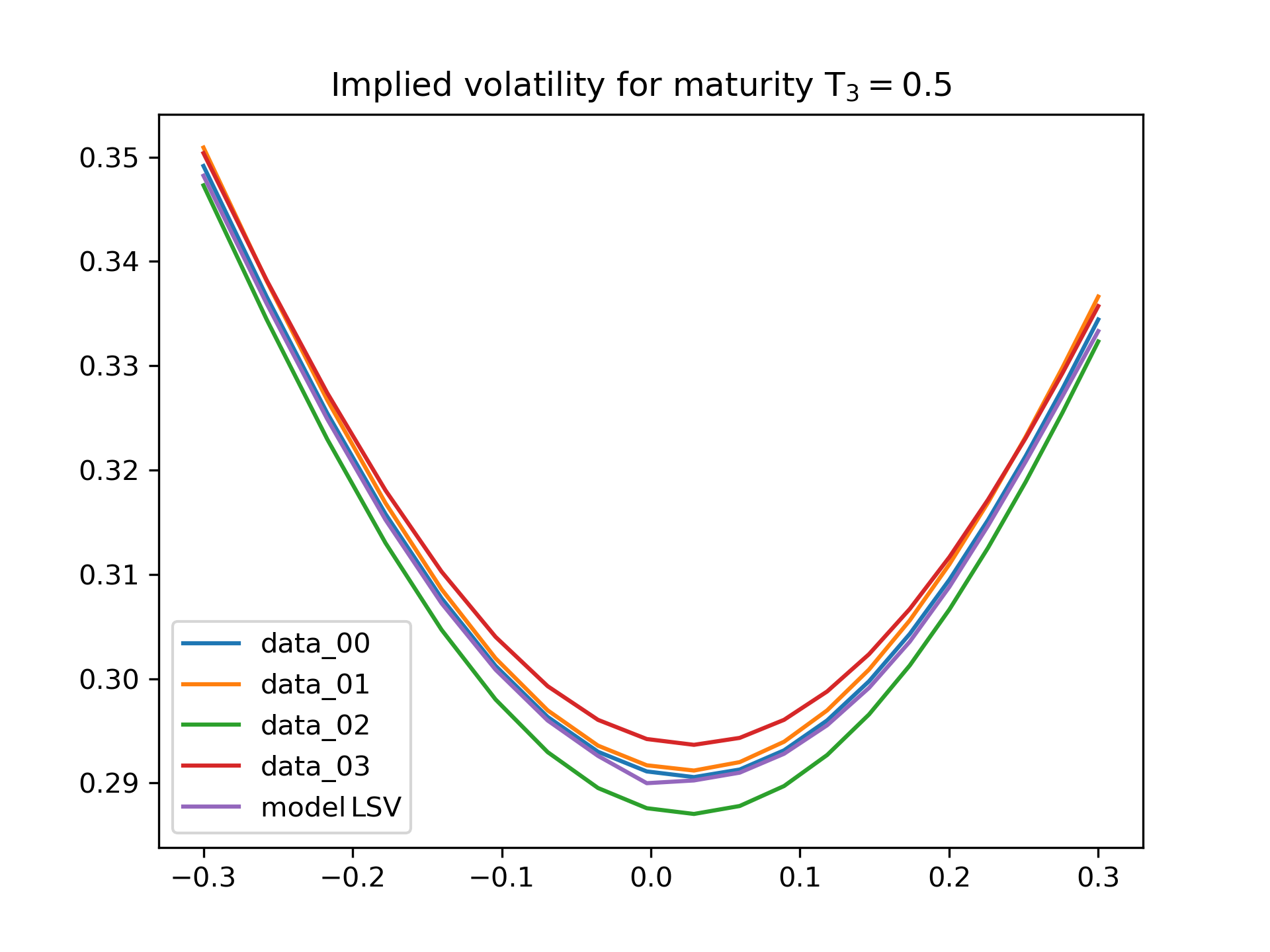}
  \includegraphics[trim=0.3cm 0.2cm 1.6cm .9cm,
    clip,width=0.48\textwidth]{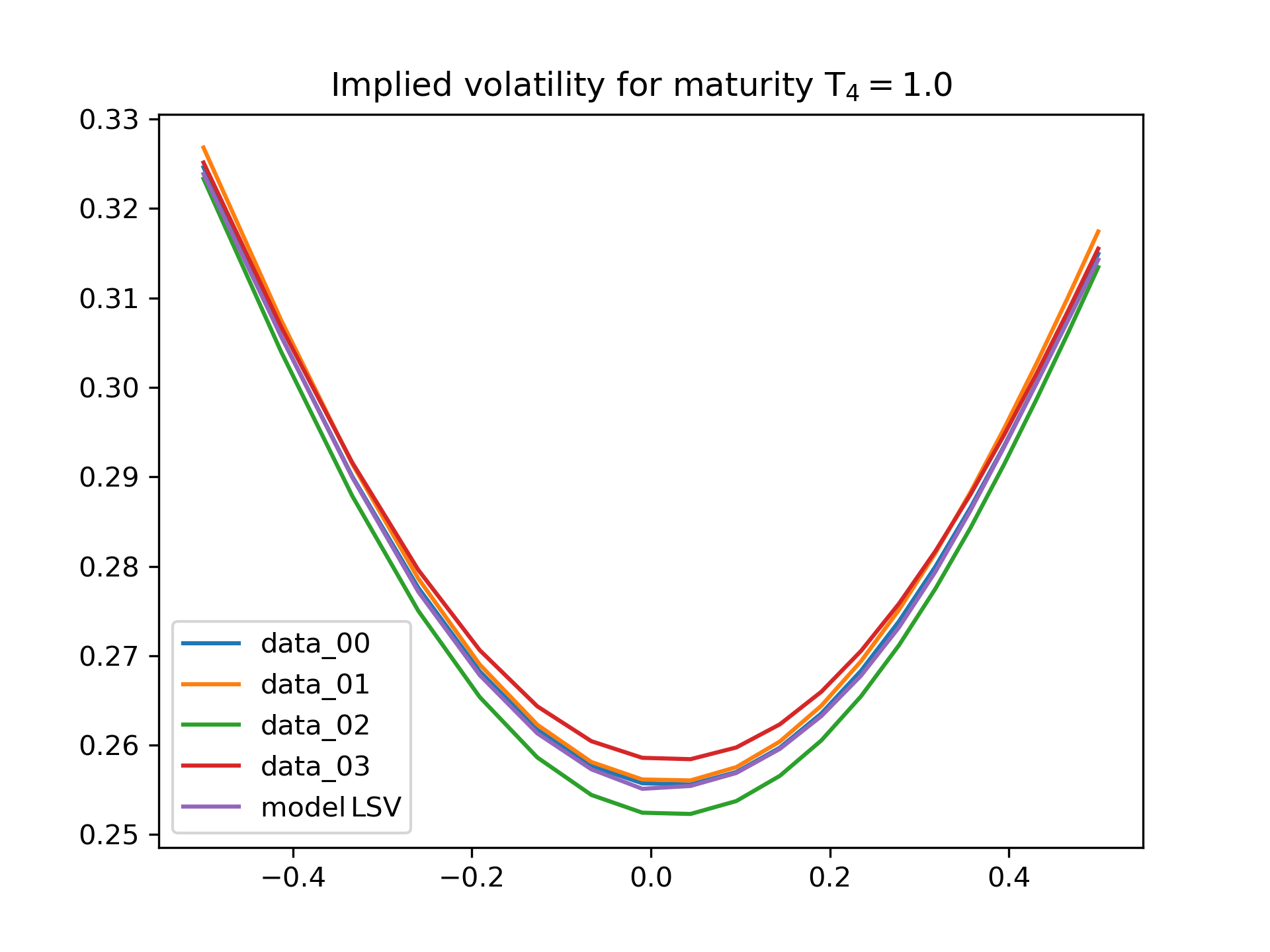}
  \caption{Robust calibration as described in Section \ref{sec:robust} for all four maturities, the $x$-axis is given in log-moneyness $\ln(K/S_0)$.}
  \label{fig:plotrob}
\end{figure}

\section{Plots} \label{chap: plots and histograms}

This section contains the relevant plots for the numerical test
outlined in Section \ref{sec:numerics}. 

\section{Conclusions}

We have demonstrated how the parametrization by means of neural
networks can be used to calibrate local stochastic volatility models
to implied volatility data. We make the following remarks:
\begin{enumerate}

  \item The method we presented does not require any form of
        interpolation for the implied volatility surface since we do not
        calibrate via  Dupire's formula. As the interpolation is usually done ad hoc, this might be a
        desirable feature of our method.
  \item Similar to \cite{GH:11, GH:13}, it is possible to ``plug in'' any stochastic variance process 
        such as rough volatility processes as long as an efficient
        simulation of trajectories is possible.
  \item The multivariate extension is straight forward.
  \item The level of accuracy of the calibration result is of a very
        high degree. The average error in our statistical test is of around 5 to 10 basis points, which is an interesting feature in its own right. We~also observe good  extrapolation and generalization properties of the calibrated leverage~function.
  \item The method can be significantly accelerated by applying
        distributed computation methods in the context of multi-GPU
        computational concepts.
  \item
        The presented algorithm is further able to deal with path-dependent options since all computations are done by means of Monte
        Carlo simulations.
  \item We can also consider the instantaneous variance process of the price process as short end of a forward variance process, which is assumed to follow (under appropriate assumptions) a neural SDE. This setting, as an infinite-dimensional version of the aforementioned ``multivariate'' setting, then qualifies for joint calibration to S\&P and  VIX
   options. This is investigated in a companion~paper.
  \item We stress again the advantages of the generative  adversarial network point of view. 
  We believe that this is a crucial feature  in  the joint calibration of S\&P and  VIX options.
\end{enumerate}

\appendix
\section{Variations of Stochastic Differential Equations}\label{sec:SDE}

We follow here the excellent exposition  of
\cite{pro:90} to understand the dependence of solutions of
stochastic differential equations on parameters, in particular when
we aim to calculate derivatives with respect to parameters of neural
networks.

Let us denote by $ \mathbb{D} $ the set of real-valued, c\`adl\`ag,
adapted processes on a given stochastic basis
$(\Omega,\mathcal{F},\mathbb{Q})$ with a filtration (satisfying
usual conditions). By $\mathbb{D}^n $ we denote the set of
$\mathbb{R}^n$-valued, c\`adl\`ag, adapted processes on the same
basis.

\begin{Definition}
  An operator $ F $ from $\mathbb{D}^n$ to $ \mathbb{D} $ is called
  \emph{functional Lipschitz} if for any $ X,Y \in \mathbb{D}^n $
  \begin{enumerate}
    \item the property
          $ X^{\tau-}=Y^{\tau-} $ implies $ F(X)^{\tau-} = F(Y)^{\tau-} $ for any stopping time $ \tau$,
    \item there exists an increasing process $ {(K_t)}_{t \geq 0}$
          such that for $t \geq 0$
          \[
            \| F(X)_t - F(Y)_t \| \leq K_t \sup_{r \leq t} \| X_r - Y_r \|.
          \]
  \end{enumerate}
\end{Definition}

Functional Lipschitz assumptions are sufficient to obtain existence
and uniqueness for general stochastic differential equations, see
\cite[Theorem V 7]{pro:90}.

\begin{Theorem}\label{e_u:thm}
  Let $ Y=(Y^1,\ldots,Y^d) $ be a vector of semimartingales starting
  at $ Y_0 = 0 $,  $ (J^1,\ldots,J^n)  \in \mathbb{D}^n $
  a vector of processes and let $ F^i_j $,
  $ i =1,\ldots,n$, $j=1,\ldots,d$ be functionally Lipschitz operators. Then there is a unique process
  $ Z \in \mathbb{D}^n$ satisfying
  \[
    Z^i_t = J^i_t + \sum_{j=1}^d \int_0^t F^i_j(Z)_{s-} dY^j_s
  \]
  for $ t \geq 0 $ and $ i =1,\ldots,n $. If $ J $ is a
  semimartingale, then $ Z $ is a semimartingale as well.
\end{Theorem}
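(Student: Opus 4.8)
The plan is to realize the solution as the unique fixed point of the map $\Phi \colon \mathbb{D}^n \to \mathbb{D}^n$ given by
\[
  \Phi(Z)^i_t = J^i_t + \sum_{j=1}^d \int_0^t F^i_j(Z)_{s-}\, dY^j_s, \qquad i = 1,\ldots,n,
\]
and to obtain it through the Picard iteration $Z^{(0)} = J$, $Z^{(m+1)} = \Phi(Z^{(m)})$. The essential point is to turn the functional Lipschitz bound into a genuine contraction in a complete space of processes. First I would fix the functional-analytic framework: equip the semimartingales with the $\mathcal{H}^p$-norm (for definiteness $p=2$), i.e.\ $\|X\|_{\mathcal{H}^2} = \inf(\|[M,M]_\infty^{1/2}\|_{L^2} + \|\int_0^\infty |dA_s|\|_{L^2})$ over decompositions $X = M + A$, and measure the integrand processes in the corresponding sup-type norm $\|H\|_{\mathcal{S}} = \|\sup_t |H_t|\|$.

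The key a priori estimate is the \'Emery/Protter stability inequality bounding a stochastic integral in the semimartingale norm, schematically $\|H \bullet Y\|_{\mathcal{H}^2} \le c\,\|H\|_{\mathcal{S}}\,\|Y\|_{\mathcal{H}^2}$ for c\`agl\`ad predictable $H$. Combining this with property (ii) of the functional Lipschitz definition, which gives $\|F^i_j(X)_s - F^i_j(\widetilde X)_s\| \le K_s \sup_{r \le s}\|X_r - \widetilde X_r\|$, I would derive a bound of the form
\[
  \|\Phi(X) - \Phi(\widetilde X)\|_{\mathcal{S}} \le C\,\|Y\|_{\mathcal{H}^2}\, \sup_{r}\|X_r - \widetilde X_r\|.
\]
Since $C\|Y\|_{\mathcal{H}^2}$ need not be below one, to obtain a contraction I would stop $Y$ at a stopping time $T_1$ chosen so that the stopped semimartingale $Y^{T_1-}$ has $\mathcal{H}^2$-norm small enough that $C\|Y^{T_1-}\|_{\mathcal{H}^2} < 1$. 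On this first piece the Banach fixed point theorem then yields a unique local solution.

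Next I would remove the smallness restriction by localization and pasting. Because $Y$ is a semimartingale, one can choose an increasing sequence of stopping times $T_k \uparrow \infty$ such that each stopped increment has small $\mathcal{H}^2$-norm; solving successively on the intervals between consecutive $T_k$ (using the already-constructed piece as the new initial condition) and concatenating produces a solution on all of $[0,\infty)$. Uniqueness propagates from the local case: two global solutions agree on each piece by the contraction estimate, and property (i) together with a Gronwall argument applied to $t \mapsto \sup_{r\le t}\|Z_r - \widetilde Z_r\|$ precludes any later divergence. Finally, the semimartingale claim is immediate: if $J$ is a semimartingale, the right-hand side of the defining equation is a sum of $J^i$ and stochastic integrals of the c\`agl\`ad integrands $F^i_j(Z)_{s-}$ against the semimartingales $Y^j$, hence a semimartingale.

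The step I expect to be the main obstacle is precisely the contraction in the correct topology. The functional Lipschitz ``constant'' is only an increasing process $(K_t)_{t\ge0}$ rather than a number, so a naive Picard argument under the uniform norm on a fixed time horizon does not close; one genuinely needs the $\mathcal{H}^p$/\'Emery machinery together with the stopping-time reduction to convert the growth of $K_t$ and of $Y$ into a strict contraction, followed by the careful patching to globalize. This is exactly the content of \cite[Theorem V.7]{pro:90}, on which we rely.
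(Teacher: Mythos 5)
Your proposal is correct and follows essentially the same route as the paper, which does not prove this statement itself but cites it directly as \cite[Theorem V 7]{pro:90}; your sketch (Picard iteration made contractive in the $\mathcal{H}^2$/\'Emery framework, localization by stopping times to control $\|Y\|_{\mathcal{H}^2}$ and the growth of $K_t$, pasting, and the immediate semimartingale conclusion) is a faithful outline of Protter's argument.
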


With an additional uniformity assumption on a sequence of stochastic
differential equations with converging coefficients and  initial data  we obtain stability, see
\cite[Theorem V 15]{pro:90}.

\begin{Theorem}\label{sta:thm}
  Let $ Y=(Y^{1},\ldots,Y^{d}) $ be vector of semimartingales
  starting at $ Y_0 = 0 $. Consider for $\varepsilon \geq 0$, a vector of processes  $ (J^{\varepsilon,1},\ldots,J^{\varepsilon,n})  \in \mathbb{D}^n$ and functionally
  Lipschitz operators $ F^{\varepsilon,i}_j $ for $ i =1,\ldots,n$,
  $j=1,\ldots,d$. Then, for $\varepsilon \geq 0$, there is a
  unique process $ Z^{\varepsilon} \in \mathbb{D}^n$ satisfying
  \[
    Z^{\varepsilon,i}_t = J^{\varepsilon,i}_t + \sum_{j=1}^d \int_0^t
    F^{\varepsilon,i}_j(Z^{\varepsilon})_{s-} dY^j_s
  \]
  for $ t \geq 0 $ and  $ i =1,\ldots,n $. If $ J^{\varepsilon} \to J^0 $ in ucp,
  $ F^{\varepsilon}(Z^0) \to F^{0}(Z^0) $ in ucp, then
  $ Z^{\varepsilon} \to Z^0 $ in ucp.
\end{Theorem}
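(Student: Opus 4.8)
The plan is to reduce the statement to a stability estimate for the difference process, since existence and uniqueness of each $Z^\varepsilon$ is already guaranteed by Theorem \ref{e_u:thm}; only the convergence requires proof. Writing $D^\varepsilon := Z^\varepsilon - Z^0$ and subtracting the defining equations gives, coordinatewise,
\[
D^{\varepsilon,i}_t = (J^{\varepsilon,i}_t - J^{0,i}_t) + \sum_{j=1}^d \int_0^t \big(F^{\varepsilon,i}_j(Z^\varepsilon)_{s-} - F^{0,i}_j(Z^0)_{s-}\big)\, dY^j_s .
\]
First I would split the integrand as $\big(F^{\varepsilon,i}_j(Z^\varepsilon)_{s-} - F^{\varepsilon,i}_j(Z^0)_{s-}\big) + \big(F^{\varepsilon,i}_j(Z^0)_{s-} - F^{0,i}_j(Z^0)_{s-}\big)$. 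The second bracket, together with $J^\varepsilon - J^0$, assembles a ``forcing'' term $A^\varepsilon$ not involving $D^\varepsilon$. By hypothesis $J^\varepsilon \to J^0$ and $F^\varepsilon(Z^0) \to F^0(Z^0)$ in ucp, and since stochastic integration against the fixed semimartingales $Y^j$ is continuous for the ucp topology (the very fact used in the proof of Theorem \ref{thm:variation_hedges}), I conclude $A^\varepsilon \to 0$ in ucp. The first bracket is bounded by the functional Lipschitz property, $\|F^{\varepsilon,i}_j(Z^\varepsilon)_{s-} - F^{\varepsilon,i}_j(Z^0)_{s-}\| \le K_s \sup_{r\le s}\|D^\varepsilon_r\|$, with a common Lipschitz process $K$ (uniform in $\varepsilon$, as is the case in all applications here, where the operators arise from neural networks with uniformly bounded Lipschitz derivatives).

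Thus $D^\varepsilon$ satisfies a linear-type equation $D^\varepsilon = A^\varepsilon + (H^\varepsilon \bullet Y)$ with $H^\varepsilon$ functionally Lipschitz in $D^\varepsilon$ and $A^\varepsilon \to 0$ in ucp, so the goal reduces to showing that the solution map of this equation is continuous in ucp at the origin. Because a deterministic Gronwall argument is unavailable for stochastic integrals, the key step is a localized a priori estimate: I would pick a stopping time $\tau$ --- depending only on the common driver $Y$, hence the same for all $\varepsilon$ --- so small that, on $[0,\tau]$, the size of $Y$ (through its quadratic variation and the total variation of its finite-variation part) makes the map $H \mapsto H\bullet Y$ a strict contraction in the norm $\|X\|_{\mathcal{S}^2([0,\tau])} := \big(\mathbb{E}[\sup_{s\le\tau}\|X_s\|^2]\big)^{1/2}$, via the M\'etivier--Pellaumail inequality. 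This produces an estimate $\|D^\varepsilon\|_{\mathcal{S}^2([0,\tau])} \le C\,\|A^\varepsilon\|_{\mathcal{S}^2([0,\tau])}$ with $C$ independent of $\varepsilon$.

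To globalize, I would first prelocalize the semimartingales $Y^j$ into $\mathcal{H}^2$ by another stopping time, so that $[0,T]$ is covered by finitely many consecutive intervals of the above kind. Restarting the equation at each endpoint, with the current value of $D^\varepsilon$ as initial datum, and chaining the contraction estimate propagates the smallness of $A^\varepsilon$ to $D^\varepsilon$ over all of $[0,T]$, yielding $\mathcal{S}^2$-convergence there and hence ucp-convergence on $[0,T]$. Letting the localizing stopping times increase to infinity removes the prelocalization and delivers $Z^\varepsilon \to Z^0$ in ucp on every compact horizon, which is the assertion; the bookkeeping mirrors that in \cite[Chapter V]{pro:90}.

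I expect the main obstacle to be precisely this a priori estimate: closing the implicit feedback of $D^\varepsilon$ through the functionally Lipschitz integrand \emph{uniformly in} $\varepsilon$. This is where deterministic Gronwall fails and one must instead rely on the M\'etivier--Pellaumail inequality together with a localization shrinking the integrator, and where a common Lipschitz process is indispensable; a secondary technical point is the passage from the $\mathcal{S}^2$ (i.e.\ $L^2$) bounds back to genuine ucp convergence, which is handled by letting the localizing times tend to infinity.
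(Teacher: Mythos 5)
Your argument is correct and is essentially the proof the paper relies on: the paper gives no proof of Theorem \ref{sta:thm} itself but cites it as Theorem V.15 of \cite{pro:90}, and your decomposition into a vanishing forcing term plus a functionally Lipschitz feedback in $D^\varepsilon$, closed by prelocalizing $Y$ into $\mathcal{H}^2$ and chaining a M\'etivier--Pellaumail contraction estimate over small stochastic intervals, is exactly Protter's argument. Your remark that the Lipschitz processes must be controlled uniformly in $\varepsilon$ identifies a hypothesis of the cited theorem that the paper's restatement leaves implicit, and you are right that it is satisfied in all of the paper's applications, where the operators come from neural networks with uniformly bounded Lipschitz constants.
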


\begin{Remark}\label{rem:appl}
  We shall apply these theorems to a local stochastic volatility model of the form
  \[
    d S_t(\theta) = S_t(\theta) L(t,S_t (\theta), \theta) \alpha_t dW_t \, ,
  \]
  where $\theta \in \Theta$,  $ (W,\alpha) $ denotes some  Brownian motion together with an adapted,
  c\`adl\`ag stochastic process $ \alpha $ (all on a given stochastic
  basis) and $S_0 > 0 $ is some real number.

  We assume that for each $\theta \in \Theta$
  \begin{align}\label{ass1:gen}
    (t,s) \mapsto L(t,s, \theta)
  \end{align}
  is bounded, c\`adl\`ag in $t$ (for fixed $ s > 0 $), and globally
  Lipschitz in $ s $ with a Lipschitz constant independent of $ t $ on
  compact intervals . In this case, the map
  \[
    S \mapsto S_{\cdot} L(\cdot , S_{\cdot}, \theta)
  \]
  is functionally Lipschitz and therefore the above equation has a
  unique solution for all times $t$ and any $ \theta $ by Theorem
  \ref{e_u:thm}. If, additionally,
  \begin{align}\label{ass2:gen}
    \lim_{\theta \to \widehat{\theta}}\, \sup_{(t,s)} | L(t,s , \theta) - L(t,s , \widehat{\theta})|=0,
  \end{align}
  where the $\sup$ is taken over some compact set, then
  we also have that the solutions $ S(\theta) $ converge ucp to
  $ S(\widehat{\theta}) $, as $ \theta \to \widehat{\theta} $ by Theorem \ref{sta:thm}.
\end{Remark}

\section{Preliminaries on Deep Learning}\label{sec:NN}

We shall here briefly introduce two core concepts in deep learning,
namely \emph{artificial neural networks} and \emph{stochastic
  gradient descent}. The latter is a widely used optimization method
for solving maximization or minimization problems involving the
first. In standard machine-learning terminology, the optimization
procedure is usually referred to as ``training''. We shall use both
terminologies~interchangeably.

\subsection{Artificial Neural Networks}

We start with the definition of \emph{feed-forward neural networks.} 
These are functions obtained by composing layers consisting of an
affine map and a componentwise nonlinearity.  They serve as
universal approximation class which is stated in Theorem
\ref{th:universalapprox}.  Moreover,
derivatives of these functions can be efficiently expressed
iteratively (see e.g.~\cite{hecht1992theory}), which is a desirable
feature from an optimization point of view.

\begin{Definition}
  Let $M, N_0, N_1, \ldots, N_M \in \mathbb{N}$,
  $\phi: \mathbb{R} \to \mathbb{R}$ and for any
  $m \in \{1, \ldots, M\}$, let
  $w_{m}: \mathbb{R}^{N_{m-1}} \to \mathbb{R}^{N_{m}},\, x \mapsto
    A_{m} x + b_{m}$ be an affine function with
  $A_{m} \in \mathbb{R}^{N_{m} \times N_{m-1}} $ and
  $b_{m} \in \mathbb{R}^{N_{m}}$. A function
  $\mathbb{R}^{N_0} \to \mathbb{R}^{N_{M}}$ defined as
  \[
    F(x) = w_M \circ F_{M-1} \circ \cdots \circ F_{1}, \quad
    \text{with } F_{m}=\phi \circ w_{m} \quad \text{for } m \in \{1,
    \ldots, M-1\}
  \]
  is called a \emph{feed-forward neural network}.  Here the
  \emph{activation function} $\phi$ is applied componentwise. $M-1$
  denotes the number of hidden layers and $N_1, \ldots, N_{M-1}$
  denote the dimensions of the hidden layers and $N_0$ and $N_M$ the
  dimension of the input and output layers.
\end{Definition}

\begin{Remark}
 Unless otherwise stated, the activation functions $\phi$ used in this article are always
  assumed to be smooth, globally bounded with bounded first
  derivative.
\end{Remark}

The following version of the so-called \emph{universal approximation
  theorem} is due to K.~Hornik (\mbox{\citealp{H:91}}). An earlier version was
proved by G.~Cybenko (\mbox{\citealp{C:92}}). To formulate the result, we denote
the set of all feed-forward neural networks with activation function
$\phi$, input dimension $N_0$ and output dimension $N_M$ by
$\mathcal{NN}^{\phi}_{\infty, N_0, N_M}$.

\begin{Theorem}[Hornik (1991)]\label{th:universalapprox}
  Suppose $\phi$ is bounded and nonconstant. Then the following
  statements hold:
  \begin{enumerate}
    \item For any finite measure $\mu$ on
          $(\mathbb{R}^{N_0}, \mathcal{B}(\mathbb{R}^{N_0}))$ and
          $1 \leq p < \infty$, the set
          $\mathcal{NN}^{\phi}_{\infty, N_0, 1}$ is dense in
          $L^p(\mathbb{R}^{N_0},\mathcal{B}(\mathbb{R}^{N_0}), \mu)$.
    \item If in addition $\phi \in C(\mathbb{R}, \mathbb{R})$, then
          $\mathcal{NN}^{\phi}_{\infty, N_0, 1}$ is dense in
          $C(\mathbb{R}^{N_0}, \mathbb{R})$ for the topology of uniform
          convergence on compact sets.
  \end{enumerate}
\end{Theorem}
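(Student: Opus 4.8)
The plan is to prove density by the classical duality (Hahn--Banach) argument, reducing both statements to a single analytic fact about $\phi$. First I would observe that it suffices to work with the single-hidden-layer subclass $\Sigma(\phi) := \operatorname{span}\{x \mapsto \phi(\langle a, x\rangle + b) : a \in \mathbb{R}^{N_0},\, b \in \mathbb{R}\}$, since $\Sigma(\phi) \subseteq \mathcal{NN}^{\phi}_{\infty, N_0, 1}$ and density of the smaller set implies density of the larger. Because $\phi$ is bounded, every element of $\Sigma(\phi)$ indeed lies in $L^p(\mu)$ (for finite $\mu$) and, when $\phi$ is continuous, in $C(\mathbb{R}^{N_0})$, so the inclusion makes sense in both target spaces.

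Next I would invoke Hahn--Banach in the form: a linear subspace is dense iff the only continuous linear functional annihilating it is $0$. For statement (i), the dual of $L^p(\mu)$ is $L^q(\mu)$ ($1/p+1/q=1$; for $p=1$ take $L^\infty$), so an annihilating functional is given by some $g$ with $\int \phi(\langle a,x\rangle+b)\, g(x)\, d\mu(x)=0$ for all $a,b$; setting $d\nu := g\, d\mu$ produces a finite signed measure $\nu$. For statement (ii) it suffices to prove density in $C(K)$ for every compact $K \subset \mathbb{R}^{N_0}$, and by the Riesz representation theorem an annihilating functional is a finite signed (Radon) measure $\nu$ on $K$. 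In both cases the problem collapses to the single claim --- the \emph{discriminatory property} of $\phi$: if $\nu$ is a finite signed measure on $\mathbb{R}^{N_0}$ with $\int \phi(\langle a,x\rangle+b)\, d\nu(x)=0$ for all $a\in\mathbb{R}^{N_0}$, $b\in\mathbb{R}$, then $\nu=0$.

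To establish the discriminatory property I would pass to one-dimensional marginals and use Fourier analysis. Fixing $a$ and pushing $\nu$ forward under $x \mapsto \langle a,x\rangle$ yields a finite signed measure $\nu_a$ on $\mathbb{R}$ with $\int_{\mathbb{R}} \phi(u+b)\, d\nu_a(u)=0$ for all $b$; that is, the cross-correlation of $\phi$ with $\nu_a$ vanishes identically. Formally, taking Fourier transforms turns this into $\widehat{\phi}\cdot\widehat{\nu_a}=0$, and since a bounded \emph{nonconstant} $\phi$ cannot have Fourier transform supported at the single frequency $0$ (that would force $\phi$ to be a bounded polynomial, hence constant), there is some $\xi_0\neq0$ in the spectrum of $\phi$; at that frequency $\widehat{\nu_a}(\xi_0)=\widehat{\nu}(\xi_0 a)=0$. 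As $a$ ranges over $\mathbb{R}^{N_0}$ with $\xi_0\neq 0$ fixed, the points $\xi_0 a$ sweep out all of $\mathbb{R}^{N_0}$, so the continuous function $\widehat{\nu}$ vanishes identically and Fourier uniqueness gives $\nu=0$. Once discriminatory-ness is in hand, both (i) and (ii) follow from the duality reduction of the previous paragraph.

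The main obstacle is the rigorous justification of the Fourier step for a merely \emph{bounded} (not integrable) $\phi$, whose Fourier transform is only a tempered distribution, so that the product $\widehat{\phi}\cdot\widehat{\nu_a}$ is not literally defined ($\widehat{\nu_a}$ is continuous but need not be smooth). I would circumvent this by first mollifying: for $\psi\in C_c^\infty(\mathbb{R})$ the convolution $\phi*\psi$ is smooth and bounded, and Fubini shows it inherits the annihilation identity $\int (\phi*\psi)(\langle a,x\rangle+b)\, d\nu(x)=0$; choosing $\psi$ so that $\phi*\psi$ is still nonconstant (possible since $\phi$ is nonconstant and $\phi*\psi \to \phi$ along an approximate identity), one runs the spectral argument on the smooth, tempered $\phi*\psi$, for which pairing against the test functions built from $\widehat{\nu_a}$ is legitimate. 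Alternatively, one may avoid duality altogether by the Stone--Weierstrass route: first approximate the target by generalized trigonometric ridge sums $\sum_k c_k\cos(\langle a_k,x\rangle+b_k)$ (dense by Stone--Weierstrass in $C(K)$, hence in $L^p(\mu)$), and then approximate the univariate map $t\mapsto\cos t$ uniformly on compacts by sums $\sum_k\beta_k\phi(\gamma_k t+\delta_k)$ --- the one-dimensional instance of the theorem --- which recombines into an element of $\mathcal{NN}^{\phi}_{\infty,N_0,1}$. In either route the nonconstancy of $\phi$ is exactly what supplies the nonzero accessible frequency that makes the argument close.
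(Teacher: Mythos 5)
The paper does not prove this statement: it is quoted from Hornik (1991), with Cybenko (1989) cited for an earlier version, so there is no internal proof to compare against. Your overall architecture --- restrict to the single-hidden-layer span $\Sigma(\phi)$, apply Hahn--Banach together with $L^p$--$L^q$ duality (resp.\ the Riesz representation theorem on each compact $K$), and reduce both parts to the discriminatory property of $\phi$ --- is the standard Cybenko-style skeleton and is set up correctly, including the observation that $g\,d\mu$ is a finite signed measure because $\mu$ is finite.

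There is, however, a genuine gap at the one step where all the analytic content lives: the passage from $\phi * \check{\nu}_a \equiv 0$ to ``$\widehat{\nu_a}$ vanishes on the spectrum of $\phi$.'' You correctly flag that the product $\widehat{\phi}\cdot\widehat{\nu_a}$ is ill-defined, but the proposed repair --- mollifying $\phi$ to $\phi*\psi$ --- does not fix it: the obstruction is that $\widehat{\nu_a}$ is merely continuous while $\widehat{\phi*\psi}=\widehat{\phi}\,\widehat{\psi}$ is still a genuine tempered distribution (a sum of Dirac masses already for $\phi=\cos$), so the product is no better defined after smoothing $\phi$. What is actually needed is Wiener's Tauberian theorem, equivalently spectral synthesis at a single point: if $f\in L^\infty$, $\mu$ is a finite measure and $f*\mu=0$, then $\widehat{\mu}$ vanishes on $\operatorname{supp}\widehat{f}$. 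Without importing that (or an equivalent), the argument does not close. Your fallback Stone--Weierstrass route has the same hole in a different place: the step ``approximate $t\mapsto\cos t$ by $\sum_k\beta_k\phi(\gamma_k t+\delta_k)$'' is exactly the one-dimensional instance of the theorem, so it is a legitimate dimension reduction but not a proof. Two smaller points: your nonzero-frequency argument only rules out $\phi$ being Lebesgue-a.e.\ constant, so the hypothesis genuinely required is ``not a.e.\ constant'' (for $\phi=\mathbf{1}_{\mathbb{Q}}$ every generator is $\mu$-a.e.\ constant for $\mu$ absolutely continuous, and statement (i) fails as literally written --- an imprecision inherited from Hornik's formulation, not introduced by you); and in part (i), where $\phi$ is only bounded measurable, the measurability of $b\mapsto\int\phi(u+b)\,d\nu_a(u)$ should be checked before it is integrated against test functions.
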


Since each component of an $\mathbb{R}^{N_M}$-valued neural network is
an $\mathbb{R}$-valued neural network, this~result easily generalizes
to $\mathcal{NN}^{\phi}_{\infty, N_0, N_M}$ with $N_M >1$.

\begin{Notation}\label{not}
  We denote by $\mathcal{NN}_{N_0, N_M}$ the set of all neural
  networks in $\mathcal{NN}^{\phi}_{\infty, N_0, N_M}$ with a
  \emph{fixed architecture,} i.e.,~a fixed number of hidden layers
  $M-1$, fixed input and output dimensions $N_{m}$ for each hidden
  layer $m \in \{1, \ldots, M-1\}$ and a fixed activation function
  $\phi$.  This set can be described by
  \[
    \mathcal{NN}_{ N_0, N_M}=\{F(\cdot,\theta) \, |\, F \text{ feed
      forward neural network and } \theta \in \Theta\},
  \]
  with parameter space $\Theta \in \mathbb{R}^q$ for some
  $q \in \mathbb{N}$ and $\theta \in \Theta$ corresponding to the
  entries of the matrices $A_{m}$ and the vectors $b_{m}$ for
  $m \in \{1, \ldots, M\}$.
\end{Notation}

\subsection{Stochastic Gradient Descent}\label{sec:SDG}

In light of Theorem \ref{th:universalapprox}, it is clear that neural
networks can serve as function approximators. To~implement this, the
entries of the matrices $A_{m}$ and the vectors $b_{m}$ for
$m \in \{1, \ldots, M\}$ are subject to optimization. If the unknown
function can be expressed as the expected value of a stochastic
objective function, one widely applied optimization method is
\emph{stochastic gradient descent,} which we shall review~below.

Indeed, consider the following minimization problem
\begin{align}\label{eq:optimization}
  \min_{\theta \in \Theta} f(\theta) \quad \text{ with } \quad f(\theta)= \mathbb{E}[Q(\theta)]
\end{align}
where $Q$ denotes some stochastic objective function\footnote{We shall
  often omit the dependence on $\omega$.}
$Q: \Omega \times \Theta \to \mathbb{R}$,
$(\omega, \theta) \mapsto Q(\theta)(\omega)$ that depends on
parameters $\theta$ taking values in some space
$\Theta$.

The classical method how to solve generic optimization problems for
some differentiable objective function $f$ (not necessarily of the
expected value form as in \eqref{eq:optimization}) is to apply a
gradient descent algorithm: starting with an initial guess
$\theta^{(0)}$, one iteratively defines
\begin{align}\label{eq:update0}
  \theta^{(k+1)}=\theta^{(k)}- \eta_k \nabla f(\theta^{(k)})
\end{align}
for some learning rate $\eta_k$.  Under suitable assumptions,
$\theta^{(k)}$ converges for $k \to \infty$ to a local minimum of the
function $f$.

In the deep learning context, stochastic gradient
  descent methods, going back to stochastic approximation algorithms
proposed by \cite{RM:51},
are much more efficient. To apply this,
it is crucial that the objective function $f$ is linear in the
sampling probabilities. In other words, $f$ needs to be of the
expected value form as in \eqref{eq:optimization}.  In the simplest
form of stochastic gradient descent, under the assumption that
\[
  \nabla f(\theta) = \mathbb{E}[\nabla Q(\theta)],
\]
the true gradient of $f$ is approximated by a gradient at a single
sample $Q(\theta)(\omega)$ which reduces the computational cost
considerably. In the updating step for the parameters $\theta$ as in
\eqref{eq:update0}, $f$ is then replaced by $ Q (\theta) (\omega_k)$,
hence
\begin{equation}\label{eq:update}
  \theta^{(k+1)}=\theta^{(k)}- \eta_k \nabla Q(\theta^{(k)})(\omega_k).
\end{equation}

The algorithm passes through all samples $\omega_k$ of the so-called
training data set, possibly several times (specified by the number of
epochs), and performs the update until an approximate minimum is~reached.

A compromise between computing the true gradient of $f$ and the
gradient at a single sample $Q(\theta)(\omega)$ is to compute the
gradient of a subsample of size $N_{\text{batch}}$, called
(mini)-batch, so that $Q(\theta^{(k)})(\omega_k)$ used in the update
\eqref{eq:update} is replaced by
\begin{align}\label{eq:stoch_grad}
  Q^{(k)}(\theta)= \frac{1}{N_{\text{batch}}} \sum_{n=1}^{N_{\text{batch}}} Q( \theta) (\omega_{n+kN_{\text{batch}}}), \quad k \in \{0,1,...,\lfloor N/N_{\text{batch}} \rfloor-1\},
\end{align}
where $N$ is the size of the whole training data set.  Any other
unbiased estimators of $\nabla f (\theta) $ can of course also be
applied in \eqref{eq:update}.

\section{Alternative Approaches for Minimizing the Calibration Functional}\label{alt}

We consider here alternative algorithms for minimizing
\eqref{eq:calibration}.

\subsection{Stochastic Compositional Gradient Descent}
One alternative is stochastic compositional gradient descent as
developed e.g.~in~\mbox{\cite{WFH:17}}. Applied to our problem this algorithm
(in its simplest form) works as follows: starting with an initial
guess $\theta^{(0)}$, and $y_j^{(0)}$, $j=1, \ldots, J$ one
iteratively defines
\begin{align*}
  y_j^{(k+1)}    & =(1-\beta_k)y_j^{(k)}+ \beta_k Q_j(\theta^{(k)})(\omega_k)      \quad j=1, \ldots, J,          \\
  \theta^{(k+1)} & = \theta^{(k)}- \eta_k  \sum_{j=1}^J w_j \ell'(y_j^{(k+1)}) \nabla Q_j(\theta^{(k)})(\omega_k)
\end{align*}
for some learning rates $\beta_k,\eta_k \in (0,1]$. Please note that
$y^{(k)}$ is an auxiliary variable to track the quantity
$\mathbb{E}[Q(\theta^{(k)})]$ which has to be plugged in $\ell'$
(other faster converging estimates have also been developed). Of
course $\nabla Q_j(\theta^{(k)})(\omega_k)$ can also be replaced by
other unbiased estimates of the gradient, e.g.~the~gradient of the
(mini)-batches as in \eqref{eq:stoch_grad}. For convergence results in
the case when $ \theta \mapsto \ell(\mathbb{E}[Q_j(\theta)])$ is
convex we refer to \cite[Theorem 5]{WFH:17}. Of course, the same
algorithm can be applied when we replace $Q_j(\theta)$ in
\eqref{eq:calibration} with $X_j(\theta)$ as defined in \eqref{eq:X}
for the variance reduced case.

\subsection{Estimators Compatible with Stochastic Gradient Descent}
Our goal here is to apply at least in special cases of the nonlinear
function $\ell$ (variant \eqref{eq:stoch_grad} of) stochastic gradient
descent to the calibration functional~\eqref{eq:calibration}. This
means that we must cast ~\eqref{eq:calibration} into expected value
form.  We focus on the case when $\ell(x)$ is given by $\ell(x)=x^2$
and write $f(\theta)$ as
\[
  f(\theta)= \sum_{j=1}^{J} w_{j} \EW{ Q_{j}(\theta)
    \widetilde{Q}_j(\theta)}
\]
for some independent copy $\widetilde{Q}_j(\theta)$ of $Q_j(\theta)$,
which is clearly of the expected value form required in~\eqref{eq:optimization}.  A Monte Carlo estimator of $f(\theta)$ is
then constructed by
\begin{align*}
  \widehat{f}(\theta)= \frac{1}{N} \sum_{n=1}^N  \sum_{j=1}^J   w_{j}Q_{j}(\theta)(\omega_n)\widetilde{Q}_{j}(\theta)(\omega_{n}).
\end{align*}
for independent draws $\omega_1, \ldots, \omega_{N}$ (the same $N$
samples can be used for each strike $K_j$). Equivalently~we~have
\begin{align}\label{eq:estimatorQ}
  \widehat{f}(\theta)= \frac{1}{N} \sum_{n=1}^N \sum_{j=1}^J   w_{j}Q_{j}(\theta)(\omega_n)Q_{j}(\theta)(\omega_{n+m}).
\end{align}
for independent draws $\omega_1, \ldots, \omega_{2N}$.  The analog of
\eqref{eq:stoch_grad} is then given by
\begin{align*}
  Q^{(k)}(\theta) & = \frac{1}{N_{\text{batch}}} \sum_{l=1}^{N_{\text{batch}}}
  \sum_{j=1}^J   w_{j}Q_{j}(\theta)(\omega_{l+ 2kN_{\text{batch}}} )Q_{j}(\theta)(\omega_{l+(2k+1)N_{\text{batch}}})
\end{align*}
for $k \in \{0,1,...,\lfloor N/N_{\text{batch}} \rfloor-1\}$.

Clearly we can now modify and improve the estimator by using again
hedge control variates and replace $Q_j(\theta)$ by $X_j(\theta)$ as
defined in \eqref{eq:X}.

\section{Algorithms}\label{app:algo}

In this section, we present the calibration algorithm  discussed above in form of pseudo code given in Algorithm \ref{alg1}. Update rules for parameters in Algorithm \ref{alg1} are provided in Algorithm \ref{spec: SGD}.
We further provide an implementation in form of a github repository, see \url{https://github.com/wahido/neural_locVol}.
\begin{Algorithm}\label{alg1}

  In the subsequent pseudo code, the index $i$ stands for the
  maturities, $N$ for the number of samples used in the variance
  reduced Monte Carlo estimator as of \eqref{eq:calfin} and $k$ for
  the updating step in the gradient descent:

  \vspace{1em}
  \begin{adjustwidth}{2em}{1em}
    \begin{mdframed}[skipabove=11cm,
        backgroundcolor=white,
        linecolor=white,
        innerleftmargin=0.3cm,
        innertopmargin = .5cm,
      ]
      \begin{lstlisting}
# Initialize the network parameters
initialize  $\theta_1,\ldots, \theta_4$
# Define initial number of trajectories and initial step
N, k = 400, 1
# The time discretization for the MC simulations and the
# abort criterion
$\Delta_t$, tol = 0.01, 0.0045

for i = 1,...,4:
  nextslice = False
  # Compute the initial normalized vega weights for this slice:
  $w_j = \tilde w_j/\sum_{l=1}^{20} \tilde w_l$ with  $
  \tilde w_j = 1/v_{ij}$, where $v_{ij}$ $\textcolor{black}{\text{is} }$ the Black-Scholes
  vega $\textcolor{black}{\text{for} }$ strike $K_{ij}$, the corresponding synthetic market implied
  volatility $\textcolor{black}{\text{and} }$ the maturity $T_i$.

while nextslice == False:
  do:
    Simulate $N$ trajectories of the SABR-LSV process up
    to time $T_i$, compute the payoffs.
  do:
    Compute the stochastic integral of the Black-Scholes
    Delta hedge against these trajectories as of $\eqref{eq: BS hedge formula}$
    $\textcolor{black}{\text{for} }$  maturity $T_i$
  do:
    Compute the calibration functional as of $\eqref{eq:calfin}$
    with $\ell(x)=x^2$ $\textcolor{black}{\text{and} }$ weights $w_j$ $\text{ with the modification that we use put}$
    $\text{options instead of call options for strikes larger than the spot.}$
  do:
    Make an optimization step $\textcolor{black}{\text{from} }$ $\theta^{(k-1)}_i$ to $\theta^{(k)}_i$, similarly
    as $\textcolor{black}{\text{in} }$ $\eqref{eq:updateimpl}$ but with the more sophisticated ADAM-
    optimizer with learning rate $10^{-3}$.
  do:
    Update the parameter N, the condition nextslice $\textcolor{black}{\text{and} }$
    compute model prices according to Algorithm $\ref{spec: SGD}$.
  do:
    $k = k+1$
     
    \end{lstlisting}
    \end{mdframed}
  \end{adjustwidth}
\end{Algorithm}

\vspace{1em}

\begin{Algorithm}\label{spec: SGD}
We update the parameters in Algorithm \ref{alg1} according to the following rules:

  \vspace{1em}
 \begin{adjustwidth}{2em}{1em}
    \begin{mdframed}[skipabove=11cm,
        backgroundcolor=white,
        linecolor=white,
        innerleftmargin=0.3cm,
        innertopmargin = .5cm]
      \begin{lstlisting}[gobble=6]
      if k == 500:
        N = 2000
      else if k == 1500:
        N = 10000
      else if k == 4000:
        N = 50000
       
      if k >= 5000 and k mod 1000 == 0:
        do:
          Compute model prices $\pi_\text{model}$ $\textcolor{black}{\text{for slice} }$ $i$ via MC simulation
          using $10^7$ trajectories. Apply the Black-Scholes Delta
          hedge $\textcolor{black}{\text{for} }$ variance reduction.
        do:
          Compute implied volatilities iv$_\text{model}$ $\textcolor{black}{\text{from} }$ the model prices $\pi_\text{model}$.
        do:
          Compute the maximum error of model implied volatilities
          against synthetic market implied volatilities:
          err_cali = $||$ iv_model - iv_market $||_\text{max}$
          if err_cali $\le$ tol or k == 12000:
            nextslice = True
          else:
            Apply the adversarial part: Adjust the weights $w_j$
            according to:

            for j = 1,$\ldots$,20:
              $w_j$ = $w_j$ + $|$ iv_model$_j$ - iv_market$_j$ $|$
            This puts higher weights on the options where the fit
            can still be improved
            Normalize the weights:
            for j = 1,$\ldots$,20:
              $w_j$ = $w_j$ / $\sum_{\ell=1}^{20} w_\ell$
      \end{lstlisting}
    \end{mdframed}
  \end{adjustwidth}
\end{Algorithm}

\end{document}